\documentclass[10pt]{amsart}
\usepackage{amsfonts,amssymb}
\RequirePackage{ifpdf}
\usepackage{amsmath}
\usepackage{color}
\usepackage{pstcol,pst-node}
\usepackage{graphics}
\usepackage{epic}
\usepackage{curves}

\def\appendix#1{
\addtocounter{section}{1} \setcounter{equation}{0}
\renewcommand{\thesection}{\Alph{section}}
\section*{Appendix \thesection\protect\indent\quad
#1}
}
\renewcommand{\theequation}{\thesection.\arabic{equation}}


\catcode`\@=11
\def\marginnote#1{}

\newcount\hour
\newcount\minute
\newtoks\amorpm
\hour=\time\divide\hour by60 \minute=\time{\multiply\hour by60
\global\advance\minute by-\hour}
\edef\standardtime{{\ifnum\hour<12 \global\amorpm={am}%
        \else\global\amorpm={pm}\advance\hour by-12 \fi
        \ifnum\hour=0 \hour=12 \fi
        \number\hour:\ifnum\minute<10 0\fi\number\minute\the\amorpm}}
\edef\militarytime{\number\hour:\ifnum\minute<100\fi\number\minute}




%
%
\def\draftlabel#1{{\@bsphack\if@filesw {\let\thepage\relax
      \xdef\@gtempa{\write\@auxout{\string
          \newlabel{#1}{{\@currentlabel}{\thepage}}}}}\@gtempa \if@nobreak
    \ifvmode\nobreak\fi\fi\fi\@esphack} \gdef\@eqnlabel{#1}}
    \def\@eqnlabel{}
\def\@vacuum{}
\def\draftmarginnote#1{\marginpar{\raggedright\scriptsize\tt#1}}

\def\draft{
%
%
  \oddsidemargin -.5truein
  \def\@oddfoot{\footnotesize \sl preliminary draft \hfil
    \rm\thepage\hfil\sl\today\quad\militarytime}
  \let\@evenfoot\@oddfoot \overfullrule 3pt
    \let\label=\draftlabel
    \let\marginnote=\draftmarginnote
  \def\@eqnnum{(\theequation)\rlap{\kern\marginparsep\tt\@eqnlabel}%
    \global\let\@eqnlabel\@vacuum}

  }


\newcommand{\tr}{\,{\rm Tr}\,}
\newcommand{\ID}{1\!\!1}

\def\be{\begin{equation}}
\def\ee{\end{equation}}
\def\bea{\begin{eqnarray}}
\def\eea{\end{eqnarray}}
\def\<{\langle}
\def\>{\rangle}

\def\nn{\nonumber}

\let\wtd=\widetilde
\def\Tr{{\rm Tr}}
\def\one#1{#1^{\raise5pt\hbox{$\scriptstyle\!\!\!\!1$}}\,{}}
\def\two#1{#1^{\raise5pt\hbox{$\scriptstyle\!\!\!\!2$}}\,{}}
\def\onetwo#1{#1^{\raise5pt\hbox{$\scriptstyle\!\!\!\!\!{12}$}}\,{}}
\def\otim{\mathop{\otimes}}
\def\tr{\mathop{\rm{tr}}}

\newtheorem{theorem}{Theorem}[section]
\newtheorem{lm}[theorem]{Lemma}
\newtheorem{prop}[theorem]{Proposition}
\theoremstyle{definition}
\newtheorem{df}[theorem]{Notation}

\newtheorem{remark}[theorem]{Remark}

\newtheorem{cor}[theorem]{Corollary}
\newtheorem{conjecture}[theorem]{Conjecture}

\allowdisplaybreaks

\begin{document}

\title[Poisson algebras  of b.u.t. bilinear forms and braid group]
{Poisson algebras of block-upper-triangular bilinear forms and braid group action}
\author{Leonid Chekhov$^{\ast,\star}$}\thanks{$^{\ast}$Steklov Mathematical Institute and  Laboratoire Poncelet,
Moscow, Russia}\thanks{$^{\star}$Concordia University, Montr\'eal, Canada. Email: chekhov@mi.ras.ru.}
\author{Marta Mazzocco$^\dagger$}\thanks{$^\dagger$School of Mathematica, Loughborough University, UK}

\maketitle

\begin{abstract}
In this paper we study a quadratic Poisson algebra structure on the
space of bilinear forms on $\mathbb C^{N}$ with the property that
for any $n,m\in\mathbb N$ such that $n m =N$, the restriction of the
Poisson algebra to the space of bilinear forms with block-upper-triangular
matrix composed from blocks of size $m\times m$ is Poisson.
We classify all central elements and characterise the Lie algebroid structure compatible with
the Poisson algebra.
We integrate this algebroid obtaining the corresponding  groupoid of morphisms of  block-upper-triangular
bilinear forms. The groupoid elements automatically  preserve the Poisson algebra. We then
obtain the braid group action on the Poisson algebra as elementary generators within the
groupoid. We discuss the affinisation and
quantisation of this Poisson algebra, showing that in the case  $m=1$ the quantum affine
algebra is the twisted $q$-Yangian for ${\mathfrak o}_n$ and for
$m=2$ is the  twisted $q$-Yangian for ${\mathfrak sp}_{2n}$. We
describe the quantum braid group action in these two
examples and conjecture the form of this action for any $m>2$.
\end{abstract}


\section{Introduction}

In this paper we consider bilinear forms on $\mathbb C^N$ defined by
$$
\langle x,y\rangle:= x^T A y,\qquad \forall\, x,\,y\in\mathbb C^N,\qquad A\in GL_N(\mathbb C).
$$
By {\it block-upper-triangular bilinear form}\/ we mean a bilinear
form such that the defining matrix $A$ is block--upper--triangular. In particular we use the following:

\begin{df}\label{def-block}
We let a block-upper-triangular (b.u.t.) matrix ${\mathbb A}$ to be an $(nm)\times (nm)$-matrix
composed from blocks $\mathbb A_{I,J}$, $I,J=1,\dots,n$, of size
$m\times m$ with the block-upper-triangular structure: we impose the
restrictions that $\mathbb A_{I,J}=0$ for $I>J$ and $\det\mathbb A_{I,I}=1$ for all $I=1,\dots,n$.
We denote by $\mathcal A_{n,m}\subset GL_{nm}$ the set  of all such block-upper-triangular
matrices.
\end{df}

In \cite{ChM} it was proved that for any number of blocks $n$ and
for any size of blocks $m$ the following brackets on the matrix
elements $a_{i,j}$ of $\mathbb A$:
\bea
\label{Poisson}
\{a_{i,j},a_{k,l}\}&=&\bigl({\rm sign}(j-l)+{\rm sign}(i-k)\bigr)a_{i,l}a_{k,j}+\\
&&
+\bigl({\rm sign}(j-k)+1\bigr)a_{j,l}a_{i,k}
+\bigl({\rm sign}(i-l)-1\bigr)a_{l,j}a_{k,i}\nn
\eea
define a Poisson bracket on $\mathcal A_{n,m}$.

Note that the brackets (\ref{Poisson}) depend neither on the size of
the $m\times m$ blocks nor on the number $n^2$ of blocks, so that
the  full space $GL_{N}(\mathbb C)$ of non-singular $N\times N$ matrices, $N=nm$, admits this Poisson algebra
(\ref{Poisson}). In Theorem \ref{th:Poisson} we show that the block-upper-triangular case is a Poisson
reduction of the full algebra in $(GL_{N}(\mathbb C),\{\cdot,\cdot\})$.

In the case of one-dimensional blocks (i.e. upper triangular
matrices with $1$ on the diagonal) this algebra reduces to the
Dubrovin--Ugaglia \cite{dubrovin,uga} bracket appearing in Frobenius
manifold theory and extensively studied by Bondal
\cite{Bondal,Bon1}. Its quantisation is also known as Nelson--Regge
algebra in $2+1$-dimensional quantum gravity \cite{NR,NRZ},  and as
Fock--Rosly bracket \cite{FR} in Chern--Simons theory. We expect
that for generic $m$ this algebra may have some interesting meaning
in these fields.

The affine version of the algebra (\ref{Poisson}) is defined in terms of the generating function:
\be
{\mathcal G}_{i,j}(\lambda):= G^{(0)}_{i,j}+\sum_{p=1}^\infty G_{i,j}^{(p)}\lambda^{-p},
\label{G-def}
\ee
where $G_{i,j}^{(p)}$ denotes the entry $i,j$ of the matrix
$G^{(p)}$ and we impose that $G^{(0)}:={\mathbb A}$, our
block-upper-triangular matrix, allowing  the matrices
$G^{(p)}$ to be arbitrary full-size matrices for $p>0$. The Poisson brackets
are postulated to be~\cite{ChM}
\begin{eqnarray}
\left\{{\mathcal G}_{i,j}(\lambda),{\mathcal G}_{k,l}(\mu)\right\}&=&
\left({\rm sign}(i-k)-\frac{\lambda+\mu}{\lambda-\mu}\right){\mathcal G}_{k,j}(\lambda) {\mathcal G}_{i,l}(\mu)+
\nn\\
&&
+\left({\rm sign}(j-l)+\frac{\lambda+\mu}{\lambda-\mu}\right) {\mathcal G}_{k,j}(\mu) {\mathcal G}_{i,l}(\lambda)+
\nn\\
&&
+\left({\rm sign}(j-k)-\frac{1+\lambda\mu}{1-\lambda\mu}\right)
{\mathcal G}_{i,k}(\lambda){\mathcal G}_{j,l}(\mu)+\nn\\
&&+
\left({\rm sign}(i-l)+\frac{1+\lambda\mu}{1-\lambda\mu}  \right){\mathcal G}_{l,j}(\lambda){\mathcal G}_{k,i}(\mu).
\label{eq:Yangian}
\end{eqnarray}
We call the index $p$ the level of the corresponding element; elements of ${\mathbb A}$ are then called
{\it zero--level elements.}\/ Analogously to the case of (\ref{Poisson}),
the algebra (\ref{eq:Yangian}) is Poisson {\it for any choice of the zero level $\mathbb A\in \mathcal A_{n,m}$,}\/
for any $n,m$ such that $nm=N$.

In our paper~\cite{ChM} we related this affine extension
(\ref{eq:Yangian}) in the case $m=1$  to the algebra ${\mathfrak D}_n$
of geodesic functions on an annulus with $n$ ${\mathbb Z}_2$
orbifold points and, simultaneously, to the algebra of monodromy
data of a  $n+1$-dimensional Frobenius manifold with one
non-semi-simple point.  Still in the case $m=1$ this affine
extension (\ref{eq:Yangian}) turns out to be the semi-classical
limit of the twisted $q$-Yangian for the ${\mathfrak o}_n$ algebra
introduced by Molev, Ragoucy, and Sorba~\cite{MRS}. A first
generalisation of the above algebras to block-upper-triangular
matrix case was constructed by Molev and Ragoucy in~\cite{MR}, where
they developed the twisted Yangian $Y'_q(\mathfrak{sp}_{2n})$ for the
${\mathfrak sp}_{2n}$ algebra. In this construction, the zero-level
algebra was block-lower-triangular (equivalent to
block-upper-triangular  by simple transposition) with $2\times 2$
blocks and with the restriction that each diagonal $2\times 2$-block
have the unit determinant. In the work by Molev and Ragoucy a full
description of the braid group action on $Y'_q(\mathfrak{sp}_{2n})$
was still missing and this was in part the trigger to the present
work.\footnote{We are particularly  grateful to Alexander Molev for
asking this question to us.}

Before explaining our results on the braid group action we need to
illustrate the ones on the central elements. We characterise all
central elements of the Poisson algebra (\ref{Poisson}) and of its
affine extension (\ref{eq:Yangian}). They are of two types:
polynomial central elements and rational central elements; together
they form a set of $n\bigl[\frac{m+1}{2}\bigr]+\bigl[\frac{nm}{2}\bigr]$ algebraically independent central
elements (here we let $[\cdot]$ denote the integer part of a number).

In Theorem \ref{th:central}, we prove that the {\it polynomial
central elements} for the  Poisson algebra (\ref{Poisson}) are given
by the coefficients of $\lambda^{-k}$, $k=0,1,\dots, \left[\frac{N+2}{2}\right]$, of the polynomial
$$
\det(\mathbb A +\lambda^{-1}\mathbb A^T),
$$
while for the affine Poisson algebra (\ref{eq:Yangian}) they are generated by the formal series
$$
\det\left(\mathcal G(\lambda)\right).
$$
The {\it rational central elements} are the same for both Poisson
algebras (\ref{Poisson}) and (\ref{eq:Yangian}). They are defined by
the {\it bottom--left minors} of the diagonal blocks of the zero
level matrix $\mathbb A$, i.e. let $\mathbb A\in\mathcal A_{n,m}$,
for each diagonal block $\mathbb A^{(I)}:=\mathbb A_{I,I}$,
$I=1,\dots,n$ take
$$
M^{(I)}_d:=\det\left(\begin{array}{ccc}
a^{(I)}_{m-d+1,1}&\dots&a^{(I)}_{m+d-1,d}\\
\vdots&\dots&\vdots\\
a^{(I)}_{m,1}&\dots&a^{(I)}_{m,d}\end{array}
\right),
$$
where $a^{(I)}_{i,j}$ denotes the $i,j$-th entry of  $\mathbb A^{(I)}:=\mathbb A_{I,I}$,
then in Theorem \ref{th:new-central} we prove that for every $d=0,\dots,\left[\frac{m}{2}\right]$ and $I=1,\dots,n$ the quantities
$$
b^{(I)}_d:= \frac{M^{(I)}_d}{M^{(I)}_{m-d}}
$$
are central elements for both Poisson algebras (\ref{Poisson}) and (\ref{eq:Yangian}).

Having characterised all central elements, we are ready to produce
the braid group action on $\mathcal A_{n,m}$. For this,  we follow
Bondal's approach  \cite{Bondal,Bon1} which consists in introducing
a suitable notion of groupoid of block-upper-triangular quadratic
bilinear forms in such a way that the Poisson bracket on the base
space $\mathcal A_{n,1}$ is given in terms of the anchor map
associated to the corresponding Lie algebroid. In Bondal's case,
namely when $m=1$,  the Lie algebroid is isomorphic to the Lie
algebroid on the cotangent bundle $T^\ast\mathcal A_{n,1}$. As soon
as $m>2$ this ceases to be true, making the integration of the Lie
algebroid rather tricky. We solve this problem in Section
\ref{se:algebroid} where we characterise this groupoid.\footnote{This groupoid
is the natural phase space of the Poisson sigma model with
target space $\mathcal A_{n,m}$ \cite{CF}.
In this case we expect to be able to integrate the constraint equation explicitly.}

The braid group generators are obtained as those elements in the
groupoid which swap the blocks and satisfy the braid group
relations.
To be more precise, the braid group acting on  $\mathcal A_{n,m}$ is $B_n$ in which each braid $\beta_{I,I+1}$, $I=1,\dots,n-1$
acts by changes of coordinates
on $\mathbb C^N$. This action can be presented in the adjoint matrix form (see formula (\ref{beta}) below)
$\beta_{I,I+1}[{\mathbb A}]=B_{I,I+1}{\mathbb A}B_{I,I+1}^T$
with the matrix $B_{I,I+1}$ having the block form (\ref{BII+1}).

The extended braid group transformations for the affine algebra
(\ref{eq:Yangian}) in the case where the zero-level  matrix
${\mathbb A}$ has the block-upper-triangular form is given by the
same $\beta_{I,I+1}[\mathcal G(\lambda)]=B_{I,I+1}\mathcal G(\lambda)B_{I,I+1}^T$
and we have one extra generator $\beta_{n,1}$
given by the formulas (\ref{R-n1-A-infty}) and (\ref{eq:bn1}).

Since the braid group elements belong to the groupoid, they preserve our algebras  (\ref{Poisson}) and (\ref{eq:Yangian}).

Finally we provide a quantisation of the affine algebra
(\ref{eq:Yangian}) in terms of quantum reflection equation  for any
$m$ and give formulae for the quantum braid group action in the case
$m=1$ and $m=2$. This leads to an interesting explicit relation between the
Lie algebroid of infinitesimal morphism of the b.u.t. algebra (\ref{Poisson}) and its $R$-matrix structure.

\vskip 2mm \noindent{\bf Acknowledgements.}  The authors are
specially grateful to Alexei Bondal, Alexander Molev, Stefan Kolb and Kirill Mackenzie for many
enlighting conversations.

The work of L.Ch. was supported in part by the Ministry of Education and Science of the Russian Federation (contract 02.740.11.0608),
by the Russian Foundation for Basic Research (Grant Nos. 10-01-92104-JF$\_$a, 11-01-00440-a), by the Grant of Supporting Leading
Scientific Schools of the Russian Federation NSh-8265.2010.1, and by the Program Mathematical Methods for Nonlinear Dynamics.

The work of M. Mazzocco was supported  by the EPSRC Advanced Research Fellowship
ARF EP/D071895/1.

\section{Poisson reductions of the algebras (\ref{Poisson})  and  (\ref{eq:Yangian})}\label{se:red}

\begin{theorem}\label{th:Poisson}
The affine algebra (\ref{eq:Yangian}) is an abstract infinite-dimensional Poisson algebra for  any choice of the block-upper-triangular
form of the zero level  matrix ${\mathbb A}\in\mathcal A_{n,m}$.
Analogously, the restriction of the brackets (\ref{Poisson}) on  $GL_{N}(\mathbb C)$ to the
block-upper-triangular matrices ${\mathbb A}_{n,m}$ for any $n,m\in\mathbb N$ such that $nm=N$, is Poisson.\end{theorem}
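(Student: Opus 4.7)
The plan is to reduce both claims to a single Poisson-reduction argument. The bracket formulas (\ref{Poisson}) and (\ref{eq:Yangian}) do not refer in themselves to any block decomposition, so I would first establish that they are Poisson structures on the \emph{unrestricted} ambient spaces ($GL_N(\mathbb{C})$ for (\ref{Poisson}) and the space of formal series $\mathcal{G}(\lambda)$ for (\ref{eq:Yangian})), and then show that the constraints cutting out $\mathcal{A}_{n,m}$ generate a Poisson ideal, so that $\mathcal{A}_{n,m}$ is a Poisson submanifold.

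\textbf{Step 1: Jacobi on the ambient spaces.} I would recast (\ref{Poisson}) as a quadratic $R$-matrix Poisson bracket of reflection-equation type, with two constant $N^2\times N^2$ matrices built from the step functions $\mathrm{sign}(i-k)\pm 1$ and their transposes. Jacobi for (\ref{Poisson}) reduces to finitely many numerical identities in three tensor slots (of classical reflection-equation / classical Yang--Baxter type), which are verified combinatorially. The affine case (\ref{eq:Yangian}) is handled by the same pattern with two rational $R$-matrices $r^{(+)}(\lambda/\mu)$ and $r^{(-)}(\lambda\mu)$ extracted from the prefactors $(\lambda+\mu)/(\lambda-\mu)$ and $(1+\lambda\mu)/(1-\lambda\mu)$, and Jacobi follows by pole analysis at $\lambda=\mu$ and $\lambda\mu=1$, each pole producing one instance of the corresponding classical reflection identity.

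\textbf{Step 2: the b.u.t.\ constraints form a Poisson ideal.} Let $\mathcal{J}\subset\mathcal{O}(GL_N)$ be the ideal generated by $\{a_{i,j}:I(i)>J(j)\}\cup\{\det\mathbb{A}_{I,I}-1:I=1,\ldots,n\}$, where $I(\cdot),J(\cdot)$ denote the block-row/column indices. Pick a lower-block generator $a_{i,j}$ with $I(i)>J(j)$, so $i>j$ strictly. A term-by-term inspection of (\ref{Poisson}) yields $\{a_{i,j},a_{k,l}\}\in\mathcal{J}$: in Term~1, the product $a_{i,l}a_{k,j}$ escapes $\mathcal{J}$ only when $I(i)\leq L(l)$ and $K(k)\leq J(j)$, which together with $I(i)>J(j)$ forces $l>j$ and $i>k$, so $\mathrm{sign}(j-l)+\mathrm{sign}(i-k)=-1+1=0$; and Terms~2 and~3 collapse by the same mechanism, since $\mathrm{sign}(j-k)+1$ vanishes as soon as $k>j$ and $\mathrm{sign}(i-l)-1$ vanishes as soon as $l<i$. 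For the determinant generators, I observe that when all four indices in (\ref{Poisson}) lie in the same diagonal block $I$ the bracket restricts to the same quadratic bracket on $GL_m$ (for which $\det$ is a standard Casimir), while brackets with entries outside block $I$ lie in $\mathcal{J}$ by the previous case applied through Leibniz.

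\textbf{Step 3: Affine case and main obstacle.} For (\ref{eq:Yangian}) only the zero level $G^{(0)}=\mathbb{A}$ carries the b.u.t.\ constraint. Extracting the $\lambda^0$ coefficient on both sides of (\ref{eq:Yangian}) replaces the rational prefactors by $\pm 1$, splits Term~1 of (\ref{Poisson}) into two pieces, and leaves a bracket of the same combinatorial shape $\{G^{(0)}_{i,j},\mathcal{G}_{k,l}(\mu)\}$; the sign analysis of Step~2 applies line-by-line (for example the factor $\mathrm{sign}(i-k)-1$ already vanishes under $K(k)\leq J(j)<I(i)$, which is precisely the regime in which $G^{(0)}_{k,j}$ escapes the ideal). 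Higher levels $G^{(p)}$, $p\geq 1$, are unconstrained. The \textbf{main obstacle} is the affine Jacobi identity in Step~1: the cross terms mixing $\lambda/\mu$- and $\lambda\mu$-poles must cancel pointwise, which forces a compatibility between $r^{(+)}$ and $r^{(-)}$ of classical reflection type. The most efficient route, which I would take, is to recognise (\ref{eq:Yangian}) as the semi-classical limit of a quantum reflection equation in the spirit of \cite{MRS,MR}, so that classical Jacobi becomes a structural consequence of the associativity of the underlying quantum product.
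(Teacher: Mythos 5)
Your Steps 2 and 3 reproduce the essential content of the paper's own proof: the paper likewise treats the ambient Jacobi identity as known (it cites the combinatorial verification in Appendix~A of \cite{ChM}, which is independent of any reduction) and reduces the theorem to checking that the constraints generate a Poisson ideal, which it does by exactly your sign analysis applied to the $\lambda\to\infty$ limit $\{a_{i,j},G^{(p)}_{k,l}\}$ of (\ref{eq:Yangian}), the finite-dimensional case being the specialisation $p=0$. Your Step 1 is a genuinely different route for the ambient Jacobi identity: the paper does later record the $R$-matrix form (\ref{eq:PR}) of the bracket and the quantum reflection equation (\ref{eq:molev}), but uses neither to prove Jacobi. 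Be aware that deriving classical Jacobi from associativity of the quantum algebra is legitimate only once one knows the reflection-equation algebra is a flat deformation (a PBW-type statement, available from \cite{MRS,MR} for $m=1,2$ but not established in this paper for general $m$), so as written that step leans on an input at least as heavy as the combinatorial check it replaces.

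The one genuine gap is your treatment of the generators $\det\mathbb{A}_{I,I}-1$. The claim that brackets of entries of the diagonal block $I$ with entries outside that block ``lie in $\mathcal{J}$ by the previous case applied through Leibniz'' is false: your previous case only covers brackets whose first argument is a strictly-lower-block entry. For $n=m=2$ one finds, modulo $\mathcal{J}$, that $\{a_{2,1},a_{1,3}\}\equiv a_{1,3}a_{2,1}$ and $\{a_{2,2},a_{1,3}\}\equiv 2a_{2,3}a_{2,1}$, neither of which lies in $\mathcal{J}$; only the cofactor-weighted combination $\{a_{1,1}a_{2,2}-a_{1,2}a_{2,1},\,a_{1,3}\}$ cancels to an element of $\mathcal{J}$. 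That cancellation is precisely what the paper proves separately (Lemma~\ref{lm:commutation} and Theorem~\ref{th:new-central}, where $\det\mathbb{A}_{I,I}=b_0^{(I)}$ appears as a Casimir of the restricted algebra), so to complete your Step~2 you need that minor-commutation computation, not Leibniz alone. In fairness, the paper's own proof of Theorem~\ref{th:Poisson} silently omits the determinant constraint as well and relies on the later result.
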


\proof
The proof of the Jacobi relations in Appendix~A of~\cite{ChM}
used only combinatorial properties and was independent on possible reductions.
So, it remains only to prove the consistency of the reductions with the affine algebra (the consistency
of the reductions of  (\ref{Poisson})  is a trivial corollary). For this,
let us calculate the bracket between elements of the zeroth and $k>0$ levels. From (\ref{eq:Yangian}), we have
(one can obtain the formula below by taking a formal limit $\lambda\to\infty$)
\bea
\{a_{i,j},G^{(p)}_{k,l}\}&=&
({\rm sign}(i-k)-1)a_{k,j}G^{(p)}_{i,l}
+({\rm sign}(j-l)+1) G^{(p)}_{k,j} a_{i,l}
\nn\\
&&
+({\rm sign}(j-k)+1)a_{i,k}G^{(p)}_{j,l}
+({\rm sign}(i-l)-1)a_{l,j}G^{(p)}_{k,i}.
\label{eq:zero-p}
\eea
The right-hand side is nonzero (due to combinations of ${\rm sign}$-factors) only for $i\leq k$ and/or $l\leq j$
and/or $k\leq j$ and/or $i\leq i$. We now use the specific form of the reduction, namely the fact that if we
impose $a_{i,j}=0$, then $a_{s,j}=0$ for all $s\geq i$ and $a_{i,t}=0$ for all $t\leq j$.
Therefore if $i\leq k$ then $a_{k,j}$ is zero and the term $({\rm sign}(i-k)-1)a_{k,j}G^{(p)}_{i,l}$ does not contribute.
Analogously, for  $k\leq j$ we have that $a_{i,k}=0$ and the term $({\rm sign}(j-k)+1)a_{i,k}G^{(p)}_{j,l}$ does not contribute.
The same happens if  $l\leq j$ and/or  $i\leq i$. This proves the consistency between our reduction and the algebra (\ref{eq:Yangian}).
\endproof

\begin{remark}\label{rem:squares}
A more general statement is true: let us consider block-upper-triangular matrices with blocks of different sizes, or in other words let us consider an arbitrary partition of $N$ (previously equal to $mn$) into
$n$ positive integers, $N=m_1+\cdots+m_n$, and let $\mathbb A_{I,J}$ be a matrix of size $m_I\times m_J$.
All the constructions of this paper, including the Poisson restriction (Theorem~\ref{th:Poisson}), central elements, and the action of the
groupoid of b.u.t. matrices can be straightforwardly generalised to this case as well except the (classical and quantum) braid-group action,
which is apparently lost in the case of different block sizes.

If we consider even more general case in which we loose the block upper triangular form,  and take the Poisson reduction depicted in Fig.~\ref{fig:Young} where
all elements below a broken line that goes as in the figure are set to be zeros, then Theorem~\ref{th:Poisson} still remains
valid, but we no longer have an algebra structure as the product
of two matrices of this form does not have the same form.
\end{remark}

\begin{figure}[tb]
{\psset{unit=0.7}
\begin{pspicture}(-2.5,-2.5)(2.5,2.5)
\psframe[linecolor=yellow, fillstyle=solid, fillcolor=yellow](-2.5,1)(2.5,2.5)
\psframe[linecolor=yellow, fillstyle=solid, fillcolor=yellow](-1.5,0)(2.5,1)
\psframe[linecolor=yellow, fillstyle=solid, fillcolor=yellow](0.5,-1.5)(2.5,0)
\psframe[linecolor=yellow, fillstyle=solid, fillcolor=yellow](2,-2.5)(2.5,-1.5)
\psframe[linecolor=red, fillstyle=solid, fillcolor=red](-2.5,1)(-2.2,1.3)
\psframe[linecolor=red, fillstyle=solid, fillcolor=red](-1.5,0)(-1.2,.3)
\psframe[linecolor=red, fillstyle=solid, fillcolor=red](.5,-1.5)(.8,-1.2)
\psframe[linecolor=red, fillstyle=solid, fillcolor=red](2,-2.5)(2.3,-2.2)
\psframe[linewidth=1pt](-2.5,-2.5)(2.5,2.5)
\pcline[linecolor=blue, linestyle=dashed, linewidth=1.5pt](-2.5,1)(-1.5,1)
\pcline[linecolor=blue, linestyle=dashed, linewidth=1.5pt](-1.5,1)(-1.5,0)
\pcline[linecolor=blue, linestyle=dashed, linewidth=1.5pt](-1.5,0)(.5,0)
\pcline[linecolor=blue, linestyle=dashed, linewidth=1.5pt](.5,0)(.5,-1.5)
\pcline[linecolor=blue, linestyle=dashed, linewidth=1.5pt](.5,-1.5)(2,-1.5)
\pcline[linecolor=blue, linestyle=dashed, linewidth=1.5pt](2,-1.5)(2,-2.5)
\psellipse[linewidth=1.5pt](-1,-1.3)(0.3,0.5)
\end{pspicture}
}
\caption{A general Poisson reduction of the algebra (\ref{Poisson}). All the items
below the dashed broken line are zeros. The pivotal elements at the corners are marked by
dark squares.}
\label{fig:Young}
\end{figure}

\subsection{Reduction to the symmetric matrices}
The Poisson structure (\ref{Poisson}) restricts also to the space $Sym_{N}$ of symmetric $N\times N$ matrices.

\begin{prop}
The restriction of the  Poisson structure (\ref{Poisson}) to the space $Sym_{N}$ of symmetric matrices is Poissonian.
\end{prop}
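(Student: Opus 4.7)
\proof[Proof proposal]
The plan is to realise the restriction as a Poisson reduction with respect to the ideal $I\subset \mathcal O(GL_N)$ generated by the antisymmetric combinations $s_{i,j}:=a_{i,j}-a_{j,i}$. It is enough to verify that $I$ is a Poisson ideal, i.e.\ that $\{s_{i,j},a_{k,l}\}\equiv 0\pmod I$ for every $i,j,k,l$; equivalently, computing modulo $I$ so that all $a$'s satisfy $a_{p,q}=a_{q,p}$, one has to check
\[
\{a_{i,j},a_{k,l}\}\big|_{\mathrm{sym}}=\{a_{j,i},a_{k,l}\}\big|_{\mathrm{sym}}.
\]
Antisymmetry of the bracket will then give the corresponding identity in the second slot for free.

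The main step is the direct computation using (\ref{Poisson}). Applying the formula to $\{a_{i,j},a_{k,l}\}$ and to $\{a_{j,i},a_{k,l}\}$ (by swapping the roles of $i$ and $j$) and using the symmetry relations $a_{p,q}=a_{q,p}$ on the reduced locus, every quadratic monomial on the right-hand side reduces to one of just two independent products,
\[
P_1:=a_{i,l}a_{k,j}=a_{l,i}a_{j,k},\qquad P_2:=a_{j,l}a_{i,k}=a_{l,j}a_{k,i}.
\]
Grouping terms, one finds that both brackets equal
\[
\bigl({\rm sign}(j-l)+{\rm sign}(i-k)\bigr)P_1+\bigl({\rm sign}(j-k)+{\rm sign}(i-l)\bigr)P_2,
\]
the cancellation of the additive $\pm 1$'s coming from the second and third terms of (\ref{Poisson}) being responsible for the symmetrisation. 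This establishes that $I$ is a Poisson ideal.

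Since the quotient of $\mathcal O(GL_N)$ by $I$ is the coordinate ring of $\mathrm{Sym}_N\cap GL_N$, the restricted bracket is automatically well-defined and satisfies Leibniz and Jacobi, inheriting them from the unrestricted bracket on $GL_N(\mathbb C)$ (whose Jacobi identity is Theorem~\ref{th:Poisson}). The only conceptual care needed is that the formulas for the restricted bracket be read on symmetric representatives; the computation above guarantees this is independent of the choice. The only mildly delicate point is the bookkeeping that collapses the six terms of the two brackets onto the two products $P_1,P_2$ with matching coefficients, but no sign/combinatorial subtlety arises beyond this.
\endproof
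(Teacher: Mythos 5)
Your proposal is correct and follows essentially the same route as the paper: both verify that the ideal generated by the antisymmetric combinations $a_{i,j}-a_{j,i}$ is a Poisson ideal by a direct computation with formula (\ref{Poisson}), the paper by grouping the difference $\{a_{i,j}-a_{j,i},a_{k,l}\}$ into terms each carrying a factor $a_{p,q}-a_{q,p}$ (plus four terms cancelling pairwise under symmetry), you by evaluating both brackets modulo the ideal and matching coefficients of $P_1,P_2$. Your reduced expression is exactly the bracket (\ref{eq:sym}) recorded in the paper.
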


\begin{proof}
Let us consider the Poisson bracket of the combination $a_{i,j}-a_{j,i}$ with any element $a_{k,l}$;
\bea
\left\{a_{i,j}-a_{j,i},a_{k,l}\right\}&=&{\rm sign}(j-l)a_{k,j}(a_{i,l}-a_{l,i}) +
{\rm sign}(i-k)a_{i,l}(a_{k,j}-a_{j,k}) +\nn\\
&&+
{\rm sign}(j-k)a_{j,l}(a_{i,k}-a_{k,i}) +
{\rm sign}(i-l)a_{k,i}(a_{l.j}-a_{j,l})+\nn\\
&&+a_{j,l} a_{i,k}-a_{l,j} a_{k,i}
-a_{i,l} a_{j,k}+a_{l,i} a_{k,j}. \nn
\eea
By imposing the condition $a_{r,s}=a_{s,r}$ for all $r,s$, the above expression is always $0$.
\end{proof}

The reduced bracket on $Sym_{N}$  reads:
\be\label{eq:sym}
\left\{a_{i,j},a_{k,l}\right\}=\left({\rm sign}(j-l)+{\rm sign}(i-k)\right)a_{i,l}a_{k,j}+
\left({\rm sign}(j-k)+{\rm sign}(i-l)\right)a_{j,l}a_{k,i}
\ee
This Poisson structure on $Sym_{N}$  was already studied by Bondal \cite{Bon1}.

\begin{remark}
Observe that on the contrary the affine algebra (\ref{eq:Yangian}) is not compatible with the restriction $\mathbb A\in Sym_{N}$.
This can be easily seen by  observing that
$\{a_{i,j}-a_{j,i},G^{(p)}_{k,l}\}\neq 0$ for  $\mathbb A\in Sym_{N}$. We do not know whether an affine extension of (\ref{eq:sym}) exists.
\end{remark}

\subsection{$k$-level reductions of the twisted Yangian extension}

\begin{df}\label{k-level-reduction}
We call the $k$-level reduction of the algebra (\ref{eq:Yangian}) the mapping
\be
\label{k-level}
{\mathcal G}(\lambda)\mapsto {\mathbb A}+\lambda^{-1}\widehat{ G}^{(1)}+\cdots+\lambda^{-k+1}\widehat{ G}^{(k-1)}
+\lambda^{-k}{\mathbb A}^T,
\ee
where
\be
\label{k-level-sym}
\widehat{ G}^{(i)}=\left[\widehat{ G}^{(k-i)}\right]^T
\ee
and $\widehat{ G}^{(i)}={ G}^{(i)}$
for $i=1,\dots, (k-1)/2$ for odd $k$ and for $i=1,\dots, k/2-1$ for even $k$ and $\widehat{ G}^{(k/2)}_{i,j}={ G}^{(k/2)}_{i,j}$
for $i\ge j$ for even $k$, whereas the other entries of $\widehat{ G}^{(i)}$ for $i\ge k/2$ are defined by the symmetry condition
(\ref{k-level-sym}).
\end{df}

\begin{theorem}\label{th:k-level}
The mapping (\ref{k-level}) defines a surjective homomorphism of the algebra (\ref{eq:Yangian})
to the corresponding algebra of the elements ${\mathbb A}_{i,j}$ and $\widehat{ G}^{(l)}_{i,j}$, $l=1,\dots,k-1$, for any $k\in {\mathbb Z}_+$.
\end{theorem}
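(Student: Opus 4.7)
The mapping (\ref{k-level}) is manifestly surjective, so the content of the theorem is that its kernel is a Poisson ideal. My plan is to unify the defining relations of the kernel --- namely $G^{(p)}_{ij}=0$ for $p>k$, $G^{(k)}_{ij}=a_{ji}$, and the palindromic identifications (\ref{k-level-sym}) --- as the single formal condition
$$
\mathcal{G}(\lambda)^{T}=\lambda^{-k}\,\mathcal{G}(\lambda^{-1}).
$$
Matching coefficients of $\lambda^{-i}$ on the two sides recovers all three families: $i=0$ forces $G^{(k)}=\mathbb{A}^{T}$, the range $1\le i\le k-1$ gives (\ref{k-level-sym}), and the vanishing of positive powers of $\lambda$ on the left imposes $G^{(p)}=0$ for $p>k$. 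Thus the image of (\ref{k-level}) is the fixed locus of the formal involution $\tau_{k}:\mathcal{G}(\lambda)\mapsto\lambda^{k}[\mathcal{G}(\lambda^{-1})]^{T}$, and it suffices to show that $\tau_{k}$ preserves the bracket (\ref{eq:Yangian}).

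The required symmetry is direct inspection. Under $(\lambda,\mu)\mapsto(\lambda^{-1},\mu^{-1})$ both rational kernels $\frac{\lambda+\mu}{\lambda-\mu}$ and $\frac{1+\lambda\mu}{1-\lambda\mu}$ flip sign, while the index relabelling $(i,j,k,l)\mapsto(j,i,l,k)$ induced by transposition swaps the first two terms of (\ref{eq:Yangian}) with each other and the last two terms with each other as unordered pairs. The pairings match: the sign flip of each kernel is exactly compensated by the swap of the ${\rm sign}(\cdot-\cdot)$ coefficient it is paired with, and commutativity of the classical multiplication absorbs the simultaneous interchange of the spectral arguments $\lambda\leftrightarrow\mu$ inside each term. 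The overall prefactor $\lambda^{k}\mu^{k}$ is a scalar and drops out of the Leibniz bracket, so $\tau_{k}$ is a Poisson involution of (\ref{eq:Yangian}) and its fixed locus inherits a Poisson structure, which is the desired homomorphism property.

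The main obstacle I anticipate is that $\tau_{k}$ formally sends a series in $\lambda^{-1}$ into one containing positive powers of $\lambda$, so it does not literally act on the completion in which (\ref{eq:Yangian}) is defined. To bypass this I would descend to coefficients, extracting from (\ref{eq:Yangian}) the explicit brackets $\{G^{(p)}_{ij},G^{(q)}_{kl}\}$ (in the spirit of (\ref{eq:zero-p})) and verifying directly that the three families generating the kernel ideal $\mathcal{I}$ Poisson-commute with arbitrary elements modulo $\mathcal{I}$ itself. Once the kernel symmetry is in hand each individual check is a routine pairing of sign factors; the delicate point is closure at the boundary levels $p=k$ and $p=k-1$, where the identification $G^{(k)}=\mathbb{A}^{T}$ forces the bracket between two top-level elements to reproduce the zero-level bracket (\ref{Poisson}) evaluated on $\mathbb{A}^{T}$ --- and the $\tau_{k}$-symmetry of (\ref{eq:Yangian}) is precisely what guarantees this compatibility.
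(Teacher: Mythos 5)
Your reduction of the theorem to the statement that the ideal $\mathcal I$ cutting out the image of (\ref{k-level}) is a Poisson ideal, and your packaging of the three families of kernel relations into the single functional equation $\mathcal G(\lambda)^T=\lambda^{-k}\mathcal G(\lambda^{-1})$, are both correct and useful. (For the record, the paper states this theorem without any proof; only the $k=1$ instance, Lemma \ref{lm:homomorphism}, is addressed, and there only by the words ``direct substitution''.) The symmetry you verify --- that the bracket (\ref{eq:Yangian}) is preserved when \emph{both} spectral parameters are inverted and \emph{both} index pairs are transposed, each rational kernel flipping sign --- is also correct.

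The gap is the inference ``$\tau_k$ is a Poisson involution, hence its fixed locus inherits a Poisson structure.'' That is false as a general principle: the fixed locus of a Poisson involution need not be a Poisson subvariety. The swap $\sigma(x,y)=(y,x)$ on $M\times M$ with the product Poisson structure is a Poisson involution whose fixed locus is the diagonal, yet the ideal generated by the elements $f\otimes 1-1\otimes f$ is not a Poisson ideal, since $\{f\otimes1-1\otimes f,\,g\otimes1\}=\{f,g\}\otimes1$. What must actually be checked is the \emph{one-sided} bracket $\{\mathcal G_{ij}(\lambda)-\lambda^{-k}\mathcal G_{ji}(\lambda^{-1}),\ \mathcal G_{kl}(\mu)\}\equiv0$ modulo $\mathcal I$, and the two-sided symmetry you verified does not control this mixed quantity. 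The statement is nevertheless true, but for a different structural reason: under inversion of a \emph{single} spectral parameter the two kernels do not flip sign but exchange with one another, $\frac{\lambda^{-1}+\mu}{\lambda^{-1}-\mu}=\frac{1+\lambda\mu}{1-\lambda\mu}$ and $\frac{1+\lambda^{-1}\mu}{1-\lambda^{-1}\mu}=\frac{\lambda+\mu}{\lambda-\mu}$, and correspondingly the four terms of $\lambda^{-k}\{\mathcal G_{ji}(\lambda^{-1}),\mathcal G_{kl}(\mu)\}$, after the substitution $\lambda^{-k}\mathcal G_{ab}(\lambda^{-1})\equiv\mathcal G_{ba}(\lambda)$ valid modulo $\mathcal I$, reproduce exactly the four terms of $\{\mathcal G_{ij}(\lambda),\mathcal G_{kl}(\mu)\}$ in permuted order (terms $1\leftrightarrow3$ and $2\leftrightarrow4$), so the difference vanishes modulo $\mathcal I$. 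That single computation is the whole proof; it replaces both the involution argument and your deferred coefficient-by-coefficient check at the boundary levels, and it also disposes of your worry about positive powers of $\lambda$, since everything is read modulo $\mathcal I$ from the start. (Minor: your $\tau_k$ should carry the factor $\lambda^{-k}$, not $\lambda^{k}$, to be consistent with your own fixed-point equation.)
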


\section{Central elements}\label{central}

In this section, we construct all the central elements of the
algebra of block-upper-triangular matrices (\ref{Poisson}) and of
its twisted Yangian extension (\ref{eq:Yangian}). In order to  find
them we first need to characterise a simple automorphism and a
simple anti-automorphism of the Poisson algebra (\ref{Poisson}).

\subsection{(Anti)automorphisms of the Poisson algebra}\label{s:antiautomorphism}

Let
$N=nm$ denote the total size of the matrix ${\mathbb A}$. Then the transformation
\be
\label{reflection}
P[{\mathbb A}]=\wtd{\mathbb A},\quad {\wtd a}_{i,j}=a_{N+1-j,N+1-i}
\ee
is an antiautomorphism of the Poisson algebra (\ref{Poisson}), that is,
\be
\label{substitution}
\bigl\{{\wtd a}_{i,j},{\wtd a}_{k,l}\bigr\}=-\Bigl.\bigl\{{ a}_{i,j},{ a}_{k,l}\bigr\}\Bigr|_{a\mapsto {\wtd a}}.
\ee

Besides it we have the scaling transformation, which obviously leaves invariant the algebra (\ref{Poisson}):
\be
\label{scaling}
a_{i,j}\mapsto e^{\phi_i+\phi_j}a_{i,j},\qquad \phi_i=\phi_{N+1-i},
\ee
where we impose the restriction on $\phi_i$ in order to ensure the transformation
(\ref{scaling}) to be consistent with the antiautomorphism (\ref{reflection}). We also impose that
$\sum_{i=Jm+1}^{Jm+m}\phi_i=0$, $J=1,\dots,n$, to ensure the preservation of the determinant condition
$\det A_{J,J}=1$ for any $J$.

\begin{remark}
Note that the fact that the scaling transformation (\ref{scaling})
is an automorphism of the algebra (\ref{Poisson}) allows to restrict
this algebra on the projective space $\mathbb P^{N^2-1}$.
This fact is relevant due to the recent interest in the vanishing locus
of quadratic Poisson algebras on projective spaces in algebraic geometry \cite{hit}.
\end{remark}

\subsection{Polynomial central elements}

In this subsection, we construct a part of central elements that can be obtained by standard methods based on
algebra symmetries as, say, in \cite{Bondal} or \cite{Molev}.

\begin{theorem}\label{th:central}
The polynomial functions of the elements of the algebra (\ref{eq:Yangian}) in the infinite-series
expansion of $\det {\mathcal G}(\mu)$ in powers of $\mu^{-1}$ are central elements of the affine algebra
(\ref{eq:Yangian}).
\end{theorem}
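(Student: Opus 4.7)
The plan is to show directly that $\{\det\mathcal G(\mu),\mathcal G_{k,l}(\lambda)\}=0$ for all indices $k,l$; once this identity holds as a formal Laurent series in $\mu^{-1}$ and $\lambda^{-1}$, the Leibniz rule and $\mathbb{C}$-linearity of the bracket imply that every coefficient of $\mu^{-k}$ in $\det\mathcal G(\mu)$ Poisson-commutes with every generator of the algebra~(\ref{eq:Yangian}).

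The first step is to expand the determinant via the Leibniz formula $\det\mathcal G(\mu)=\sum_\sigma\mathrm{sgn}(\sigma)\prod_i\mathcal G_{i,\sigma(i)}(\mu)$ and apply the Leibniz rule for the Poisson bracket, yielding
\[
\{\det\mathcal G(\mu),\mathcal G_{k,l}(\lambda)\}=\sum_{i,j}C_{i,j}(\mu)\,\{\mathcal G_{i,j}(\mu),\mathcal G_{k,l}(\lambda)\},
\]
where $C_{i,j}(\mu)$ is the $(i,j)$-cofactor of $\mathcal G(\mu)$, a polynomial in the matrix entries and thus well-defined on formal series. I then substitute for the elementary bracket using~(\ref{eq:Yangian}) with the arguments $\lambda\leftrightarrow\mu$ exchanged, producing a sum of four double sums in $i,j$, each of which splits into a ``sign'' piece and a ``rational'' piece according to the two summands multiplying each quadratic monomial in $\mathcal G$.

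The second step is to carry out these double sums using the Laplace cofactor identities $\sum_j C_{i,j}(\mu)\mathcal G_{r,j}(\mu)=\delta_{i,r}\det\mathcal G(\mu)$ and $\sum_i C_{i,j}(\mu)\mathcal G_{i,s}(\mu)=\delta_{j,s}\det\mathcal G(\mu)$. The rational pieces collapse to $\pm\frac{\mu+\lambda}{\mu-\lambda}\det\mathcal G(\mu)\,\mathcal G_{k,l}(\lambda)$ from the first two terms of~(\ref{eq:Yangian}) and to $\pm\frac{1+\lambda\mu}{1-\lambda\mu}\det\mathcal G(\mu)\,\mathcal G_{k,l}(\lambda)$ from the last two, with opposite signs within each pair and hence pairwise cancellation. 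The four ``sign'' pieces, after delta-function reduction, each carry a factor of $\mathrm{sign}(0)=0$ and therefore vanish. There is no genuine obstacle in this approach; the essential structural point being exploited is that within each term of~(\ref{eq:Yangian}) the two indices appearing in the rational coefficient are precisely the indices producing the Kronecker deltas after contraction against $C_{i,j}(\mu)$, so the rational contributions line up with exactly compensating signs while the sign contributions are suppressed on the diagonal.
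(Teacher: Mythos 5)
Your argument is correct and is essentially the paper's own proof: the paper likewise writes $\{\mathcal G_{i,j}(\lambda),\det\mathcal G(\mu)\}$ as a contraction of the elementary brackets against $\mathcal G^{-1}_{l,k}(\mu)$ and kills everything with the row/column expansion identities, which after the Kronecker deltas leaves only $\mathrm{sign}(0)=0$ terms and pairwise-cancelling rational terms. Your use of cofactors $C_{i,j}(\mu)$ in place of $\det\mathcal G(\mu)\,\mathcal G^{-1}_{j,i}(\mu)$ is a purely cosmetic variant (with the minor benefit of not needing the formal-series invertibility of $\mathcal G(\mu)$ that the paper invokes).
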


\proof
Although it follows from the more general statement by Molev and Ragoucy~\cite{MR} on the central elements of the (quantum) Yangian
algebra, we can easily verify it directly using that
$$
\{{\mathcal G}_{i,j}(\lambda),\det{\mathcal G}(\mu)\}=\sum_{k,l=1}^{nm}
\{{\mathcal G}_{i,j}(\lambda),{\mathcal G}_{k,l}(\mu)\}{\mathcal G}^{-1}_{l,k}(\mu)
$$
(the invertibility of ${\mathbb A}$ ensures the existence of the inverse matrix ${\mathcal G}^{-1}(\mu)$ at least as a formal series).
We now substitute the bracket (\ref{eq:Yangian}), and using the obvious identities
$\sum_{l=1}^{nm}{\mathcal G}_{x,l}(\mu){\mathcal G}^{-1}_{l,k}(\mu)=\delta_{x,k}$ and
$\sum_{k=1}^{nm}{\mathcal G}_{k,x}(\mu){\mathcal G}^{-1}_{l,k}(\mu)=\delta_{l,x}$ for $x=i,j$, we obtain zero.
\endproof

\begin{cor}\label{cor:central}
The coefficients in the $\lambda^{-1}$-expansion of $\det({\mathbb
A}+\lambda^{-1}{\mathbb A}^T)$ are central elements of the Poisson
algebra (\ref{Poisson}) restricted to the block-upper-triangular
matrices ${\mathbb A}\in\mathcal A_{n,m}$ for any choice of
$n,m\in\mathbb N$ such that $nm=N$. They form a family of
$\left[\frac{N}{2}\right]$ algebraically independent central
elements.
\end{cor}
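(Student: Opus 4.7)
The plan is to deduce Corollary~\ref{cor:central} from Theorem~\ref{th:central} by invoking the $k=1$ case of Theorem~\ref{th:k-level}. That reduction collapses the generating matrix to $\mathcal{G}(\lambda)\mapsto \mathbb{A}+\lambda^{-1}\mathbb{A}^T$ and is, by Theorem~\ref{th:k-level}, a surjective Poisson homomorphism from the affine algebra~(\ref{eq:Yangian}) onto the Poisson subalgebra of~(\ref{Poisson}) generated by the matrix elements of $\mathbb{A}\in\mathcal{A}_{n,m}$. Since Poisson homomorphisms send central elements to central elements, the $\lambda^{-k}$-coefficients of the image
\[
\det\!\bigl(\mathbb{A}+\lambda^{-1}\mathbb{A}^T\bigr),\qquad k=0,1,\dots,N,
\]
all lie in the centre of~(\ref{Poisson}) restricted to $\mathcal{A}_{n,m}$.

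Next, I would extract the palindromic symmetry of these coefficients. Writing $\det(\mathbb{A}+\lambda^{-1}\mathbb{A}^T)=\sum_{k=0}^N c_k\lambda^{-k}$ and using
\[
\det\!\bigl(\mathbb{A}+\lambda^{-1}\mathbb{A}^T\bigr)=\lambda^{-N}\det\!\bigl(\lambda\mathbb{A}+\mathbb{A}^T\bigr)=\lambda^{-N}\det\!\bigl(\mathbb{A}+\lambda\mathbb{A}^T\bigr),
\]
the last equality obtained by transposing inside the determinant, yields $c_k=c_{N-k}$. The block-triangular normalisation $\det\mathbb{A}_{I,I}=1$ then forces $c_0=c_N=\det\mathbb{A}=1$, so the non-trivial central elements reduce to $c_1,\dots,c_{[N/2]}$, exactly $[N/2]$ of them.

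For algebraic independence, I would invoke the invertibility of $\mathbb{A}$ to rewrite
\[
\det\!\bigl(\mathbb{A}+\lambda^{-1}\mathbb{A}^T\bigr)=\det(\mathbb{A})\,\det\!\bigl(I+\lambda^{-1}M\bigr),\qquad M:=\mathbb{A}^{-1}\mathbb{A}^T,
\]
so that $c_k=e_k(M)$ equals the $k$-th elementary symmetric function of the eigenvalues of $M$. Since $\det M=1$ and the spectrum of $M$ is invariant under $t\leftrightarrow t^{-1}$ (which is exactly the palindromy), the independence statement reduces to the classical algebraic independence of the elementary symmetric polynomials in $[N/2]$ free variables. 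To realise this independence on $\mathcal{A}_{n,m}$ it suffices to restrict to the block-diagonal subfamily $\mathbb{A}=\mathrm{diag}(A^{(1)},\dots,A^{(n)})$ with each $A^{(I)}$ a free unit-determinant $m\times m$ matrix: then $M$ is block-diagonal, its spectrum is the disjoint union of the spectra of $(A^{(I)})^{-1}(A^{(I)})^T$, and these can be moved independently in their reciprocal-pair orbits, making the Jacobian of $(c_1,\dots,c_{[N/2]})$ generically non-degenerate.

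The only slightly delicate step is the last one --- confirming that the block-diagonal subfamily of $\mathcal{A}_{n,m}$ indeed sweeps an open subset of the reciprocal-spectrum locus of $M$; this reduces block by block to the classical $n=1$ case on the unit-determinant $m\times m$ matrices, so no genuine obstacle arises. The centrality and the exact count $[N/2]$ are otherwise immediate consequences of Theorems~\ref{th:central}, \ref{th:k-level}, and the palindromic identity.
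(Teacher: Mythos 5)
Your derivation of centrality is exactly the paper's: the $k=1$ reduction $\mathcal{G}(\lambda)\mapsto\mathbb{A}+\lambda^{-1}\mathbb{A}^T$ (Lemma~\ref{lm:homomorphism}, i.e.\ Theorem~\ref{th:k-level} with $k=1$) is a surjective Poisson homomorphism, so the central series $\det\mathcal{G}(\lambda)$ of Theorem~\ref{th:central} pushes forward to central coefficients of $\det(\mathbb{A}+\lambda^{-1}\mathbb{A}^T)$. The palindromy $c_k=c_{N-k}$ together with $c_0=\prod_I\det\mathbb{A}_{I,I}=1$, giving the count $\left[\frac{N}{2}\right]$, is also exactly as in the paper. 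Up to this point the proposal is correct and essentially identical in route.

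The gap is in your algebraic-independence argument. Restricting to the block-diagonal subfamily $\mathbb{A}=\mathrm{diag}(A^{(1)},\dots,A^{(n)})$ with $\det A^{(I)}=1$, the matrix $M=\mathbb{A}^{-1}\mathbb{A}^T$ is block-diagonal and each block $(A^{(I)})^{-1}(A^{(I)})^T$ has unit determinant and reciprocal-pair spectrum, so it contributes only $\left[\frac{m}{2}\right]$ free spectral parameters. The total is $n\left[\frac{m}{2}\right]$, which equals $\left[\frac{N}{2}\right]$ only when $m$ is even. For odd $m$ and $n\ge 2$ you are short by $\left[\frac{n}{2}\right]$ parameters, so the Jacobian of $(c_1,\dots,c_{[N/2]})$ is \emph{identically} degenerate on your subfamily and no genericity argument can rescue it. The failure is starkest in the case the corollary explicitly covers, $m=1$: there each $A^{(I)}$ is the scalar $1$, your subfamily is the single point $\mathbb{A}=\ID$, $M=\ID$, and all $c_k$ are constants, yet $\left[\frac{n}{2}\right]$ independent elements must be exhibited. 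The independence for odd $m$ genuinely lives in the off-diagonal entries, which your restriction kills. The paper avoids this by going the opposite way: it restricts to the \emph{most} reduced block structure $\mathcal{A}_{N,1}\subset\mathcal{A}_{n,m}$ (strictly upper-triangular part free, ones on the diagonal) and cites Dubrovin's proof that the $\left[\frac{N}{2}\right]$ coefficients are generically independent there; independence on a subfamily then implies it on all of $\mathcal{A}_{n,m}$. Your eigenvalue computation does salvage the even-$m$ case (modulo the surjectivity-onto-reciprocal-spectra step you flag), but as written the proof does not cover odd $m$, including the classical $m=1$ case.
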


\proof
We need the statement of Theorem~\ref{th:k-level} for $k=1$:

\begin{lm}\label{lm:homomorphism}
The mapping
\be
\label{eq:homom}
{\mathcal G}(\lambda)\mapsto {\mathbb A}+\lambda^{-1}{\mathbb A}^T
\ee
defines a surjective homomorphism of the algebra (\ref{eq:Yangian}) to the algebra (\ref{Poisson}).
\end{lm}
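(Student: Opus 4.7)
The plan is to verify directly that under the substitution $\mathcal{G}(\lambda)=\mathbb{A}+\lambda^{-1}\mathbb{A}^T$ the defining brackets (\ref{eq:Yangian}) of the affine algebra collapse to the brackets (\ref{Poisson}) on $\mathcal A_{n,m}$. At the level of commutative algebras the map is transparent: the coefficient of $\lambda^0$ in $\mathcal{G}_{i,j}(\lambda)$ is $a_{i,j}$ while the coefficient of $\lambda^{-1}$ is $a_{j,i}$, so running over all $(i,j)$ we recover every entry of the block-upper-triangular matrix $\mathbb{A}$. This proves surjectivity and identifies the kernel as the ideal generated by $G^{(p)}_{i,j}$ for $p\ge 2$ together with $G^{(1)}_{i,j}-a_{j,i}$.

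For the Poisson compatibility I would expand the left-hand side of (\ref{eq:Yangian}) bilinearly. The four brackets $\{a_{i,j},a_{k,l}\}$, $\{a_{i,j},a_{l,k}\}$, $\{a_{j,i},a_{k,l}\}$, $\{a_{j,i},a_{l,k}\}$ appear as the coefficients of $\lambda^0\mu^0$, $\lambda^0\mu^{-1}$, $\lambda^{-1}\mu^0$, $\lambda^{-1}\mu^{-1}$ respectively, each given explicitly by (\ref{Poisson}). The right-hand side must therefore reduce to a polynomial in $\lambda^{-1},\mu^{-1}$ of bidegree at most $(1,1)$, meaning its apparent rational factors are forced to cancel.

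The heart of the argument is that these rational pieces telescope. Pairing the two terms proportional to $\frac{\lambda+\mu}{\lambda-\mu}$ produces
\[
\frac{\lambda+\mu}{\lambda-\mu}\bigl[\mathcal{G}_{k,j}(\mu)\mathcal{G}_{i,l}(\lambda)-\mathcal{G}_{k,j}(\lambda)\mathcal{G}_{i,l}(\mu)\bigr];
\]
inserting the ansatz and using that the entries of $\mathbb{A}$ commute, the difference in brackets collapses to $\frac{\mu-\lambda}{\lambda\mu}(a_{k,j}a_{l,i}-a_{j,k}a_{i,l})$, so the pole at $\lambda=\mu$ disappears and the contribution becomes the polynomial $-(\lambda^{-1}+\mu^{-1})(a_{k,j}a_{l,i}-a_{j,k}a_{i,l})$. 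The identical manipulation on the pair carrying $\frac{1+\lambda\mu}{1-\lambda\mu}$ yields $-(1+\lambda^{-1}\mu^{-1})(a_{l,j}a_{k,i}-a_{i,k}a_{j,l})$.

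After these simplifications the right-hand side is a genuine polynomial of the correct bidegree, and I would match its four coefficients one by one against the brackets produced by the left-hand side. In each equation the $\pm 1$ residues left over from clearing the denominators combine with the surviving ${\rm sign}$ coefficients to reproduce exactly the factors $({\rm sign}(j-k)+1)$, $({\rm sign}(i-l)-1)$ etc.\ appearing in (\ref{Poisson}). The only obstacle is bookkeeping: tracking which residue and which sign function accompany each monomial $a_{*,*}a_{*,*}$ in each of the four identities; no identity deeper than commutativity of the entries together with the telescoping above is required.
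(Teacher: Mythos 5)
Your proposal is correct and follows exactly the route the paper takes: the paper's proof consists of the single remark that one obtains the result ``by a direct substitution of expression (\ref{eq:homom}) into (\ref{eq:Yangian}) using the algebra (\ref{Poisson})'', which is precisely the computation you carry out. Your telescoping of the $\frac{\lambda+\mu}{\lambda-\mu}$ and $\frac{1+\lambda\mu}{1-\lambda\mu}$ pairs is accurate, and the resulting four coefficient identities do match (\ref{Poisson}) as you claim.
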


\proof The  proof of this Lemma is obtained by a direct substitution of
expression (\ref{eq:homom}) into (\ref{eq:Yangian}) using the algebra (\ref{Poisson}).\endproof

\noindent {\em Proof of Corollary \ref{cor:central}.} The proof  that the coefficients in the $\lambda^{-1}$-expansion of
$\det({\mathbb A}+\lambda^{-1}{\mathbb A}^T)$
are central elements of the Poisson algebra (\ref{Poisson}) follows directly from Theorem \ref{th:central}.
The fact that no more than  $\left[\frac{N+2}{2}\right]$  of them are algebraically independent follows from the simple observation that
$$
\det({\mathbb A}+\lambda^{-1}{\mathbb A}^T)=\frac{c_0\lambda^N + c_1\lambda^{N-1}+\dots+c_N }{\lambda^N},
$$
where $c_{N-k}=c_k$ for all $k=0,1,\dots, \left[\frac{N}{2}\right]$ and
$c_0=\det(\mathbb A_{11})\det(\mathbb A_{22})\cdot\dots\cdot\det(\mathbb A_{nn})=1$.

The fact that generically the coefficients $c_1,\dots,c_{ \left[\frac{N}{2}\right]}$ form a
family of $\left[\frac{N}{2}\right]$ algebraically independent central elements
was proved in \cite{dubrovin} for the most reduced case $m=1$.
\endproof

\subsection{Rational central elements}

The central elements in Corollary~\ref{cor:central}
do not  exhaust all the central elements of the algebra of entries of ${\mathbb A}$.
We also have rational central elements. To describe them we begin by considering the case of
the {\it  nonrestricted Poisson algebra} $(GL_N,\{\cdot,\cdot\})$ where $\{\cdot,\cdot\}$ is given by (\ref{Poisson}),  and make the following

\vskip 1.5mm
\noindent {\bf Generality assumption:} All the minors $M_d$ of size $d\times d$ located at the lower-left corner are non-zero.

\begin{theorem}\label{central-A}
Under the above generality assumption, the elements
$$
\det M_{N-d}/\det M_d, \quad\hbox{for}\quad d=0,\dots,\left[\frac{N-1}{2}\right],
$$
are central for the affine algebra (\ref{eq:Yangian}) and are algebraically independent in the nonrestricted case.
\end{theorem}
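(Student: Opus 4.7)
The strategy proceeds in three stages. First, I would establish a ``quasi-Casimir'' identity at the zero level: for every entry $a_{k,l}$ in $GL_N(\mathbb{C})$,
\[
\{a_{k,l},M_d\}=\Phi_{k,l}\,M_d,
\]
where the rational coefficient $\Phi_{k,l}$ is built from matrix entries of $\mathbb A$ but is \emph{independent of $d$}. Applying the identity to both $M_d$ and $M_{N-d}$ and invoking the Leibniz rule for the quotient then yields $\{a_{k,l},M_{N-d}/M_d\}=0$ directly. To derive the identity, I would apply Leibniz to the cofactor expansion of $M_d$ and use (\ref{Poisson}) on the pairwise brackets $\{a_{k,l},a_{i,j}\}$ for $i\geq N-d+1$, $j\leq d$. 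The crucial simplification is that the lower-left-corner constraint collapses the sign factors of (\ref{Poisson}): the term $({\rm sign}(j-k)+1)a_{j,l}a_{i,k}$ contributes only when $k\leq d$, the term $({\rm sign}(i-l)-1)a_{l,j}a_{k,i}$ only when $l\geq N-d+1$, and for generic ``bulk'' $(k,l)$ only the first term survives, which is immediately identified as $\Phi_{k,l}\cdot M_d$ via a standard determinantal expansion.

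Second, I would promote this centrality to the full affine algebra (\ref{eq:Yangian}). Since the $M_d$ involve only zero-level entries, one needs to verify Poisson commutation with the higher-level generators $G^{(p)}_{k,l}$ via the mixed bracket (\ref{eq:zero-p}). The sign structure of (\ref{eq:zero-p}) is the precise analogue of that of (\ref{Poisson}), and the same collapse of sign factors occurs in the lower-left-corner regime, yielding
\[
\{G^{(p)}_{k,l},M_d\}=\Psi^{(p)}_{k,l}\,M_d
\]
with $\Psi^{(p)}_{k,l}$ independent of $d$, so the ratios remain central for the whole affine algebra.

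Third, for algebraic independence in the nonrestricted case, I would imitate the Dubrovin argument used in \cite{dubrovin} for $m=1$: evaluate the ratios at a generic $\mathbb A\in GL_N$ (e.g.\ a generic diagonalisable matrix) and compute the Jacobian of the map $\mathbb A\mapsto(M_{N-d}/M_d)_{d=0,\dots,\lfloor(N-1)/2\rfloor}$ against a judiciously chosen system of local coordinates on $GL_N$; the resulting determinant reduces to a Vandermonde-type expression and is nonzero on an open dense subset.

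The principal technical obstacle lies in the boundary cases of Stage~1: when $k\leq d$ or $l\geq N-d+1$, two or more terms of (\ref{Poisson}) survive simultaneously, and one must show that the secondary and tertiary contributions reassemble—via cofactor identities for $M_d$—into the \emph{same} multiple $\Phi_{k,l}\,M_d$ produced by the bulk case. This bookkeeping, and the analogous one for (\ref{eq:zero-p}) in Stage~2, is where the computation genuinely has to be performed; Stage~3 is routine by comparison.
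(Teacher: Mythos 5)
Your overall architecture is the same as the paper's: prove a quasi-Casimir identity $\{\det M_d,\,G^{(p)}_{k,l}\}=(\hbox{multiplier})\cdot\det M_d$ for all levels $p\ge 0$ at once, then finish with the Leibniz rule applied to the quotient. However, your key intermediate claim --- that the multiplier $\Phi_{k,l}$ is \emph{independent of $d$} --- is false, and your Stage 1 conclusion rests on it. A quick sanity check: $M_0=1$ lies in the allowed range, so $d$-independence would force $\Phi_{k,l}\equiv 0$ and hence every single $\det M_d$ would be a Casimir; that is not the case, and it would also make the theorem's restriction to the particular ratios $\det M_{N-d}/\det M_d$ pointless, since every ratio $\det M_d/\det M_{d'}$ would then be central. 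The paper's Lemma \ref{lm:commutation} computes the multiplier explicitly: $\{\det M_d,G^{(p)}_{k,l}\}=c^d_{k,l}G^{(p)}_{k,l}\det M_d$ with $c^d_{k,l}=-\delta_{k+d>N}+\delta_{d+1>l}+\delta_{d+1>k}-\delta_{l+d>N}$, which manifestly depends on $d$ (for $N=4$ one finds $c^1_{2,2}=0$ but $c^2_{2,2}=2$). What rescues the argument is the weaker symmetry $c^d_{k,l}=c^{N-d}_{k,l}$, an immediate consequence of $\delta_{i<0}=1-\delta_{i+1>0}$: the minors $\det M_d$ and $\det M_{N-d}$ have the \emph{same} multiplier, and that is exactly what the quotient rule requires. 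Replace ``independent of $d$'' by ``invariant under $d\mapsto N-d$'' and your Stages 1--2 become the paper's proof, including the boundary bookkeeping you correctly flag as the real work: the paper disposes of it by observing that most of the determinants produced by the Leibniz expansion acquire two proportional rows (or columns) and vanish, leaving only the diagonal contributions that assemble into the four $\delta$-terms above.

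On Stage 3 the paper takes a different and cheaper route than your Jacobian/Vandermonde plan: independence of the $b_l=\det M_{N-l}/\det M_l$ among themselves is quoted from Bondal, and joint independence from the polynomial Casimirs is obtained in Lemma \ref{lemma-alg-ind} by exploiting the scaling automorphism (\ref{scaling}) with $\phi^{(i)}_l=s_i(\delta_{l,i+1}-\delta_{l,i})$, which rescales a single $b_i\mapsto b_ie^{2s_i}$ while fixing all the other central elements, so no determinant ever needs to be evaluated. Your Vandermonde computation is plausible but entirely unsubstantiated as written; the automorphism trick is shorter and already part of the paper's toolkit.
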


\proof
The proof is based on the following:

\begin{lm}\label{lm:commutation}
For the nonrestricted $N\times N$ matrix ${\mathbb A}$ in our genericity assumption,
denoting $a_{k,l}=G^{(0)}_{k,l}$,
we have the following commutation relations:
\be\label{Md-aIJ}
\{\det M_d,G^{(p)}_{k,l}\}=c^d_{k,l} G^{(p)}_{k,l}\det M_d \ \hbox{for} \ p=0,1,\dots,
\ee
where
\be
\label{cdkl}
c^d_{k,l}=-\delta_{k+d>N}+\delta_{d+1>l}+\delta_{d+1>k}-\delta_{l+d>N}
\ee
where  $\delta_{i>j}=1$ for $i>j$ and $0$ otherwise.
\end{lm}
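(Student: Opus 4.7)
The plan is to expand $\{\det M_d, G^{(p)}_{k,l}\}$ via the Leibniz rule as
\[
\{\det M_d,\, G^{(p)}_{k,l}\} \;=\; \sum_{i=N-d+1}^{N}\sum_{j=1}^{d} C_{i,j}\,\{a_{i,j},\, G^{(p)}_{k,l}\},
\]
where $C_{i,j}$ is the cofactor of the entry $a_{i,j}$ inside $M_d$, and then substitute the explicit formula (\ref{eq:zero-p}) for the elementary bracket. That formula applies uniformly for all $p\ge 0$: for $p>0$ it is (\ref{eq:zero-p}) as stated, and for $p=0$ one verifies that setting $G^{(0)}=\mathbb A$ in (\ref{eq:zero-p}) recovers (\ref{Poisson}). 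The task then reduces to analyzing four sums, one for each of the four summands of (\ref{eq:zero-p}).

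For each such sum the behaviour is governed by a single mechanism. On the summation range $(i,j)\in\{N-d+1,\dots,N\}\times\{1,\dots,d\}$ the sign factor $(\mathrm{sign}(\cdot)\pm 1)$ vanishes identically unless one of the four inequalities $k+d>N$, $d+1>l$, $d+1>k$, $l+d>N$ holds; these are exactly the indicators appearing in~(\ref{cdkl}). When one of them does hold, the sign factor equals $\pm 2$ on the \emph{strict} subcase (e.g.\ $i<k$ in the first term) and $\pm 1$ on the \emph{diagonal} subcase (e.g.\ $i=k$). The strict subcase then collapses thanks to the row/column substitution identities
\[
\sum_{j=1}^{d} C_{i,j}\,a_{i',j} = \delta_{i,i'}\det M_d, \qquad \sum_{i=N-d+1}^{N} C_{i,j}\,a_{i,j'} = \delta_{j,j'}\det M_d,
\]
while the diagonal subcase contributes exactly $\pm\det M_d\cdot G^{(p)}_{k,l}$ via the Laplace expansion of $\det M_d$ along the appropriate row or column. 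Adding the four contributions yields
\[
(-\delta_{k+d>N}+\delta_{d+1>l}+\delta_{d+1>k}-\delta_{l+d>N})\det M_d\cdot G^{(p)}_{k,l},
\]
which is precisely the claimed formula.

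The only point I need to check carefully is that in each strict subcase the auxiliary index pushed in by the inequality really does land inside the corner block of $M_d$, so that the cofactor identity above is applicable. For instance, in Term 1 the strict subcase $i<k$ combined with $i\ge N-d+1$ and $k+d>N$ forces both $i$ and $k$ into the row range $\{N-d+1,\dots,N\}$, which is exactly what makes $\sum_j C_{i,j}a_{k,j}=0$ for $i\ne k$. The analogous remark in the other three terms will use $k,l\le d$ or $k,l\ge N-d+1$ as appropriate. Aside from this bookkeeping, the argument is a symmetric four-term case analysis with no hidden subtleties; I expect that writing out one term in detail and invoking the obvious symmetries $i\leftrightarrow j$ and $k\leftrightarrow l$ (combined with the corner reflection $i\mapsto N+1-i$) will handle the others.
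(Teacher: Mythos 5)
Your proposal is correct and follows essentially the same route as the paper's own proof: Leibniz expansion of $\det M_d$, substitution of the elementary bracket (\ref{eq:zero-p}), vanishing of the off-diagonal contributions because the substituted row (or column) duplicates one already present in the corner minor, and the Laplace expansion giving $\pm\det M_d$ in the diagonal subcases. The paper merely writes out the $d=2$ case explicitly before sketching general $d$, whereas you organize the same computation by the four summands of (\ref{eq:zero-p}) with explicit cofactor identities; the key point you flag (that the pushed-in index necessarily lands inside the corner block) is exactly the observation the paper relies on.
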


\proof Let us deal with the minor $M_2$ first. By
the Leibnitz rule, we obtain four brackets,
and using relation (\ref{eq:zero-p}) we obtain four terms for each bracket.
Grouping  together terms with the same $G_{r,s}^{(p)}$ entry we obtain
\bea
\left\{M_d,G_{k,l}^{(p)}\right\}&=&({\rm sign}(N-k)-1)G_{N,l}^{(p)} \left|\begin{array}{cc}a_{N-1,1}&a_{N-1,2}\\
a_{k,1}&a_{k,2}\\
\end{array}\right| + \nn\\
&+&
({\rm sign}(2-l)+1)G_{k,2}^{(p)} \left|\begin{array}{cc}a_{N-1,1}&a_{N-1,l}\\
a_{N1}&a_{N,l}\\
\end{array}\right| \nn\\
&+&({\rm sign}(2-k)+1)G_{2,l}^{(p)} \left|\begin{array}{cc}a_{N-1,1}&a_{N-1,k}\\
a_{N,1}&a_{N,k}\\
\end{array}\right| + \nn\\
&+&
({\rm sign}(N-l)-1)G_{k,N}^{(p)} \left|\begin{array}{cc}a_{N-1,1}&a_{N-1,2}\\
a_{l,1}&a_{l,2}\\
\end{array}\right|+ \\
&+&({\rm sign}(N-1-k)-1)G_{N-1,l}^{(p)} \left|\begin{array}{cc}a_{k,1}&a_{k,2}\\
a_{N,1}&a_{N,2}\\
\end{array}\right| + \nn\\
&+&
({\rm sign}(1-l)+1)G_{k,1}^{(p)} \left|\begin{array}{cc}a_{N-1,l}&a_{N-1,2}\\
a_{N,l}&a_{N,2}\\ \end{array}\right| + \nn\\
&+&({\rm sign}(1-k)+1)G_{1,l}^{(p)} \left|\begin{array}{cc}a_{N-1,k}&a_{N-1,2}\\
a_{N,k}&a_{N,2}\\
\end{array}\right| + \nn\\
&+&
({\rm sign}(N-1-l)-1)G_{k,N-1}^{(p)} \left|\begin{array}{cc}a_{l,1}&a_{l,2}\\
a_{N1}&a_{N,2}\\
\end{array}\right|
\nn
\eea
and each term in this sum is nonzero only for one choice of either $k$ or $l$.
For example consider the last term on the r.h.s.:
the coefficient $({\rm sign}(N-1-l)-1)$ is nonzero only for $l=N-1$ or $l=N$.
However in the latter case the determinant $ \left|\begin{array}{cc}a_{l,1}&a_{l,2}\\
a_{N1}&a_{N,2}\\
\end{array}\right| $ becomes zero, so we may only choose  $l=N-1$. It easily follows that (\ref{Md-aIJ}) and (\ref{cdkl}) are satisfied.

In the case of $d>2$ the computation is very similar: the first and fifth term above are replaced by the
 sum of $d$ determinants enumerated by the index $i=N-d+1,\dots,N$ and such that in each
of the corresponding matrices the $i$th row vector is replaced by the $k$th row vector multiplied by $({\rm sign}(i-k)-1)G^{(p)}_{i,l}$.
If $i<k$, the corresponding determinant is zero (the matrix then contains two proportional row vectors), so
the only nonzero contribution occurs when $i=k$, which is possible only if $k>n-d$, and this contribution is $-G^{(p)}_{k,l}\det M_d$, which
gives the first term in the r.h.s. of (\ref{cdkl}). Using the same reasonings we can deal with three other terms.
Because $\delta_{i<0}=1-\delta_{i+1>0}$
we easily obtain from (\ref{cdkl}) that $c^d_{k,l}=c^{n-d}_{k,l}$, which completes the proof of the Lemma.
\endproof

The proof of the fact that the elements $\det M_{N-d}/\det M_d$ for $d=0,\dots,\left[\frac{N-1}{2}\right]$ are central then
follows by the Leibnitz rule for the Poisson bracket and by observing that
$c_{k,l}^d=c_{k,l}^{N-d}$ for all $k,l,d$, so that  $\det M_d$ and $\det M_{N-d}$ have {\it exactly
the same} commutation relations with all of $a_{k,l}$ and with all of $g^{(p)}_{k,l}$ for $p\ge 1$ in the twisted Yangian case.

That these central elements are algebraically independent was proved by Bondal \cite{Bon1} already for the restriction of the algebra (\ref{Poisson})
to $Sym_N$.
\endproof

We now formulate the general algebraic independence lemma valid in the nonrestricted case.
\begin{lm}\label{lemma-alg-ind}
The set of algebraically independent central elements of the nonrestricted
Poisson algebra $(GL_N,\{\cdot,\cdot\})$ where $\{\cdot,\cdot\}$ is given by (\ref{Poisson}),
comprises the coefficients $c_k$ of $\lambda^{-k}$, $k=0,1,\dots,\left[N/2\right]$, of the expansion of $\det({\mathbb A}+\lambda^{-1}{\mathbb A}^T)$
and the rational central elements $b_l=\det M_{N-l}/\det M_l$, $l=1,\dots,\left[(N-1)/2\right]$ provided all $\det M_l$,
$l=1,\dots,\left[(N-1)/2\right]$ are nonzero.
\end{lm}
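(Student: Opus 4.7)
The plan is to verify algebraic independence via the Jacobian criterion. Centrality of each $c_k$ and $b_l$ was already established in Theorem~\ref{th:central} (together with Lemma~\ref{lm:homomorphism}) and Theorem~\ref{central-A} respectively, so only the functional independence of the total collection of $N = [N/2]+1 + [(N-1)/2]$ regular functions on $GL_N$ requires proof. Since functional independence of a finite family of regular functions on an irreducible variety is an open condition, it suffices to exhibit one point $\mathbb{A}_0 \in GL_N$ at which the $N$ differentials are linearly independent in $T^\ast_{\mathbb{A}_0}GL_N$.

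For the test point I take $\mathbb{A}_0 = DJ$, with $D = \mathrm{diag}(\alpha_1,\dots,\alpha_N)$ having generic nonzero entries and $J_{i,j} = \delta_{i+j,N+1}$ the anti-diagonal identity. A direct calculation then gives
\[
\det M_d = \pm\prod_{i=N-d+1}^N\alpha_i,\qquad
\det\bigl(\mathbb{A}_0+\lambda^{-1}\mathbb{A}_0^T\bigr) = \pm\prod_{i=1}^N\bigl(\alpha_i+\lambda^{-1}\alpha_{N+1-i}\bigr),
\]
so the generality assumption holds and every central element becomes an explicit polynomial or rational function of $\alpha_1,\dots,\alpha_N$. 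Restricting the Jacobian computation to this $N$-parameter subvariety is enough, because any rank deficiency of the full Jacobian on $GL_N$ would also appear on any smooth slice through $\mathbb{A}_0$.

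It is then convenient to pass to the pair coordinates $p_i := \alpha_i\alpha_{N+1-i}$ and $s_i := \alpha_i^2+\alpha_{N+1-i}^2$ (for $i=1,\dots,[N/2]$), together with $\alpha_{(N+1)/2}$ when $N$ is odd. In these coordinates the rational generators $b_l$ depend only on the $p_i$'s (and the middle variable for odd $N$), and successive ratios $b_l/b_{l+1}$ recover each $p_i$ individually; $c_0 = \pm\det\mathbb{A}_0$ then supplies the remaining $p_1$. The problem therefore reduces to showing that $c_1,\dots,c_{[N/2]}$, viewed as functions of $s_1,\dots,s_{[N/2]}$ with the $p_i$'s treated as fixed parameters, have a non-degenerate $[N/2]\times[N/2]$ Jacobian at generic $s$.

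This last step is the main technical obstacle. The $c_r$'s are the coefficients in the expansion of $\prod_i\bigl(p_i+s_i\lambda^{-1}+p_i\lambda^{-2}\bigr)$ (multiplied by $\alpha_{(N+1)/2}(1+\lambda^{-1})$ in the odd case), so the entries $\partial c_r/\partial s_i$ are elementary symmetric polynomials in the complementary variables. Specialising first to the clean configuration $p_i=1$, the resulting Jacobian matrix acquires a Vandermonde-type form whose determinant is proportional to $\prod_{i<j}(s_i-s_j)$, which is nonzero whenever the $s_i$'s are pairwise distinct. The general case $p_i\neq 1$ follows by continuation, since the Jacobian determinant is a polynomial in $(p,s)$ which does not vanish identically. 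This completes the Jacobian-rank argument and establishes the algebraic independence asserted in the lemma.
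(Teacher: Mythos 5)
Your proof is correct, but it takes a genuinely different route from the paper's. The paper's argument is a symmetry argument: it assumes a relation $F(\{c_k\},\{b_l\})=0$ and applies the scaling automorphisms (\ref{scaling}) with $\phi$ supported near a single index $i$, under which all $c_k$ and all $b_l$ with $l\neq i$ are fixed while $b_i$ is rescaled by an arbitrary factor; this forces $F$ to be independent of every $b_l$, and the claim then reduces to the algebraic independence of $\{c_k\}$ alone, which is quoted from \cite{dubrovin} (just as the independence of $\{b_l\}$ alone is quoted from \cite{Bon1} in Theorem~\ref{central-A}). Your Jacobian computation at the anti-diagonal test point $\mathbb A_0=DJ$ is instead entirely self-contained: it does not rely on the cited independence of either family separately, and it exhibits explicitly that the $N$ central elements restrict to a full coordinate system $(p_i,s_i)$ (plus the middle $\alpha$ for odd $N$) on the slice, with the Jacobian block-triangular because the $b_l$ and $c_0$ do not involve the $s_i$. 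The one step you assert rather than derive is the final determinant evaluation $\det(\partial c_r/\partial s_i)\propto\prod_{i<j}(s_i-s_j)$ at $p_i=1$; it is true, and the cleanest justification is to observe that the columns are the degree $1,\dots,[N/2]$ coefficient vectors of the palindromic polynomials $g_i(\mu)=\mu\prod_{j\neq i}(1+s_j\mu+\mu^2)$, that palindromy makes these coefficients already determine the full coefficient vectors, and that the $g_i$ are linearly independent for pairwise distinct $s_i$ since $g_j$ vanishes at the roots of $1+s_i\mu+\mu^2$ exactly when $j\neq i$. With that line added your argument is a complete and arguably more elementary alternative; the paper's automorphism trick is shorter but leans on external references for the independence of each family.
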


\proof
We have already proved that these elements are central and that each set $\{c_k\}$ and $\{b_l\}$ is algebraically independent. Suppose we have a
function
$$
F\bigl( \{c_k\}, \{b_l\}\bigr)=0 \hbox{ for all values of }a_{i,j}.
$$
Because any transformation (\ref{scaling}) is an automorphism of the
algebra (\ref{Poisson}), choosing $\phi_l^{(i)}=s_i(\delta_{l,i+1}-\delta_{l,i})$ for $i=1,\dots,\left[(N-1)/2\right]$ we obtain that all $c_k$ and
all $b_l$ with $l\ne i$ remain invariant whereas $b_i\to b_ie^{2s_i}$. This means that if $F\bigl( \{c_k\}, \{b_l\}\bigr)=0$ for some nonzero
$\{b_l\}$, then $F\bigl( \{c_k\}, \{b_le^{2\phi_l}\}\bigr)=0$ for any choice of $\phi_l\in{\mathbb R}$. Hence, the function $F$ is actually
independent of all of $b_l$ and we have that $F(\{c_k\})$=0. Because the set of $\{c_k\}$ is algebraically independent, we have that $F\equiv 0$,
which completes the proof.
\endproof

Adding $\det{\mathbb A}$, which
corresponds both to the rational central element with $d=0$ and to the polynomial central element given by the coefficient
of power $0$, to the set we have $\left[\frac{N+1}{2}\right]$ central elements described by Theorem~\ref{central-A}
and $\left[\frac{N}{2}\right]$ central elements from Corollary~\ref{cor:central}, so, in the general case of
a nonrestricted algebra $(GL_N,\{\cdot,\cdot\})$ where $\{\cdot,\cdot\}$ is given by (\ref{Poisson}),
we have exactly $N$ algebraically independent central elements.

\begin{remark}\label{higher-dimension}
Elementary, but lengthy calculations demonstrate that the highest Poisson
leaf dimension is not less than $N^2-N$.
Here we only briefly outline the way of proving it. For this it suffices to
consider the case where all $a_{i,j}$ with $i\ne j$ are $\epsilon$-small
as compared to all $a_{i,i}$ and to retain only terms of
order $O({\epsilon})$ in the Poisson relations (\ref{Poisson}) neglecting
all the terms of order $\epsilon^2$.
Introducing then the combination $b_{i,j}=a_{i,j}-a_{j,i}$ and retaining
the elements $a_{i\ge j}$ with $i\ge j$ we observe that
in the limit of small $\epsilon$, all the $b_{i,j}$ commute with all the
$a_{i\ge j}$, so that the Poisson algebra splits in two sub-algebras, the $a_{i\ge j}$-algebra and the
$b_{i,j}$-algebra.

The $b_{i,j}$-algebra becomes  the small-$\epsilon$ limit of the Dubrovin--Ugaglia or Nelson--Regge algebra (\ref{eq:du})
and therefore its  highest Poisson leaf dimension is $\frac{N(N-1)}{2}-\left[\frac{N}{2}\right]$. The
$a_{i\ge j}$-algebra becomes the Dubrovin--Ugaglia or Nelson--Regge algebra
to which we add the diagonal elements, and therefore its  highest Poisson leaf dimension is
$\frac{N(N+1)}{2}-\left[\frac{N+1}{2}\right]$, so the highest rank of the
Poisson relations (\ref{Poisson}) will be not less than
$N^2-N$, as expected.\end{remark}

We are now going to formulate the theorem describing the rational central elements in the general case of the block-upper-triangular matrix
${\mathbb A}$ and its possible Yangian extensions. We begin by fixing our notation.
\begin{df}\label{minorsMdi}
For a block-upper-triangular matrix ${\mathbb A}$ with $n$ blocks of sizes $m_i\times m_i$, $i=1,\dots,n$ on the diagonal, let
$M_d^{(i)}$, $d=0,\dots, m_i$, $i=1,\dots,n$, be minors of size $d\times d$ located at lower-left corners of the corresponding
{\em diagonal blocks} of the matrix ${\mathbb A}$.
\end{df}

\begin{theorem}\label{th:new-central}
Provided $\det M_d^{(i)}$ are nonzero, all the quotients
$$
b_d^{(i)}\equiv\det M_{m_i-d}^{(i)}/\det M_d^{(i)},\qquad d=0,\dots,[(m_i-1)/2], \quad i=1,\dots,n
$$
are central elements of both the algebra (\ref{Poisson}) and  its Yangian extension
(\ref{G-def})--(\ref{eq:Yangian}), for any choice of the zero level
 ${\mathbb A}\in\mathcal A_{n,m}$.

The central elements $b_d^{(i)}$, $d=0,\dots,[(m_i-1)/2]$,  $i=1,\dots,n$, and the coefficients $c_r$ of $\lambda^{-r}$
terms ($r=1,2,\dots$) of the expansion of $\det{\mathcal G}(\lambda)$ constitute an algebraically independent complete set of central elements
of the affine algebra (\ref{eq:Yangian}) whose zero level ${\mathbb A}$
is restricted to the block-upper-triangular form ${\mathbb A}\in\mathcal A_{n,m}$ for any choice of $n,m$.

These central elements remain central for all the $k$-level reductions (\ref{k-level}). In this case,
the complete set of algebraically independent central elements comprises the same elements $b_d^{(i)}$ as above
and the elements $c_r$ with $r=1,\dots, [(Nk)/2]$.

In particular, the maximal dimension of the Poisson leaves for the algebra  (\ref{Poisson}) on
 $\mathcal A_{n,m}$ is
 $$
\frac{n(n+1)}{2} m^2- n m - s\left[\frac{n}{2}\right],\quad s=\left\{\begin{array}{l}
 1\quad\hbox{for $m$ odd,}\\
 0\quad\hbox{for $m$ even}
 \end{array}\right.
$$
which is always even.
\end{theorem}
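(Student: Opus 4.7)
The plan is to split the theorem into three independent claims---centrality, algebraic independence, and the dimension formula---and to treat each by localising or degenerating arguments already developed in the paper.

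First I would prove the centrality of each $b_d^{(I)}$ by a block-localised analogue of Lemma~\ref{lm:commutation}. The entries of $\det M_d^{(I)}$ all lie in the $I$-th diagonal block, so expanding $\{\det M_d^{(I)},{\mathcal G}_{k,l}(\lambda)\}$ via the Leibniz rule and~(\ref{eq:Yangian}) one meets four families of contributions of the form $({\rm sign}(\cdot)\pm 1)\,a_{k,s}\,G^{(p)}_{r,l}$ etc.\ with $(r,s)$ inside the block. The block-upper-triangular vanishing $a_{k,s}=0$ for $k$ below block $I$ kills the contributions from rows below, while the sign coefficient $({\rm sign}(r-k)-1)$ vanishes for $k$ strictly above the block, eliminating the rest; the surviving terms are exactly the ones computed in Lemma~\ref{lm:commutation} with $N$ replaced by $m$, yielding
$$
\{\det M_d^{(I)},{\mathcal G}_{k,l}(\lambda)\}=c^{d,I}_{k,l}\,{\mathcal G}_{k,l}(\lambda)\,\det M_d^{(I)}
$$
with the same step-function pattern as~(\ref{cdkl}) inside block $I$ and, crucially, the same symmetry $c^{d,I}_{k,l}=c^{m-d,I}_{k,l}$. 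Hence each $b_d^{(I)}$ Poisson-commutes with every entry of ${\mathcal G}(\lambda)$; centrality in~(\ref{Poisson}) follows via Lemma~\ref{lm:homomorphism}, and in each $k$-level reduction via the surjective Poisson homomorphism of Theorem~\ref{th:k-level}.

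For algebraic independence I would exploit the scaling automorphism~(\ref{scaling}) applied block-by-block. Dropping the reflection restriction (needed only for compatibility with the antiautomorphism, not for the automorphism property itself), arbitrary $\phi_i$ with $\sum_{i\in I}\phi_i=0$ for every $I$ still generate automorphisms of~(\ref{Poisson}). They fix every polynomial Casimir $c_r$, since $\det({\mathbb A}+\lambda^{-1}{\mathbb A}^T)$ scales by $(\det D)^2=1$, and they fix $b_{d'}^{(I')}$ for $I'\ne I$. A short explicit check shows that the remaining parameters rescale each nontrivial $b_d^{(I)}$ independently by an arbitrary positive constant, so any alleged polynomial relation $F(\{c_r\},\{b_d^{(I)}\})\equiv 0$ reduces to one among the $c_r$ alone, which is excluded by Corollary~\ref{cor:central} and Lemma~\ref{lemma-alg-ind}.

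The last piece is the leaf-dimension formula. One has $\dim {\mathcal A}_{n,m}=\tfrac{n(n+1)}{2}m^2-n$. After discarding $b_0^{(I)}=1$ and, for even $m$, $b_{m/2}^{(I)}=1$, there are $n\,[(m-1)/2]$ nontrivial rational Casimirs, and $[nm/2]$ polynomial ones; a case-by-case parity check establishes the identity
$$
n\Bigl[\tfrac{m-1}{2}\Bigr]+\Bigl[\tfrac{nm}{2}\Bigr]=nm-n+s\Bigl[\tfrac{n}{2}\Bigr],
$$
and subtracting from $\dim {\mathcal A}_{n,m}$ yields the stated maximal leaf dimension, whose evenness follows from a further parity argument. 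The main obstacle is completeness---that no further algebraically independent Casimir exists---which I would handle by the degeneration of Remark~\ref{higher-dimension}: in the limit where all strictly upper-triangular entries are $\eps$-small, the bracket~(\ref{Poisson}) factorises into a Dubrovin--Ugaglia summand on $b_{i,j}=a_{i,j}-a_{j,i}$ and a companion summand on the $a_{i\ge j}$, whose coranks are known and sum to $nm-n+s[n/2]$. Thus the Poisson tensor attains the claimed rank at the degenerate point, and by lower-semicontinuity of the rank it is attained generically on ${\mathcal A}_{n,m}$, forcing equality and completing the proof.
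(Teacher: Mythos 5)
Your proposal is correct, and for the two substantive components---centrality and algebraic independence---it follows the same route as the paper: the paper establishes centrality of the $b_d^{(i)}$ by noting the argument is analogous to Theorem~\ref{central-A} with some row or column vectors now vanishing because of the Poissonian restriction, which is exactly your block-localised version of Lemma~\ref{lm:commutation}, and it establishes independence by the same block-wise use of the scaling automorphism (\ref{scaling}) as in Lemma~\ref{lemma-alg-ind}, extended to the affine levels via $G^{(p)}_{i,j}\mapsto e^{\phi_i+\phi_j}G^{(p)}_{i,j}$. The one place where you genuinely go beyond the paper is the leaf-dimension statement: the paper's proof is purely the count $\dim\mathcal A_{n,m}=\tfrac{n(n+1)}{2}m^2-n$ minus the number of independent Casimirs, which by itself only bounds the generic rank of the Poisson tensor from above; the matching lower bound is only sketched in Remark~\ref{higher-dimension}, and only for the unrestricted algebra on $GL_N$. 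Your degeneration argument supplies that missing lower bound for the restricted algebra, which is a real improvement in rigour---but note that the $\varepsilon$-small degeneration of Remark~\ref{higher-dimension} must be adapted to $\mathcal A_{n,m}$, where the entries below the block diagonal are exactly zero rather than small, so the splitting into a Dubrovin--Ugaglia summand on the $b_{i,j}=a_{i,j}-a_{j,i}$ and a companion summand on the $a_{i\ge j}$ has to be carried out block by block and the resulting coranks recomputed; this is plausible and your parity identity $n\bigl[\tfrac{m-1}{2}\bigr]+\bigl[\tfrac{nm}{2}\bigr]=nm-n+s\bigl[\tfrac{n}{2}\bigr]$ does check out in all cases, but it is the one step of your write-up that is not already a routine transcription of an argument appearing in the paper.
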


\proof The proof of the fact that the quotients $b_d^{(i)}$ for $ d=0,\dots,[(m_i-1)/2]$, and $i=1,\dots,n$
are central elements is analogous to the proof of Theorem~\ref{central-A} with the only distinction that
now some of the row or column vectors will be zero because of the Poissonian restrictions. The proof of algebraic independence is
 analogous to the proof of Lemma~\ref{lemma-alg-ind} in which we must generalise the automorphism (\ref{scaling}) to the
affine case by setting $G^{(p)}_{i,j}\mapsto G^{(p)}_{i,j}e^{\phi_i+\phi_j}$ for all $p=0,1,\dots$.
The computation of the maximal dimension of the Poisson leaves  for the algebra  (\ref{Poisson}) on
 $\mathcal A_{n,m}$ follows from the fact that the affine space  $\mathcal A_{n,m}$ has dimension
 $\frac{n(n+1)}{2} m^2-n$ because we have $\frac{n(n-1)}{2}$ off diagonal blocks with $m^2$ elements each,
 and $n$ diagonal blocks with $m^2-1$ elements each.
 We then need to subtract from this the number of algebraically central elements. These are $b_d^{(i)}$ for
  $d=1,\dots,[(m-1)/2]$, and $i=1,\dots,n$, giving $\left[\frac{m-1}{2}\right] n$ algebraically
  independent central elements, and $c_k$, $k=1,\dots, \left[\frac{m n}{2}\right]$, giving another $ \left[\frac{m n}{2}\right]$
  algebraically independent central elements. \endproof

\begin{remark}\label{rem:symmetries}
Note that the constructed central elements are of two, very different, sorts. Those generated by $\det({\mathbb A}+\lambda^{-1}{\mathbb A}^T)$
are invariant under the transformation (\ref{scaling}) whereas, providing all $\det M_d^{(i)}$ are nonzero, we can use
transformations (\ref{scaling}) to set all the central elements $\det M_{m_i-d}^{(i)}/\det M_d^{(i)}$ equal to $\pm 1$ (in the case
of real parameters $\phi_s$). Then, the group of (anti)automorphisms of the Poisson algebra (\ref{Poisson}) in the case of the
block-upper-triangular matrices from Definition~\ref{def-block} is presumably generated by the braid group transformations (\ref{beta})
with $I=1,\dots,n-1$, by the anti-automorphism $P$ from (\ref{reflection}), and, possibly, by ``inner'' automorphisms $\beta_i$
(in terminology of Molev and Ragoucy paper \cite{MR}, where these
automorphisms were constructed for
the case $m=2$), $i=1,\dots,n$, that act nontivially only inside the blocks ${\mathbb A}_{i,i}$, ${\mathbb A}_{i,k}$ ($n\ge k>i$),
and  ${\mathbb A}_{l,i}$ ($i>l\ge 1$) and mutually commute.
In the case of general $m$, these automorphisms, if exist, must also commute mutually, and to preserve
the rational central elements, we expect they to preserve the structure of minors $M_d^{(i)}$, that is, they must correspond
to a sort of transposition w.r.t. the antidiagonal of the matrix ${\mathbb A}_{i,i}$. The
problem of existence of these automorphisms is under investigation.
\end{remark}

\section{Groupoid of block upper triangular bilinear forms}\label{se:algebroid}

In this section, we follow Bondal's approach  \cite{Bondal,Bon1}
which consists in introducing a suitable notion of groupoid of
block--upper--triangular quadratic bilinear forms in such a way that
the Poisson bracket on the base space $\mathcal A_{n,1}$ is given in
terms of the anchor map associated to the corresponding Lie
algebroid. In this approach the braid group elements are then
obtained as elementary generators of this groupoid and automatically preserve the Poisson structure.

In Bondal's case, namely when $m=1$,  the Lie algebroid
is isomorphic to the Lie algebroid on the cotangent bundle
$T^\ast\mathcal A_{n,1}$. As soon as $m>2$ this ceases to be true,
making the integration of the Lie algebroid rather tricky. We solve
this problem in this Section. Let us first recall Bondal's
construction.

\subsection{The case of upper-triangular matrices with one on the diagonal}\label{se:bondal}

In this section we recall some key results from Bondal's work   \cite{Bondal,Bon1}, or at least our interpretation of them.

Denote by $\mathcal A\subset GL_{n}(\mathbb C)$ the set  of all
upper--triangular matrices $\mathbb A$ with $1$ on the diagonal.
The Lie group $GL_{n}(\mathbb C)$ acts on
$\mathbb C^n$ in the usual way, thus acting on bilinear forms as
$$
\forall A, B\in GL_n(\mathbb C),\qquad
A\mapsto B A B^T.
$$
This action of $GL(\mathbb C^{n})$ does not preserve $\mathcal A$,
however, for any element ${\mathbb A}\in\mathcal A$, one can take the
subset $\mathcal M_{\mathbb A}\subset GL(\mathbb C^{n})$ of
elements that preserve the structure of ${\mathbb A}$, or in other
words
\be
\label{groupoid}
\mathcal M_{\mathbb A}=\left\{B\in GL(\mathbb C^{n})\, | \,
{\mathbb A}\mapsto B{\mathbb A}B^T\in \mathcal A
\right\}.
\ee
Let  $(\mathcal A,\mathcal M)$ where $\mathcal M=\cup_{\mathbb
A\in \mathcal A} \mathcal M_{\mathbb A}$ be the set of pairs
$(\mathbb A,B)$ such that $\mathbb A\in {\mathcal A}$ and
$B\in\mathcal M_{\mathbb A}$. The identity morphism is defined as
\be\label{eq:emap}
e=(\mathbb A,\ID),
\ee
the inverse as
\be\label{eq:imap}
i: (\mathbb A,B)\to (B \mathbb A B^{T},B^{-1}),
\ee
and the partial multiplication as
\be\label{eq:mmap}
m\left((B_1 \mathbb A B_1^T,B_2),(\mathbb A, B_1)\right)= (\mathbb A, B_2 B_1).
\ee
These rules define the structure of smooth algebraic groupoid on $(\mathcal A,\mathcal M)$
\cite{Bondal}. A smooth groupoid naturally defines a Lie algebroid $(\mathcal A,\mathfrak g)$, i.e.
its infinitesimal version:
$$
\mathfrak g:= \cup_{\mathbb A\in\mathcal A}\mathfrak g_{\mathbb A}
$$
where
$$
\mathfrak{g}_{\mathbb A}:=\left\{ g\in\mathfrak{gl}_{n}(\mathbb C),|\,
\mathbb A+ {\mathbb A}g+g^T{\mathbb A}\in\mathcal A \right\}.
$$
We denote by $D_{\mathbb A}$ the anchor map
\be
\label{gA}
\begin{array}{lccc}
D_{\mathbb A}:&\mathfrak g_{\mathbb A}&\to& T_{\mathbb A}\mathcal A\\
& g &\mapsto &{\mathbb A}g+g^T{\mathbb A}.\\ \end{array}
\ee
The Lie bracket on the space of sections $\Gamma(\mathfrak g)$ is defined by
\be\label{eq:lieb}
\left[v_1,v_2\right]_{\Gamma}(\mathbb A):=[g_1,g_2] + \sum_{i,j} \frac{\partial v_2}{\partial a_{i,j}}
\left(D_{\mathbb A}(g_1)\right)_{i,j} - \frac{\partial v_1}{\partial a_{i,j}} \left(D_{\mathbb A}(g_2)\right)_{i,j} ,
\ee
where for $i=1,2$ $v_i\in\Gamma(\mathfrak g)$ and we denoted by
$g_i\in\mathfrak g_{\mathbb A}$ the image of $\mathbb A\in\mathcal A$ under $v_i$. Here the first term in the right hand side is the
usual matrix commutator.

The following Lemma is based on the fact that the tangent bundle
$T\mathcal A$ can be identified with the space of strictly upper
triangular matrices, while the cotangent  bundle $T^\ast\mathcal A$
can be identified with the space of strictly lower triangular
matrices by the Killing form, which is given simply by the trace in
this case.

\begin{lm}\label{lemma-g}\cite{Bondal}
The map
\be
\label{P_A}
\begin{array}{lccl}
P_{\mathbb A}:&T^\ast_{\mathbb A}\mathcal A&\to& \mathfrak g_{\mathbb A}\\
& w &\mapsto &P_{-,1/2}(w{\mathbb A})-P_{+,1/2}(w^T{\mathbb A}^T),\\ \end{array}
\ee
where $P_{\pm,1/2}$ are the projection operators:
\be\label{eq:pr}
P_{\pm,1/2}a_{i,j}:=\frac{1\pm {\rm sign}(j-i)}{2}a_{i,j}, \quad i, j=1,\dots,n,
\ee
defines an isomorphism between the Lie algebroid $(\mathfrak g,D_{\mathbb A})$ and the Lie algebroid
$\left(T^\ast \mathcal A,D_{\mathbb A}P_{\mathbb A}  \right)$.
\end{lm}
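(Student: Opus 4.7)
The plan is to prove the lemma in three stages: verify that $P_{\mathbb A}$ is well-defined (lands in $\mathfrak g_{\mathbb A}$), check that it is a pointwise linear isomorphism, and confirm compatibility with the Lie algebroid structures.

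First I would fix the standard identifications: $T_{\mathbb A}\mathcal A$ with the space of strictly upper-triangular matrices (the diagonal of $\mathbb A$ is frozen at ones), and $T^*_{\mathbb A}\mathcal A$ with strictly lower-triangular matrices via the trace pairing $\langle v,w\rangle = \tr(vw)$. Setting $g = P_{\mathbb A}(w)$, well-definedness reduces to showing that $\mathbb A g + g^T\mathbb A$ is strictly upper-triangular. This I would establish by a direct entry-by-entry computation at positions $(i,j)$ with $i\geq j$, using the identity $P_{-,1/2}+P_{+,1/2}={\rm id}$, the transposition rule $(P_{-,1/2}(X))^T = P_{+,1/2}(X^T)$, and the upper-unitriangular structure of $\mathbb A$ to produce the required cancellations; the factor $\tfrac12$ in $P_{\pm,1/2}$ on the diagonal is precisely what balances the two contributions $\mathbb A g$ and $g^T\mathbb A$ at $(i,i)$.

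For pointwise bijectivity, a dimension count gives $\dim T^*_{\mathbb A}\mathcal A = \binom{n}{2}$ and $\dim \mathfrak g_{\mathbb A} = n^2 - \binom{n+1}{2} = \binom{n}{2}$. For injectivity, $P_{-,1/2}(w\mathbb A)$ is strictly lower-triangular while $P_{+,1/2}(w^T\mathbb A^T)$ is strictly upper-triangular, so $P_{\mathbb A}(w) = 0$ forces each off-diagonal piece to vanish (with the diagonal halves also cancelling independently). The equation $P_{-,1/2}(w\mathbb A) = 0$, combined with $w$ strictly lower-triangular and $\mathbb A$ upper-unitriangular, is a triangular linear system for the entries of $w$ whose diagonal coefficients are all $1$, so $w = 0$.

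For the Lie algebroid isomorphism, anchor compatibility is built into the definition of the anchor on $T^*\mathcal A$ as $D_{\mathbb A}P_{\mathbb A}$. Bracket compatibility is either immediate, interpreting the bracket on $\Gamma(T^*\mathcal A)$ as the pull-back via $P_{\mathbb A}$ of the bracket (\ref{eq:lieb}), or else requires identifying that induced bracket with the Koszul bracket of the cotangent Lie algebroid associated to the Poisson structure (\ref{Poisson}); the latter follows from the factorisation $\pi = D\circ P$ of the Poisson bivector together with the defining formulae of the two brackets. I expect the principal obstacle to be the cancellation argument in the first step: tracking all contributions to $\mathbb A g + g^T\mathbb A$ requires careful bookkeeping with the half-diagonal projections $P_{\pm,1/2}$ and depends crucially on the normalisation $\mathbb A_{ii}=1$.
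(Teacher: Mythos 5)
The paper does not prove this lemma at all: it is quoted from Bondal (the citation sits in the statement itself), so there is no in-paper argument to compare yours against. That said, your outline follows what is essentially the standard route to Bondal's result, and its architecture is sound: well-definedness of $P_{\mathbb A}$ by a direct computation, bijectivity by injectivity plus a dimension count, and algebroid compatibility being essentially definitional once the bracket on $T^\ast\mathcal A$ is taken to be the one transported through $P_{\mathbb A}$ (which is what the paper implicitly does, since it never specifies an independent bracket on $T^\ast\mathcal A$). Your injectivity argument is correct and, importantly, it isolates exactly where $m=1$ is used: the triangular system for $w$ has unit diagonal coefficients because $\mathbb A$ is unitriangular, and indeed for general block size the paper notes that $P_{\mathbb A}$ acquires a nontrivial kernel.

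Two points need tightening before this is a complete proof. First, the equality $\dim\mathfrak g_{\mathbb A}=n^2-\binom{n+1}{2}$ is asserted but not automatic: $\mathfrak g_{\mathbb A}$ is the kernel of the linear map $g\mapsto\Pi_{\ge}\bigl(\mathbb A g+g^T\mathbb A\bigr)$, where $\Pi_{\ge}$ projects onto the lower-plus-diagonal entries, and the claimed dimension holds only if this map is surjective onto its $\binom{n+1}{2}$-dimensional target. This is true (write $\mathbb A=\ID+N$ with $N$ strictly upper-triangular, split $g$ into diagonal and strictly lower-triangular parts, and solve column by column), but it is a genuine step of comparable weight to the injectivity argument; without it you only get $\dim\mathfrak g_{\mathbb A}\ge\binom{n}{2}$, and injectivity of $P_{\mathbb A}$ then does not yield surjectivity. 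Second, a small imprecision: $P_{-,1/2}(w\mathbb A)$ is not strictly lower-triangular but lower-triangular with half of the diagonal retained (likewise for $P_{+,1/2}$); you do acknowledge the diagonal halves, and the argument survives because only the strictly lower part of the identity $P_{-,1/2}(w\mathbb A)=P_{+,1/2}(w^T\mathbb A^T)$ is needed to force $w=0$. The well-definedness computation itself you only sketch, but the ingredients you list ($P_{-,1/2}+P_{+,1/2}=\mathrm{id}$, the rule $(P_{\mp,1/2}(X))^T=P_{\pm,1/2}(X^T)$, and the unit diagonal of $\mathbb A$) are precisely what makes the lower-plus-diagonal part of $\mathbb A g+g^T\mathbb A$ cancel, so that part of the plan is viable.
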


The Poisson bi-vector $\Pi$ on $\mathcal A$ is then obtained by the anchor map on the Lie algebroid
$\left(T^\ast \mathcal A,D_{\mathbb A}P_{\mathbb A}  \right)$ (see Proposition 10.1.4 in \cite{kirill}) as:
\be
\label{eq:biv}
\begin{array}{lccl}
\Pi:&T^\ast_{\mathbb A}\mathcal A\times T^\ast_{\mathbb A}\mathcal A
&\mapsto& \mathcal C^\infty(\mathcal A)\\
&(\omega_1,\omega_2)&\to&\Tr\left(\omega_1 D_{\mathbb A}P_{\mathbb A}  (\omega_2)
\right)
\end{array}
\ee
In coordinates one can compute the Poisson bracket by
\be
\label{Poisson-bracket}
\{a_{i,k},a_{j,l}\}:=\frac{\partial}{\partial{\rm d}a_{i,k}}\wedge\frac{\partial}{\partial{\rm d}a_{j,l}}
\Tr\left( {\rm d}a_{i,k} D_{\mathbb A}P_{\mathbb A}  ({\rm d}a_{j,l})\right).
\ee
This gives rise to the Poisson bracket on $\mathcal A$ given by the Dubrovin--Ugaglia bracket \cite{dubrovin,uga}:
\begin{eqnarray}\label{eq:du}
&&
\left\{a_{ik},a_{jl}\right\}=0,\quad\hbox{for}\,i<k<j<l,\hbox{ and }  i<j<l<k,\nn
\\&&
\left\{a_{ik},a_{jl}\right\}=2 \left(a_{ij}a_{kl}-a_{il}a_{kj}\right),\quad\hbox{for}\, i<j<k<l,
\\&&
\left\{a_{ik},a_{kl}\right\}=a_{ik}a_{kl}-2a_{il},\quad\hbox{for}\, i<k<l,\nn
\\&&
\left\{a_{ik},a_{jk}\right\}=-a_{ik}a_{jk}+2a_{ij},\quad\hbox{for} \, i< j<k,\nn
\\&&
\left\{a_{ik},a_{il}\right\}=-a_{ik}a_{il}+a_{kl},\quad\hbox{for} \, i<k<l.\nn
\end{eqnarray}

\begin{remark}\label{rm:key}
Note that the Poisson structure (\ref{eq:du}) is equivalent to the
one given by (\ref{Poisson}) by plugging in the restriction $\mathbb
A\in\mathcal A$ on the right-hand side.
\end{remark}

\subsection{The groupoid in the general case}

The key point in our construction is based on Remark \ref{rm:key}:
for any $n,m$ our algebra (\ref{Poisson}) is given by the same
Poisson bi-vector $\Pi$ as in the case $m=1$. So, due to equation
(\ref{eq:biv}), we must retain the same Lie algebroid structure on
$T^\ast\mathcal A_{n,m}$, in other words we keep the same anchor map
$D_{\mathbb A} P_{\mathbb A}$. Let us be more precise.

The tangent bundle $T{\mathcal A}_{n,m}$ is now identified with the
set of block--upper--triangular matrices $\delta \mathbb A$ such that
the diagonal blocks satisfy:
\be
\label{eq:diff-det}
\tr\mathbb A_{I,I}^{-1}\delta \mathbb A_{I,I}=0,
\ee
Analogously the cotangent bundle $T^\ast{\mathcal A}_{n,m}$ is now identified with the set of
block--lower--triangular matrices $\omega$ such that the diagonal blocks satisfy:
\be
\tr\mathbb A_{I,I}^{-1}\omega_{I,I}=0,
\ee
The map $P_{\mathbb A}$ is defined as above:
$$
\begin{array}{lccl}
P_{\mathbb A}:&T^\ast_{\mathbb A}\mathcal A_{n,m}&\to& Mat_{N}(\mathbb C)\\
& w &\mapsto &P_{-,1/2}(w{\mathbb A})-P_{+,1/2}(w^T{\mathbb A}^T),\\ \end{array}
$$
and has a non-trivial kernel now. We now define the Lie algebroid as image of this map:
\be\label{eq:algebroid-def}
\mathfrak g_{\mathbb A}:= {\rm Im}\left(P_{\mathbb A}\right),\qquad
\mathfrak g:= \cup_{\mathbb A\in\mathcal A_{n,m}}\mathfrak g_{\mathbb A}.
\ee

\begin{lm}
The bilinear form (\ref{eq:biv}) considered as a bilinear form on a vector space ${\mathbb C}^{(nm)^2}$
in which we substitute arbitrary (commuting) vectors $\omega_1,\omega_2 \in {\mathbb C}^{(nm)^2}$
is skew-symmetric.
\end{lm}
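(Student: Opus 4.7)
The strategy is to decompose $\Pi(\omega_1,\omega_2)=\Tr(\omega_1 D_{\mathbb A}P_{\mathbb A}(\omega_2))$ into four elementary summands and then exhibit cancellations in $\Pi(\omega_1,\omega_2)+\Pi(\omega_2,\omega_1)$ using one key structural identity. Substituting $D_{\mathbb A}(g)=\mathbb A g+g^T\mathbb A$ and $P_{\mathbb A}(\omega)=P_{-,1/2}(\omega\mathbb A)-P_{+,1/2}(\omega^T\mathbb A^T)$, and invoking the immediate transpose relation $(P_{\mp,1/2}(M))^T=P_{\pm,1/2}(M^T)$, I obtain
\[
D_{\mathbb A}P_{\mathbb A}(\omega)=\mathbb A\,P_{-,1/2}(\omega\mathbb A)-\mathbb A\,P_{+,1/2}(\omega^T\mathbb A^T)+P_{+,1/2}(\mathbb A^T\omega^T)\,\mathbb A-P_{-,1/2}(\mathbb A\omega)\,\mathbb A,
\]
which I label $T_1,T_2,T_3,T_4$ and correspondingly $\Pi_j(\omega_1,\omega_2):=\Tr(\omega_1 T_j(\omega_2))$.

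The one nontrivial tool I need is the adjointness of the projectors under the trace pairing,
\[
\Tr(P_{\pm,1/2}(X)\,Y)=\Tr(X\,P_{\mp,1/2}(Y)),
\]
which follows from $\mathrm{sign}(j-i)=-\mathrm{sign}(i-j)$ together with the fact that both projectors halve the diagonal identically. With this identity, together with the cyclic property of the trace and $\Tr(M)=\Tr(M^T)$, each $\Pi_j(\omega_1,\omega_2)$ can be massaged so that $P_{\pm,1/2}$ acts on $\omega_1$ instead of $\omega_2$.

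Carrying this out, I expect the two "homogeneous" terms ($T_1$ and $T_4$, involving only $\mathbb A$, not $\mathbb A^T$) to symmetrise as
\[
\Pi_1(\omega_1,\omega_2)+\Pi_1(\omega_2,\omega_1)=\Tr(\omega_2\mathbb A\omega_1\mathbb A),\qquad \Pi_4(\omega_1,\omega_2)+\Pi_4(\omega_2,\omega_1)=-\Tr(\omega_2\mathbb A\omega_1\mathbb A),
\]
using $P_{+,1/2}+P_{-,1/2}=\mathrm{id}$, so their contributions to $\Pi(\omega_1,\omega_2)+\Pi(\omega_2,\omega_1)$ cancel. For the "mixed" terms $T_2$ and $T_3$ (which involve both $\mathbb A$ and $\mathbb A^T$), the same manipulation—cyclic, adjointness, transpose—yields the exchange identity $\Pi_2(\omega_1,\omega_2)=-\Pi_3(\omega_2,\omega_1)$, and by the symmetric argument $\Pi_3(\omega_1,\omega_2)=-\Pi_2(\omega_2,\omega_1)$, so their combined contribution to $\Pi(\omega_1,\omega_2)+\Pi(\omega_2,\omega_1)$ again vanishes.

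Summing, $\Pi(\omega_1,\omega_2)+\Pi(\omega_2,\omega_1)=0$ on all of $\mathbb C^{(nm)^2}\times\mathbb C^{(nm)^2}$, which is the required skew-symmetry. The only real hazard is bookkeeping: the half-weight on the diagonal in $P_{\pm,1/2}$ and the multiple transpositions must be tracked consistently so that the adjoint identity is applied to the correct factor. No deeper input beyond the displayed projector adjointness is needed, and nothing in the argument uses the block-upper-triangular restriction on $\mathbb A$ or on $\omega_i$—which is precisely why the form extends skew-symmetrically to arbitrary matrix arguments.
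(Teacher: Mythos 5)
Your proof is correct, and it starts from the same four-term decomposition of $D_{\mathbb A}P_{\mathbb A}(\omega)$ that underlies the paper's formula (\ref{form-omega12}), but it organises the cancellation differently. The paper reduces skew-symmetry to the vanishing of the associated quadratic form: it sets $\omega_1=\omega_2=\omega$ and shows $\Tr\bigl[\omega D_{\mathbb A}P_{\mathbb A}(\omega)\bigr]=0$ using $P_{+,1/2}+P_{-,1/2}+{}$(nothing else)${}=\mathrm{id}$, transpose-invariance and cyclicity of the trace, and the commutativity of diagonal projections; this polarization step is legitimate over $\mathbb C$. You instead symmetrise $\Pi(\omega_1,\omega_2)+\Pi(\omega_2,\omega_1)$ directly, and the single tool you isolate, the adjointness $\Tr\bigl(P_{\pm,1/2}(X)\,Y\bigr)=\Tr\bigl(X\,P_{\mp,1/2}(Y)\bigr)$ of the half-weighted projectors under the trace pairing, is precisely what the paper uses implicitly through its repeated transpositions under the trace sign. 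I checked your three intermediate identities (the $T_1$-symmetrisation equals $\Tr(\omega_2\mathbb A\omega_1\mathbb A)$, the $T_4$-symmetrisation is its negative, and $\Pi_2(\omega_1,\omega_2)=-\Pi_3(\omega_2,\omega_1)$) and they all hold. What your route buys is a reusable statement of the key mechanism (projector adjointness) and a proof that never passes through the quadratic form; what the paper's route buys is lighter bookkeeping, since with $\omega_1=\omega_2$ one only needs to spot transposed pairs and diagonal projections under a single trace. Your closing remark that the block-upper-triangular restriction is never used matches the lemma's formulation: the form is skew on all of $\mathbb C^{(nm)^2}$.
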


\proof
We first write the explicit expression for the bilinear form (\ref{eq:biv}):
\bea
\Tr\bigl[\omega_1 D_A\bigl(P_A(\omega_2)\bigr)\bigr]&=&
\Tr\Bigl[\omega_1 {\mathbb A}P_{-,1/2}(\omega_2 {\mathbb A}) - \omega_1 {\mathbb A} P_{+,1/2}(\omega_2^T{\mathbb  A}^T)\Bigr.\nonumber\\
&{}& \Bigl.+{\mathbb A}\omega_1 P_{+,1/2}({\mathbb A}^T\omega_2^T) - {\mathbb A} \omega_1 P_{-,1/2}({\mathbb A}\omega_2)\Bigr].
\label{form-omega12}
\eea
The assertion of the lemma follows if by substituting
$$
\omega_1=\omega_2=\omega
$$
in (\ref{form-omega12}) we obtain zero for any $\omega\in{\mathbb C}^{(nm)^2}$.
Using that
\be
\label{eq:decompose}
\begin{array}{l} \omega {\mathbb A}=P_{+,1/2}(\omega {\mathbb A})+P_{-,1/2}(\omega {\mathbb A}),\cr
   {\mathbb A} \omega =P_{+,1/2}({\mathbb A}\omega)+P_{-,1/2}({\mathbb A}\omega),\end{array}
\ee
for the sum of the second and third terms in the r.h.s. of (\ref{form-omega12})
we obtain under the trace sign the expression
\bea
&{}&-P_{+,1/2}(\omega {\mathbb A})P_{+,1/2}(\omega^T{\mathbb A}^T)-P_{-,1/2}(\omega {\mathbb A})P_{+,1/2}(\omega^T {\mathbb A}^T)\nonumber\\
&{}&+P_{+,1/2}({\mathbb A}\omega)P_{+,1/2}({\mathbb A}^T\omega^T)+P_{-,1/2}({\mathbb A}\omega)P_{+,1/2}({\mathbb A}^T\omega^T).\nonumber
\eea
Here the second term is the transposed fourth term (with the opposite sign), so the sum of these two terms
vanishes under the trace sign. In the first and third terms, only the products of diagonal projections, $P_d$, contribute
to the trace, and we obtain under the trace sign the expression
\bea
&{}&\quad-\frac14 P_d(\omega {\mathbb A})P_d(\omega^T {\mathbb A}^T)+\frac14 P_d({\mathbb A} \omega )P_d({\mathbb A}^T\omega^T)\nonumber\\
&{}&=-\frac14 P_d(\omega {\mathbb A})P_d(\omega^T {\mathbb A}^T)+\frac14 P_d({\mathbb A}^T\omega^T)P_d({\mathbb A} \omega )\nonumber\\
&{}&=-\frac14 P_d(\omega {\mathbb A})P_d(\omega^T {\mathbb A}^T)+\frac14 P_d(\omega {\mathbb A})P_d(\omega^T {\mathbb A}^T )=0,\nonumber
\eea
where we have used that, for any two matrices $X$ and $Y$, $P_d(X)P_d(Y)=P_d(Y)P_d(X)$ and $P_d(X^T)=P_d(X)$.

For the sum of the first and fourth terms in the r.h.s. of (\ref{form-omega12}) we obtain (using the cyclicity property of the trace)
\bea
&{}&\quad\Tr\bigl[\omega {\mathbb A} P_{-,1/2}(\omega {\mathbb A})-{\mathbb A}\omega P_{-,1/2}({\mathbb A}\omega)\bigr]\nonumber\\
&{}&=\Tr\bigl[-\omega {\mathbb A} P_{+,1/2}(\omega {\mathbb A})+{\mathbb A}\omega P_{+,1/2}({\mathbb A}\omega)\bigr]\label{eq:14}\\
&{}&=\frac12 \Tr\bigl[\omega {\mathbb A} [P_{-,1/2} -P_{+,1/2}](\omega {\mathbb A})
-{\mathbb A}\omega [P_{-,1/2} -P_{+,1/2}]({\mathbb A}\omega)\bigr].\nonumber
\eea
But, for any matrix $X$, we have that
\bea
&{}&\quad\Tr\bigl[X[P_- -P_+](X)\bigr]=\Tr\bigl[[P_- +P_+ +P_d](X)[P_- -P_+](X)\bigr]\nonumber\\
&{}&=\Tr\bigl[P_+(X)P_-(X) -P_-(X)P_+(X)\bigr]=0,
\eea
and each term in the last line of (\ref{eq:14}) therefore vanishes, which completes the proof.\endproof

\begin{theorem}
The triple $\left(\mathfrak g, D_{\mathbb A}, [\cdot,\cdot] \right)$
where the anchor map $D_{\mathbb A}$ is given by (\ref{gA}) and  the
Lie bracket $[\cdot,\cdot]$ on the space of sections
$\Gamma(\mathfrak g)$ is given by (\ref{eq:lieb}) is a Lie
algebroid.
\end{theorem}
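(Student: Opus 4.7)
To verify that $(\mathfrak g, D_{\mathbb A}, [\cdot,\cdot])$ is a Lie algebroid one must check four things: that $\mathfrak g \to \mathcal A_{n,m}$ is a smooth vector bundle and that $D_{\mathbb A}$ is a bundle morphism into $T\mathcal A_{n,m}$; that the bracket (\ref{eq:lieb}) actually takes values in $\Gamma(\mathfrak g)$; that it is $\mathbb R$-bilinear, skew-symmetric, and satisfies the Leibniz rule $[v_1, f v_2] = f [v_1,v_2] + \bigl(D_{\mathbb A}(v_1)\bigr)(f)\, v_2$; and finally that it obeys the Jacobi identity while the anchor is a morphism of Lie brackets.

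The first three items are essentially direct calculations. Smoothness of $\mathfrak g = \bigcup_{\mathbb A} \mathrm{Im}(P_{\mathbb A})$ follows once one shows that the rank of $P_{\mathbb A}$ is constant along $\mathcal A_{n,m}$, which is transparent from its explicit form (\ref{P_A}) and the invertibility of $\mathbb A$. That $D_{\mathbb A}(g) = \mathbb A g + g^T \mathbb A$ lies in $T_{\mathbb A}\mathcal A_{n,m}$ for $g = P_{\mathbb A}(\omega)$, i.e.\ is block-upper-triangular with diagonal blocks satisfying (\ref{eq:diff-det}), is a short check using the projection identities (\ref{eq:decompose}). Skew-symmetry of the bracket is an immediate corollary of the Lemma just established, since the matrix commutator is pointwise antisymmetric and the derivative contribution $(D_{\mathbb A}g_1)\cdot v_2 - (D_{\mathbb A}g_2)\cdot v_1$ is antisymmetric by inspection. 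The Leibniz rule follows because the matrix commutator in (\ref{eq:lieb}) is $C^\infty(\mathcal A_{n,m})$-linear in its arguments, so only the derivative of the scalar factor survives and produces exactly the required term.

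The hard part will be the Jacobi identity, together with the requirement that the bracket actually lands in $\Gamma(\mathfrak g)$. My plan is to deduce both from Theorem~\ref{th:Poisson}, which guarantees that the bivector $\Pi$ defined by (\ref{eq:biv}) is a genuine Poisson bivector on $\mathcal A_{n,m}$. By the Koszul--Magri construction, any Poisson manifold carries a canonical Lie algebroid structure on $T^* \mathcal A_{n,m}$ with anchor $\Pi^\sharp = D_{\mathbb A}\circ P_{\mathbb A}$ and bracket $[\alpha,\beta]_\Pi = \mathcal L_{\Pi^\sharp \alpha} \beta - \mathcal L_{\Pi^\sharp \beta} \alpha - d\Pi(\alpha,\beta)$. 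The key step is to establish that the bundle morphism $P: T^*\mathcal A_{n,m} \to \mathfrak g$ intertwines the two brackets, $P\bigl([\alpha,\beta]_\Pi\bigr) = [P(\alpha), P(\beta)]$, with the right-hand side as in (\ref{eq:lieb}); I would verify this by matching the $P_{\pm,1/2}$-projected matrix products on the right against the $\mathcal L$-derivatives and the $d\Pi$ term on the left, and cross-check on exact forms $\alpha = d a_{i,j}$, $\beta = d a_{k,l}$ where both sides reduce to the derivation property of the Poisson bracket (\ref{Poisson}) already established. Once the intertwining is in place, surjectivity of $P_{\mathbb A}$ on each fibre forces the bracket on $\Gamma(\mathfrak g)$ to be well-defined and to inherit the Jacobi identity from $[\cdot,\cdot]_\Pi$, while anchor compatibility $D_{\mathbb A}\bigl([v_1,v_2]\bigr)=[D_{\mathbb A}(v_1),D_{\mathbb A}(v_2)]$ is automatic since $\Pi^\sharp = D_{\mathbb A}\circ P_{\mathbb A}$ is already a Lie algebra morphism.
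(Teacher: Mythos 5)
Your overall strategy is genuinely different from the paper's. The paper disposes of the Leibniz rule by citing Bondal's $m=1$ argument as extending trivially, and then concentrates on the one nontrivial axiom, closure of $\Gamma(\mathfrak g)$ under (\ref{eq:lieb}): it exhibits an \emph{explicit} block-lower-triangular $\omega$ (built from $P_{\pm,1/2}$-projections of products of $\omega_1,\omega_2,\mathbb A$) with $[v_1,v_2](\mathbb A)=P_{\mathbb A}(\omega)$. You instead propose to transport the whole structure from the Koszul--Magri cotangent algebroid of the Poisson bivector $\Pi$, using Theorem~\ref{th:Poisson} to know that $\Pi$ is Poisson, and an intertwining property $P\bigl([\alpha,\beta]_\Pi\bigr)=[P(\alpha),P(\beta)]$ to pull back closure, Jacobi, and anchor compatibility simultaneously. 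This is more conceptual and, if completed, proves more at once (the paper never explicitly addresses Jacobi). Your reduction of the intertwining to exact coordinate forms is also legitimate: both brackets satisfy Leibniz rules governed by the same anchor $\Pi^\sharp=D_{\mathbb A}\circ P_{\mathbb A}$, so it suffices to check on the $da_{i,j}$. For $m=1$ the intertwining is exactly Bondal's Lemma~\ref{lemma-g}.

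The gap is in your last claimed reduction: on exact forms the identity $P_{\mathbb A}\bigl(d\{a_{i,j},a_{k,l}\}\bigr)=[P(da_{i,j}),P(da_{k,l})]$ does \emph{not} ``reduce to the derivation property of the Poisson bracket (\ref{Poisson})''. What the Poisson property gives you for free is the equality of the two sides \emph{after} applying the anchor $D_{\mathbb A}$ (i.e.\ $[X_f,X_g]=X_{\{f,g\}}$). But for $m>1$ both $P_{\mathbb A}$ and $D_{\mathbb A}$ have nontrivial kernels, so equality of anchored images does not imply equality in $\mathfrak g_{\mathbb A}$; the identity must be verified as a matrix identity in $\mathfrak{gl}_{nm}$, matching the commutator $[P_{\mathbb A}(\alpha),P_{\mathbb A}(\beta)]$ plus the derivative-of-$P$ terms against $P_{\mathbb A}$ applied to the algebraic part of the Koszul bracket. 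That verification is essentially equivalent in content (and in labour) to the paper's explicit construction of $\omega$, so your route does not avoid the computation --- it only repackages it. Until that identity is actually established, the claims that closure, Jacobi, and anchor compatibility are ``inherited'' or ``automatic'' remain unproved. A secondary, minor point: skew-symmetry of (\ref{eq:lieb}) is immediate from the antisymmetry of the matrix commutator and of the derivative terms; it has nothing to do with the Lemma on skew-symmetry of the bilinear form (\ref{eq:biv}), which concerns $\Pi$ itself.
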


\proof
The fact that the anchor map $D_{\mathbb A}$ satisfies the Leibnitz
rule with respect to the Lie bracket (\ref{eq:lieb}) is already
proved in \cite{Bon1} in the case $m=1$ and that proof extends
trivially to the case of arbitrary $m$. We only need to prove that
$\Gamma(\mathfrak g)$ is closed, i.e. that for any two sections
$v_1,v_2$, there exists $\omega\in T^\ast_{\mathbb A}\mathcal
A_{n,m}$ such that $[v_1,v_2](\mathbb A)= P_{\mathbb A}(\omega)$.
Take $\omega_i\in T^\ast_{\mathbb A}\mathcal A_{n,m}$ such that
$v_i(\mathbb A)=P_{\mathbb A}(\omega_i)$, $i=1,2$. The direct calculation
then yields $\omega$:
\bea
\omega&=&\varpi_{-,1}\Bigl[ P_{+,1/2}(\omega_2^T\mathbb A^T ) \omega_1-P_{+,1/2}(\omega_1^T\mathbb A^T ) \omega_2+
P_{+,1/2}(\omega_2 \mathbb A ) \omega_1  -P_{+,1/2}(\omega_1 \mathbb A ) \omega_2+\Bigr.\nn\\
&&
\Bigl.+\omega_2 P_{+,1/2}( \mathbb A^T\omega_1^T )-\omega_1 P_{+,1/2}( \mathbb A^T\omega_2^T )+
\omega_1 P_{-,1/2}( \mathbb A\omega_2 )-\omega_2 P_{-,1/2}( \mathbb A\omega_1 )\Bigr],
\nn
\eea
where we let $\varpi_{-,1}$ denote the (natural) projection on the set of (non-strictly) block-lower-triangular
matrices, $\varpi_{-,1}(GL_{nm})={\mathcal A}^T_{n,m}$.

This concludes the proof.
\endproof

We now integrate the Lie algebroid $\left(\mathfrak g, D_{\mathbb
A}, [\cdot,\cdot] \right)$ to obtain the Lie groupoid in which we
will have to pick the generators of the braid group.

\begin{theorem}
The Lie groupoid $\mathcal M_{n,m}$ which integrates the Lie algebroid defined by (\ref{eq:algebroid-def}) is given by
\be\label{groupoid1}
\mathcal M_{n,m}:= U_{\mathbb\mathcal A_{n,m}}\mathcal M^{(n,m)}_\mathbb A,
\ee
where
\be\begin{array}{l}
\mathcal M^{(n,m)}_\mathbb A:= \left\{B\in GL(\mathbb C^{n})\, | \,
 B{\mathbb A}B^T\in \mathcal A_{n,m}\hbox{ and the sets of central elements coincide:}\right.\\
 \left.
\qquad\qquad \{ b^{(I)}_d( B{\mathbb A}B^T)\}=\{b^{(I)}_d(\mathbb A)\}, d=0,\dots,[\frac{m}{2}], I=1,\dots,n
\right\},\label{groupoid2}
\end{array}\ee
where
$$
b_d^{(I)}\equiv\det M_{m-d}^{(I)}/\det M_d^{(I)},
$$
are the rational central elements of our algebras (\ref{Poisson}) and (\ref{eq:Yangian}).
\end{theorem}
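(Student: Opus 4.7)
The plan is to establish the theorem in two stages. First, verify that $\mathcal M_{n,m}$ is a Lie groupoid over $\mathcal A_{n,m}$ with source, target, identity, inverse, and composition inherited from the formulas (\ref{eq:emap})--(\ref{eq:mmap}). Second, check that the tangent Lie algebroid at the identity bisection reproduces $(\mathfrak g, D_{\mathbb A})$ from (\ref{eq:algebroid-def}). Integrability of the algebroid then follows automatically from the exhibition of an integrating groupoid.

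For the groupoid axioms, the only non-trivial point is the stability of the two defining conditions under composition and inversion. I would check that if $B_1$ sends $\mathbb A$ into $\mathcal A_{n,m}$ while preserving the rational central elements $\{b_d^{(I)}\}$, and if $B_2$ does the same for $B_1\mathbb A B_1^T$, then the product $B_2 B_1$ satisfies both properties for $\mathbb A$; the inverse is handled symmetrically. Source, target, and identity maps follow immediately from the set-theoretic definition (\ref{groupoid1})--(\ref{groupoid2}).

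To extract the infinitesimal structure, I would differentiate a smooth path $B(t)\in\mathcal M^{(n,m)}_{\mathbb A}$ with $B(0)=\ID$ and set $g:=\dot B(0)^T$, so that the action $\mathbb A\mapsto B\mathbb A B^T$ tangentially reads $\mathbb A\mapsto \mathbb A+t\,D_{\mathbb A}(g)+O(t^2)$. Two linear constraints then arise: the b.u.t.\ condition yields $D_{\mathbb A}(g)\in T_{\mathbb A}\mathcal A_{n,m}$, while preservation of the central elements yields
\[
\langle d b_d^{(I)}(\mathbb A),\, D_{\mathbb A}(g)\rangle = 0 \qquad \hbox{for all } d=0,\dots,[m/2],\ I=1,\dots,n.
\]
I denote the subset of $\mathfrak{gl}_N$ cut out by these two constraints by $\mathfrak h_{\mathbb A}$.

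The crucial step, and the main anticipated obstacle, is the identification $\mathfrak h_{\mathbb A}={\rm Im}(P_{\mathbb A})$. The inclusion ${\rm Im}(P_{\mathbb A})\subseteq \mathfrak h_{\mathbb A}$ is immediate: for any $\omega\in T^*_{\mathbb A}\mathcal A_{n,m}$ the centrality of $b_d^{(I)}$ under (\ref{Poisson}) (Theorem~\ref{th:new-central}) gives $\langle db_d^{(I)},\,D_{\mathbb A}(P_{\mathbb A}(\omega))\rangle=\Pi(db_d^{(I)},\omega)=0$. For the reverse inclusion, given $g\in \mathfrak h_{\mathbb A}$, the vector $D_{\mathbb A}(g)$ lies in the tangent to the symplectic leaf through $\mathbb A$, which coincides with the image of $\Pi^\sharp=D_{\mathbb A}\circ P_{\mathbb A}$; hence there exists $\omega_0$ with $D_{\mathbb A}(P_{\mathbb A}(\omega_0))=D_{\mathbb A}(g)$, reducing the problem to showing that $\ker D_{\mathbb A}\cap \mathfrak h_{\mathbb A}\subseteq {\rm Im}(P_{\mathbb A})$. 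For this I would exploit the explicit triangular form (\ref{P_A}) of $P_{\mathbb A}$ and the block structure of $\mathbb A\in \mathcal A_{n,m}$ to solve $P_{\mathbb A}(\omega)=g$ recursively on the diagonal and off-diagonal blocks. Matching this recursive construction against the algebraically independent central elements counted in Theorem~\ref{th:new-central} is where the bulk of the combinatorial work will lie.
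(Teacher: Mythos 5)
Your overall architecture (check the groupoid axioms, compute the tangent algebroid, match it against ${\rm Im}(P_{\mathbb A})$) is more ambitious than what the paper actually does: the paper's proof consists only of the ``necessity'' half. It assumes an integrating groupoid exists and shows that any of its elements must preserve every central element, by writing the variation of a central $f$ along an algebroid direction as $\delta f=\Pi({\rm d}f,\omega)=\{f,a_{i,j}\}=0$ via the bivector (\ref{eq:biv}); it then simply asserts that the groupoid is therefore given by (\ref{groupoid1})--(\ref{groupoid2}). That computation is exactly your ``immediate'' inclusion ${\rm Im}(P_{\mathbb A})\subseteq\mathfrak h_{\mathbb A}$. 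So the part of your plan that you actually execute coincides with the paper's entire argument, and the part you correctly identify as decisive --- the reverse inclusion $\mathfrak h_{\mathbb A}\subseteq{\rm Im}(P_{\mathbb A})$ --- is precisely the part you do not carry out: you defer it to an unspecified ``recursive construction'' and ``combinatorial work.'' As written, that is a genuine gap, not a proof.

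Moreover, the reduction you propose before that gap is itself not airtight. You argue that $D_{\mathbb A}(g)$ lies in the tangent space to the symplectic leaf because it is tangent to $\mathcal A_{n,m}$ and annihilates the differentials ${\rm d}b_d^{(I)}$. But the definition (\ref{groupoid2}) of $\mathcal M^{(n,m)}_{\mathbb A}$ imposes preservation only of the \emph{rational} central elements; the polynomial Casimirs $c_k$ of Corollary~\ref{cor:central} are not among your constraints, so $\mathfrak h_{\mathbb A}$ is a priori larger than the common kernel of all Casimir differentials. (One can rescue this: infinitesimally the unit-determinant condition on the diagonal blocks forces $\tr g=0$, and $\delta\det({\mathbb A}+\lambda^{-1}{\mathbb A}^T)=2\tr(g)\det({\mathbb A}+\lambda^{-1}{\mathbb A}^T)$, so the $c_k$ are automatically preserved --- but this needs to be said.) Even then, the identification ``common kernel of the Casimir differentials $=$ image of $\Pi^\sharp$'' holds only on the open stratum where the rank of $\Pi$ is maximal and the Casimirs of Theorem~\ref{th:new-central} are independent, so at best you get the equality of algebroids generically. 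You should either restrict to that stratum explicitly or replace the leaf argument by a direct dimension count comparing ${\rm codim}\,{\rm Im}(P_{\mathbb A})$ inside $\mathfrak{gl}_N$ with the number of independent constraints cutting out $\mathfrak h_{\mathbb A}$, using the explicit kernel of (\ref{P_A}) on block-lower-triangular forms.
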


\proof
Assume there exists a Lie groupoid  integrating the Lie algebroid defined by
(\ref{eq:algebroid-def}). Let us prove that it must then preserve all
central elements. Let $f$ be any central element for the algebra (\ref{Poisson}). Its variation along any element of the groupoid is
$$
\delta f= \sum_{ij}\frac{\partial f}{\partial a_{i,j}}\delta a_{i,j}=
\sum_{ij}\frac{\partial f}{\partial a_{i,j}}\left(D_{\mathbb A}P_{\mathbb A}(\omega)
\right)_{i,j}$$
for $\omega=\delta a_{j,i}$ (thanks to the Killing form). Using the definition (\ref{eq:biv}) of the Poisson bi-vector $\Pi$, we have that
$$
\delta f=\Pi({\rm d}f,\delta a_{j,i})=\{f,a_{i,j}\},
$$
where the right hand side is zero for a central element.
This shows that the Lie groupoid  integrating the Lie algebroid defined by (\ref{eq:algebroid-def})
must preserve all central elements and therefore it is defined by
(\ref{groupoid1}), (\ref{groupoid2}).
\endproof

We conclude this Section observing that  the identity morphism, the
inverse and the partial multiplication for the groupoid $\mathcal
M_{n,m}$ are still given by  (\ref{eq:emap}), (\ref{eq:imap}) and
(\ref{eq:mmap}) respectively.

\begin{remark}
We do not tackle the question whether or not there exists a smooth groupoid  on
$\mathcal A_{n,m}$ such that its Lie algebroid structure is given by $(T^\ast\mathcal A_{n,m},D_{\mathbb A}P_{\mathbb A})$.
The interested reader is invited to look at the beautiful work by Crainic and Fernandes \cite{CF1}.
\end{remark}

\section{Braid-group transformations}\label{s:braid}

The braid-group transformations $\beta_{I,I+1}$, $I=1,\dots,n-1$,
are transformations from the groupoid (\ref{groupoid1}), (\ref{groupoid2}) preserving the form of the matrix ${\mathbb A}$,
so by construction they must preserve the Poisson structure (\ref{Poisson}). They act of ${\mathbb A}$ as follows:
\be
\label{beta}
\beta_{I,I+1}[{\mathbb A}]=B_{I,I+1}{\mathbb A}B_{I,I+1}^T\equiv \wtd{\mathbb A},
\ee
where the matrix $B_{I,I+1}$ has the block form
\be
\label{BII+1}
B_{I,I+1}=\begin{array}{c}\vdots\cr I\cr I+1\cr \vdots \end{array}\left[\begin{array}{cccccccc}{\mathbb E}&&&&&&&\cr
&\ddots&&&&&&\cr&&{\mathbb E}&&&&&\cr&&&{\mathbb A}_{I,I+1}^T{\mathbb A}_{I,I}^{-T}&-{\mathbb E}&&&\cr
&&&{\mathbb A}_{I,I}{\mathbb A}_{I,I}^{-T}&{\mathbb O}&&&\cr
&&&&&{\mathbb E}&&\cr&&&&&&\ddots&\cr&&&&&&&{\mathbb E}
\end{array}\right],
\ee
as above, ${\mathbb E}$ and ${\mathbb O}$ are the respective $m\times m$ unit and zero matrices.

It is straightforward to verify that the transformation (\ref{beta}) preserves the form of the matrix ${\mathbb A}$
with
\be
\label{braid-for-AIJ}\begin{array}{l}
\wtd{{\mathbb A}_{I,I}}={\mathbb A}_{I+1,I+1},\quad \wtd{{\mathbb A}_{I+1,I+1}}={\mathbb A}_{I,I},\quad
\wtd{{\mathbb A}_{I,I+1}}={\mathbb A}_{I,I+1}^T\cr
J<I: \ \wtd{{\mathbb A}_{J,I}}={\mathbb A}_{J,I}{\mathbb A}_{I,I}^{-1}{\mathbb A}_{I,I+1}-{\mathbb A}_{J,I+1}, \quad
\wtd{{\mathbb A}_{J,I+1}}={\mathbb A}_{J,I}{\mathbb A}_{I,I}^{-1}{\mathbb A}_{I,I}^T,\cr
J>I+1:\ \wtd{{\mathbb A}_{I,J}}={\mathbb A}_{I,I+1}^T{\mathbb A}_{I,I}^{-T}{\mathbb A}_{I,J}-{\mathbb A}_{I+1,J}, \quad
\wtd{{\mathbb A}_{I+1,J}}={\mathbb A}_{I,I}{\mathbb A}_{I,I}^{-T}{\mathbb A}_{I,J}
\end{array}
\ee
and with all other blocks retaining their form.

We have two theorems concerning the transformations (\ref{beta}), (\ref{BII+1}).

\begin{theorem}\label{automorphism}
The transformations (\ref{beta}), (\ref{BII+1}) are automorphisms of the Poisson structure (\ref{Poisson}) restricted to the
block-upper-triangular matrices ${\mathbb A}$ from Definition~\ref{def-block}.
\end{theorem}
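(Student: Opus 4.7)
The plan is to exhibit $B_{I,I+1}$ as an element of the groupoid $\mathcal M_{n,m}$ defined in (\ref{groupoid1})--(\ref{groupoid2}); once that is done, the theorem follows immediately from the construction of Section~\ref{se:algebroid}, because the groupoid was built precisely so that its Lie algebroid has the anchor $D_{\mathbb A}P_{\mathbb A}$ generating the Poisson structure (\ref{Poisson}) via (\ref{eq:biv}). In particular, every morphism in $\mathcal M_{n,m}$ acts on $\mathcal A_{n,m}$ as a Poisson map by design, so no direct verification of Jacobi relations or of the brackets (\ref{Poisson}) is needed.

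Concretely, I would carry out three routine checks on the image $\wtd{\mathbb A}=B_{I,I+1}{\mathbb A}B_{I,I+1}^T$. First, a block matrix multiplication using the explicit nontrivial $2\times 2$ sub-block of (\ref{BII+1}) to derive the transformation formulas (\ref{braid-for-AIJ}); this is short because only rows and columns labelled by $I$ and $I+1$ are affected, so one has to treat only four kinds of blocks (the $(I,I),(I,I+1),(I+1,I),(I+1,I+1)$ block, blocks $\wtd{\mathbb A_{J,K}}$ with $K\in\{I,I+1\}$ and $J<I$, blocks with $J\in\{I,I+1\}$ and $K>I+1$, and the untouched rest). Second, reading off (\ref{braid-for-AIJ}) shows that every $\wtd{\mathbb A_{J,K}}$ with $J>K$ remains zero and that the new diagonal blocks are precisely $\mathbb A_{I+1,I+1}$ and $\mathbb A_{I,I}$, which have unit determinant by hypothesis; hence $\wtd{\mathbb A}\in\mathcal A_{n,m}$, which is the first condition in (\ref{groupoid2}). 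Third, since the transformation simply interchanges the diagonal blocks $\mathbb A_{I,I}\leftrightarrow\mathbb A_{I+1,I+1}$ and leaves all other diagonal blocks alone (only diagonal blocks enter the minors $M_d^{(J)}$), the rational central elements of Theorem~\ref{th:new-central} satisfy $b_d^{(J)}(\wtd{\mathbb A})=b_d^{(\sigma(J))}({\mathbb A})$ for the transposition $\sigma=(I,I+1)$; the set $\{b_d^{(J)}\}$ is therefore preserved, which is the second condition in (\ref{groupoid2}).

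The main obstacle is really only the bookkeeping in the second step: one must confirm that the specific off-diagonal blocks appearing in (\ref{braid-for-AIJ}) for $J<I$ and $J>I+1$ sit in block-upper-triangular positions, and that the apparently asymmetric entry $\mathbb A_{I,I}\mathbb A_{I,I}^{-T}$ in (\ref{BII+1}) produces the clean swap $\wtd{\mathbb A_{I+1,I+1}}=\mathbb A_{I,I}$ via the cancellation $\mathbb A_{I,I}\mathbb A_{I,I}^{-T}\mathbb A_{I,I}^T=\mathbb A_{I,I}$. This is exactly the reason for the particular choice of $B_{I,I+1}$ in (\ref{BII+1}); once it is observed, everything reduces to groupoid membership and the Poisson automorphism property is automatic.
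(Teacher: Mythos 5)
Your proposal follows essentially the same route as the paper: the paper likewise derives the block transformation formulas (\ref{braid-for-AIJ}) by direct multiplication and then states that the theorem follows because $B_{I,I+1}$ is an element of the groupoid (\ref{groupoid1})--(\ref{groupoid2}), whose morphisms preserve the Poisson structure (\ref{Poisson}) by construction. The only difference is that you spell out the membership checks (preservation of the block-upper-triangular form, of the unit-determinant condition, and of the set of rational central elements under the swap $\mathbb A_{I,I}\leftrightarrow\mathbb A_{I+1,I+1}$) that the paper dismisses as ``straightforward to verify.''
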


The statement follows from that the transformation (\ref{beta}), (\ref{BII+1}) is a transformation from the
groupoid of block-upper-triangular matrices.

\begin{theorem}
The transformations (\ref{beta}), (\ref{BII+1}) satisfy the braid-group relation,
\be
\label{braid-relation}
\beta_{I,I+1}\beta_{I+1,I+2}\beta_{I,I+1}[{\mathbb A}]
=\beta_{I+1,I+2}\beta_{I,I+1}\beta_{I+1,I+2}[{\mathbb A}],\ I=1,\dots,n-2.
\ee
\end{theorem}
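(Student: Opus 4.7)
The plan is to reduce the identity to a computation for $n=3$, $I=1$, and then compare the two iterated transformations block by block.

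The matrix $B_{I,I+1}$ in (\ref{BII+1}) is the identity outside rows and columns $I, I+1$, so any block $\mathbb A_{J,K}$ with $\{J,K\} \cap \{I,I+1,I+2\} = \emptyset$ is unaffected by either side of (\ref{braid-relation}). Blocks with exactly one index in $\{I,I+1,I+2\}$ transform according to the last two lines of (\ref{braid-for-AIJ}), and after the three successive compositions the result depends only on the net block permutation, which is the transposition $(I, I+2)$ for both sides. Hence it suffices to work in the case $n=3$, $I=1$, where only the six blocks $\mathbb A_{i,j}$, $1\le i\le j\le 3$, can be affected. The action on the three diagonal blocks is immediate: tracking the swap rule of (\ref{braid-for-AIJ}), both $\beta_{1,2}\beta_{2,3}\beta_{1,2}$ and $\beta_{2,3}\beta_{1,2}\beta_{2,3}$ send $(\mathbb A_{1,1}, \mathbb A_{2,2}, \mathbb A_{3,3})$ to $(\mathbb A_{3,3}, \mathbb A_{2,2}, \mathbb A_{1,1})$. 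It remains to verify that the three off-diagonal blocks transform identically, and this is accomplished by three iterated applications of (\ref{braid-for-AIJ}) on each side followed by comparison of the resulting rational expressions.

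A cleaner implementation is to prove the matrix identity
$$
M_1^{(3)} M_2^{(2)} M_1^{(1)} \;=\; M_2^{(3)} M_1^{(2)} M_2^{(1)},
$$
where $M_k^{(j)}$ denotes the elementary block matrix (\ref{BII+1}) (with $k=1$ standing for $B_{I,I+1}$ and $k=2$ for $B_{I+1,I+2}$) evaluated at the $(j-1)$-th intermediate matrix along the appropriate composition. Once this $3\times 3$-block identity is verified, (\ref{braid-relation}) follows by conjugating $\mathbb A$.

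The main obstacle is bookkeeping: each factor in the composition depends on the previous intermediate matrix, and the inverses, transposes, and subtraction terms from (\ref{braid-for-AIJ}) must be tracked carefully. Working systematically in the coordinates $(\mathbb A_{1,1}, \mathbb A_{2,2}, \mathbb A_{3,3}, \mathbb A_{1,2}, \mathbb A_{2,3}, \mathbb A_{1,3})$ and processing the transformations block by block streamlines the verification; no conceptual ingredient beyond (\ref{braid-for-AIJ}) and the reduction above is required.
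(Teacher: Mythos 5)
Your proposal takes essentially the same route as the paper, which proves this theorem (and Proposition \ref{prop-total-braid}) simply ``by the direct calculation''; your reduction to $n=3$, $I=1$ and to the single matrix identity $M_1^{(3)}M_2^{(2)}M_1^{(1)}=M_2^{(3)}M_1^{(2)}M_2^{(1)}$ for the products of the $B$-matrices along the two compositions is a sound and sensible way to organise that calculation, since the total transformation is conjugation by this product and each factor depends only on blocks indexed by $\{I,I+1,I+2\}$. One small caution: your first-paragraph claim that blocks with exactly one index in $\{I,I+1,I+2\}$ are governed ``only by the net block permutation'' is imprecise --- by (\ref{braid-for-AIJ}) they undergo genuine linear mixing --- but this is harmless because the product identity you state subsumes their transformation; like the paper, you leave the final block-by-block verification asserted rather than displayed.
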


We prove this theorem and the following proposition by the direct calculation.

\begin{prop}\label{prop-total-braid}
We have that $\bigl(\beta_{n-1,n}\cdots\beta_{2,3}\beta_{1,2}\bigr)^n[{\mathbb A}]=\wtd{\mathbb A}$, where
${\wtd {{\mathbb A}_{I,J}}}=\left({\mathbb A}_{I,I}{\mathbb A}_{I,I}^{-T}\right)^{n-2}{\mathbb A}_{I,J}
\left({\mathbb A}_{J,J}^{-1}{\mathbb A}_{J,J}^T\right)^{n-2}$ and,
in particular, ${\wtd {{\mathbb A}_{I,I}}}={\mathbb A}_{I,I}\ \forall n$.
\end{prop}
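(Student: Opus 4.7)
The approach is a two–level induction: an inner induction tracking what happens during a single application of $\gamma:=\beta_{n-1,n}\cdots\beta_{2,3}\beta_{1,2}$, followed by an outer induction on the iteration count $\ell=1,\ldots,n$. First I would analyse one $\gamma$ by applying the explicit rules (\ref{braid-for-AIJ}) step by step. By induction on $k=1,\ldots,n-1$, after applying $\beta_{k,k+1}\cdots\beta_{1,2}$ the block originally in position $(1,1)$ has traveled down the diagonal to position $(k+1,k+1)$ untwisted, while the original blocks ${\mathbb A}_{2,2},\ldots,{\mathbb A}_{k+1,k+1}$ have each shifted up one slot. The crucial point is that the first line of (\ref{braid-for-AIJ}) gives the \emph{clean} exchange $\wtd{{\mathbb A}_{I,I}}={\mathbb A}_{I+1,I+1}$, $\wtd{{\mathbb A}_{I+1,I+1}}={\mathbb A}_{I,I}$ with no auxiliary twist. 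Thus one full $\gamma$ cyclically permutes the diagonal blocks,
\[
{\mathbb A}^{(1)}_{I,I}={\mathbb A}_{I+1,I+1}\ (I=1,\ldots,n-1),\qquad {\mathbb A}^{(1)}_{n,n}={\mathbb A}_{1,1},
\]
and consequently $n$ applications restore every diagonal block to its original value. This proves $\wtd{{\mathbb A}_{I,I}}={\mathbb A}_{I,I}$ immediately, with no calculation beyond iterated substitution.

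For the off-diagonal blocks I would set up the ansatz that, after $\ell$ applications of $\gamma$, each block has the form
\[
{\mathbb A}^{[\ell]}_{I,J}=\bigl({\mathbb A}_{I',I'}{\mathbb A}_{I',I'}^{-T}\bigr)^{e_I(\ell)}{\mathbb A}_{I',J'}\bigl({\mathbb A}_{J',J'}^{-1}{\mathbb A}_{J',J'}^{T}\bigr)^{e_J(\ell)},
\]
where $I',J'$ are the original indices of the blocks now occupying positions $I,J$, and $e_I(\ell),e_J(\ell)$ are nonnegative integers counting how many times each moving block has ``wrapped around'' past position $n$ during the $\ell$ iterations. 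The ansatz is suggested by inspecting (\ref{braid-for-AIJ}) term by term: the twist factor ${\mathbb A}_{I,I}{\mathbb A}_{I,I}^{-T}$ is generated exactly at the wrap-around step encoded by the identity $\wtd{{\mathbb A}_{J,I+1}}={\mathbb A}_{J,I}{\mathbb A}_{I,I}^{-1}{\mathbb A}_{I,I}^{T}$, and is \emph{not} generated during ordinary shifts, whose rules only permute blocks. To pass from step $\ell$ to step $\ell+1$ I would substitute the ansatz into (\ref{braid-for-AIJ}) and check that the formulas reproduce the same shape with $e_I,e_J$ incremented in the appropriate positions.

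Counting over the full cycle of $n$ iterations, each block makes exactly one trip around the diagonal. I claim it participates in $n-2$ genuinely twisting steps and $2$ purely transposing steps, the latter being the two iterations in which it is directly swapped with its neighbour by the first line of (\ref{braid-for-AIJ}) instead of being conjugated through. This produces the exponent $n-2$ on both sides, yielding the asserted formula. The main obstacle I expect is the bookkeeping in the inductive step for the off-diagonals: the formulas for $\wtd{{\mathbb A}_{J,I}}$ and $\wtd{{\mathbb A}_{J,I+1}}$ in (\ref{braid-for-AIJ}) are each a linear combination of two source blocks, so propagating the ansatz requires verifying that the ``cross'' contributions telescopically cancel out across successive $\gamma$-iterations, leaving only the purely twisted term. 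To pin down the precise cancellation combinatorics and confirm the exponent $n-2$, I would first carry out $\gamma^{2}$ and $\gamma^{3}$ by hand in the cases $n=3$ and $n=4$, extract the cancellation pattern, and then verify that the pattern propagates under one-step application of (\ref{braid-for-AIJ}).
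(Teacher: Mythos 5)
The paper offers nothing for this proposition beyond the phrase ``by the direct calculation,'' so your plan is not competing with a written argument, and your treatment of the diagonal blocks is correct and complete: the first line of (\ref{braid-for-AIJ}) is a clean swap, so $\gamma:=\beta_{n-1,n}\cdots\beta_{2,3}\beta_{1,2}$ cyclically permutes the diagonal blocks and $\gamma^{n}$ fixes them.

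Your off-diagonal bookkeeping, however, contains a step that would fail for $n\ge 4$. Each of the two indices $I,J$ of a block wraps around exactly \emph{once} during the $n$ iterations of $\gamma$, so an ansatz in which $e_I(\ell),e_J(\ell)$ ``count how many times each moving block has wrapped around'' terminates with $e_I=e_J=1$, not $n-2$; likewise ``$n-2$ genuinely twisting steps'' distributing single factors between two sides cannot produce the exponent $n-2$ on \emph{each} side. The true mechanism is that the entire power $\bigl({\mathbb A}_{I,I}{\mathbb A}_{I,I}^{-T}\bigr)^{n-2}$ is created \emph{within the single application of $\gamma$} at which the given index wraps: during that pass the block rides alongside the co-moving diagonal block through the successive elementary braids, and the rules $\wtd{{\mathbb A}_{I+1,J}}={\mathbb A}_{I,I}{\mathbb A}_{I,I}^{-T}{\mathbb A}_{I,J}$ and $\wtd{{\mathbb A}_{J,I+1}}={\mathbb A}_{J,I}{\mathbb A}_{I,I}^{-1}{\mathbb A}_{I,I}^{T}$ fire once per elementary braid with the \emph{same} diagonal block supplying the factor, so that $n-2$ of the $n-1$ elementary braids contribute a factor and the remaining one contributes the transposition; for example, already after one pass with $n=4$ one finds $\gamma[{\mathbb A}]_{3,4}=\bigl(({\mathbb A}_{1,1}{\mathbb A}_{1,1}^{-T})^{2}{\mathbb A}_{1,4}\bigr)^{T}$. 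Consequently the exponent jumps by $n-2$ at each of the two wrap events and the two contributions land on opposite sides after the second transposition. Relatedly, the cancellation of the two-term ``cross'' contributions occurs between consecutive elementary braids inside one $\gamma$, not telescopically across successive $\gamma$-iterations. Your test case $n=3$ cannot expose any of this because $n-2=1$ there; the $n=4$ computation you also propose would, so the slip is correctable, but as written the inductive ansatz and the exponent count do not cohere and the induction would not close.
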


\subsection{Extension of the braid group action to the twisted Yangian case}

As in the case of the standard twisted Yangian algebra (see~\cite{ChM}), we have the extension of the braid-group action
in the case where the matrix ${\mathbb A}$ has the original block-upper-triangular form with all blocks having the same size $m\times m$.

\begin{prop}\label{prop-braid-Yangian}
The extended braid group transformations for the algebra (\ref{eq:Yangian}) in the case where the matrix ${\mathbb A}$
has the block-upper-triangular form described in Definition~\ref{def-block}
admits the following matrix representation in terms of the matrix ${\mathcal G}(\lambda)$ (\ref{G-def}):
\be\label{RA-infty}
\beta_{I,I+1}[{\mathcal G}(\lambda)]=B_{I,I+1}{\mathcal G}(\lambda)B^T_{I,I+1},\quad I=1,\dots,n-1
\ee
where the matrices $B_{I,I+1}$ have the form (\ref{BII+1}).

The action of $\beta_{n,1}$ is
\be
\beta_{n,1}[{\mathcal G}(\lambda)]={ B}_{n,1}(\lambda){\mathcal G}(\lambda)
\bigr({ B}_{n,1}(\lambda^{-1})\bigl)^T,
\label{R-n1-A-infty}
\ee
where the matrix ${ B}_{n,1}(\lambda)$ has the block form
\be\label{eq:bn1}
B_{n,1}(\lambda)=\left(\begin{array}{ccccc}{\mathbb O}&&&& \lambda {\mathbb A}_{n,n}{\mathbb A}_{n,n}^{-T}\cr
&{\mathbb E}&&&\cr&&\ddots&&\cr&&&{\mathbb E}&\cr -\lambda^{-1}{\mathbb E}&&&&\bigl[ {\mathbb G}_{n,1}^{(1)}\bigr]^T{\mathbb A}_{n,n}^{-T}
\end{array}\right)
\ee
in which ${\mathbb G}_{n,1}^{(1)}$ is the $m\times m$ block in the lower left corner of the $mn\times mn$ matrix ${G}^{(1)}$.
\end{prop}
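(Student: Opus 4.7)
The plan is to verify both claims by direct substitution into the bracket (\ref{eq:Yangian}), exploiting an $R$-matrix reformulation of the latter. One rewrites (\ref{eq:Yangian}) in tensor-product form $\{\one{\mathcal G}(\lambda),\two{\mathcal G}(\mu)\} = \mathcal R(\lambda,\mu)\one{\mathcal G}(\lambda)\two{\mathcal G}(\mu) + \cdots$, where the rational coefficients $\tfrac{\lambda+\mu}{\lambda-\mu}$ and $\tfrac{1+\lambda\mu}{1-\lambda\mu}$ are the characteristic ingredients of the semi-classical twisted Yangian reflection equation. In this language, a transformation $\mathcal G(\lambda)\mapsto B(\lambda)\mathcal G(\lambda)B(\lambda^{-1})^T$ preserves the bracket iff $B$ is a semi-classical $K$-matrix, i.e.\ intertwines the $\mathcal R$ with itself in the appropriate sense.

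For $\beta_{I,I+1}$ with $I=1,\dots,n-1$, the matrix $B_{I,I+1}$ in (\ref{BII+1}) is $\lambda$-independent and is built from zero-level entries of $\mathbb A_{I,I}$ and $\mathbb A_{I,I+1}$. The verification splits into two pieces. The ``constant part,'' in which $B_{I,I+1}$ is treated as Poisson-commuting with $\mathcal G$, reduces to checking that the block form (\ref{BII+1}) intertwines the $\lambda$-independent portion of $\mathcal R$; this is a combinatorial identity on the projectors $P_{\pm,1/2}$ already implicit in Theorem~\ref{automorphism}. The ``correction part'' comes from $\{B_{I,I+1},\mathcal G(\mu)\}$ and $\{B_{I,I+1},B_{I,I+1}\}$; these corrections are governed by (\ref{eq:zero-p}) and cancel exactly because Theorem~\ref{automorphism} already guarantees that the zero-level restriction of (\ref{RA-infty}) is a Poisson automorphism of (\ref{Poisson}), and (\ref{eq:zero-p}) makes the interaction of zero- and positive-level entries into a formal copy of the zero-level bracket.

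The genuinely new case is $\beta_{n,1}$. Here $B_{n,1}(\lambda)$ is a proper $\lambda$-dependent $K$-matrix: its $\lambda$ and $\lambda^{-1}$ entries are tuned to match the pole structure of $\tfrac{\lambda+\mu}{\lambda-\mu}$ and $\tfrac{1+\lambda\mu}{1-\lambda\mu}$ in (\ref{eq:Yangian}), and the corner block $[\mathbb G_{n,1}^{(1)}]^T\mathbb A_{n,n}^{-T}$ is precisely what is needed to make $\wtd{\mathcal G}(\lambda):=B_{n,1}(\lambda)\mathcal G(\lambda)B_{n,1}(\lambda^{-1})^T$ again have block-upper-triangular zero level in the sense of Definition~\ref{def-block} (after the cyclic permutation that swaps blocks $1$ and $n$). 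I would verify invariance by substituting (\ref{eq:bn1}) into the bracket and organising the resulting expressions block by block, exploiting the sparse structure of $B_{n,1}(\lambda)$: only the first $m$ and last $m$ rows/columns are affected nontrivially, which makes the case analysis manageable.

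The main obstacle is the bookkeeping in this last step: poles from $B_{n,1}(\lambda)$ and $B_{n,1}(\lambda^{-1})$, together with those in $\mathcal R(\lambda,\mu)$, must recombine to reproduce exactly the same rational $R$-matrix structure after the transformation. A practical strategy is to expand $\mathcal G(\lambda)=\sum_{p\ge 0}G^{(p)}\lambda^{-p}$ and to check the identity level by level: the zero-level check reduces to a direct computation on $\mathbb A$ of the type already handled in Theorem~\ref{automorphism}, while the interaction with $\mathbb G_{n,1}^{(1)}$ is pinned down uniquely by the first-level compatibility, making $B_{n,1}(\lambda)$ the only natural candidate completing $\{\beta_{I,I+1}\}_{I=1}^{n-1}$ to an action of the affine braid group.
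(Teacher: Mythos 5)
The paper gives no written proof of this proposition: it leans on the fact that $\beta_{I,I+1}$ lies in the groupoid (\ref{groupoid1})--(\ref{groupoid2}), on the direct calculation announced for the zero-level case, and on the $m=1$ construction in \cite{ChM}. Your direct-substitution strategy via the $R$-matrix form of (\ref{eq:Yangian}) is therefore a legitimate alternative route in principle, but as written it has a genuine gap: every decisive cancellation is asserted rather than carried out. In particular, the step that disposes of the ``correction part'' rests on the claim that (\ref{eq:zero-p}) makes the interaction of zero- and positive-level entries into ``a formal copy of the zero-level bracket,'' so that the automorphism property of Theorem~\ref{automorphism} propagates automatically to all levels. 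That is exactly the inference that can fail: the $\pm 1$ shifts in (\ref{eq:zero-p}) distribute the ${\rm sign}$-factors differently between the two cross terms than in (\ref{Poisson}), and the paper's own remark on the symmetric reduction ($\{a_{i,j}-a_{j,i},G^{(p)}_{k,l}\}\neq 0$ although the zero-level restriction to $Sym_N$ is Poisson) is a concrete example of a zero-level Poisson statement that does \emph{not} lift to the affine algebra. So the mixed-level check for $\beta_{I,I+1}$ must actually be performed, not inferred from the zero level.

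The $\beta_{n,1}$ case, which you correctly single out as the genuinely new content, is left entirely as a plan (``I would verify\dots''). Here the computation is not routine bookkeeping: the entries $\lambda\,{\mathbb A}_{n,n}{\mathbb A}_{n,n}^{-T}$ and $-\lambda^{-1}{\mathbb E}$ shift levels in the product $B_{n,1}(\lambda)\mathcal G(\lambda)B_{n,1}(\lambda^{-1})^T$, the corner block $\bigl[{\mathbb G}_{n,1}^{(1)}\bigr]^T{\mathbb A}_{n,n}^{-T}$ is itself a dynamical (level-one) variable with nontrivial brackets governed by (\ref{eq:Yangian}), and one must check both that the zero level of the image is again in $\mathcal A_{n,m}$ and that the full bracket is preserved, including the contributions from $\{G^{(1)}_{n,1},\mathcal G(\mu)\}$. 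None of this is done, so the proposal should be regarded as a correct identification of what needs to be proved rather than a proof.
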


\begin{theorem}\label{braid-Yangian}
The transformations (\ref{RA-infty}), (\ref{R-n1-A-infty}) satisfy the braid-group relation,
\be
\label{braid-relation-Yangian}
\beta_{I,I+1}\beta_{I+1,I+2}\beta_{I,I+1}[{\mathcal G}(\lambda)]
=\beta_{I+1,I+2}\beta_{I,I+1}\beta_{I+1,I+2}[{\mathcal G}(\lambda)],\ I=1,\dots,n\ \mod n.
\ee
\end{theorem}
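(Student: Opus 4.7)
The plan is to verify the braid relation (\ref{braid-relation-Yangian}) by direct block-matrix computation, splitting into two cases according to whether the special generator $\beta_{n,1}$ is involved.

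For $I=1,\dots,n-2$ only the standard generators $\beta_{I,I+1}$ and $\beta_{I+1,I+2}$ appear, and both act on $\mathcal{G}(\lambda)$ by conjugation with the $\lambda$-independent matrices of the form (\ref{BII+1}), whose entries are rational functions of the zero-level blocks $\mathbb{A}_{I,I}$, $\mathbb{A}_{I,I+1}$, $\mathbb{A}_{I+1,I+1}$, $\mathbb{A}_{I+1,I+2}$. After each application of a braid, these zero-level blocks transform as in (\ref{braid-for-AIJ}) and the next matrix $B_{J,J+1}$ is evaluated at the new blocks. To promote the zero-level relation (\ref{braid-relation}) to the affine setting one has to show the stronger statement that the composite matrices $\mathcal{B}^{LHS}=B^{(3)}_{I,I+1}B^{(2)}_{I+1,I+2}B^{(1)}_{I,I+1}$ and $\mathcal{B}^{RHS}=B^{(3)}_{I+1,I+2}B^{(2)}_{I,I+1}B^{(1)}_{I+1,I+2}$ coincide \emph{as matrices}, not merely induce the same quadratic form on $\mathbb{A}$; this is because the higher-level entries $G^{(p)}_{i,j}$, $p\ge 1$, are independent variables transforming by the same conjugation, and the affine relation applied at generic $\mathcal{G}(\lambda)$ forces the matrix-level identity. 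The check is then a direct extension of the zero-level block-by-block calculation already announced in Proposition~\ref{prop-total-braid}.

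The substantive new content is the cases $I=n-1$ and $I=n\pmod{n}$, i.e. the two relations
\begin{align*}
\beta_{n-1,n}\beta_{n,1}\beta_{n-1,n}[\mathcal{G}(\lambda)] &= \beta_{n,1}\beta_{n-1,n}\beta_{n,1}[\mathcal{G}(\lambda)],\\
\beta_{n,1}\beta_{1,2}\beta_{n,1}[\mathcal{G}(\lambda)] &= \beta_{1,2}\beta_{n,1}\beta_{1,2}[\mathcal{G}(\lambda)],
\end{align*}
in which $B_{n,1}(\lambda)$ from (\ref{eq:bn1}) is genuinely $\lambda$-dependent and, crucially, depends on the first-level corner block $G^{(1)}_{n,1}$. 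The strategy is identical: assemble the composite transformation matrices $\mathcal{B}^{LHS}(\lambda)$ and $\mathcal{B}^{RHS}(\lambda)$ and verify
\[
\mathcal{B}^{LHS}(\lambda)\,\mathcal{G}(\lambda)\,\bigl(\mathcal{B}^{LHS}(\lambda^{-1})\bigr)^{T}
= \mathcal{B}^{RHS}(\lambda)\,\mathcal{G}(\lambda)\,\bigl(\mathcal{B}^{RHS}(\lambda^{-1})\bigr)^{T}
\]
block by block in the $\lambda$-expansion. At the $\lambda^{0}$ level this reduces to a zero-level braid identity of the same flavour as in Proposition~\ref{prop-total-braid}, whereas at each nontrivial power of $\lambda$ and $\lambda^{-1}$ it becomes a finite collection of $m\times m$ block equations in the entries of $\mathbb{A}$ and $G^{(1)}$.

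The main obstacle is the bookkeeping of how $G^{(1)}_{n,1}$ itself transforms under the intermediate standard braid: the first application of $\beta_{n,1}$ uses the original corner block $G^{(1)}_{n,1}$, whereas the last application of $\beta_{n,1}$ uses the image of this block under $\beta_{n-1,n}$ (respectively $\beta_{1,2}$), which is read off from (\ref{RA-infty}) by extracting the $\lambda^{-1}$-coefficient of $B_{I,I+1}\mathcal{G}(\lambda)B_{I,I+1}^{T}$. One must then substitute this intermediate value into (\ref{eq:bn1}) for the outer $B_{n,1}(\lambda)$ and verify that the resulting $\lambda$- and $\lambda^{-1}$-contributions match on the two sides. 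Because in (\ref{eq:bn1}) the $\lambda$-dependence enters only through the corner blocks, the verification is ultimately a finite (albeit lengthy) algebraic identity between $m\times m$ blocks of $\mathbb{A}$ and $G^{(1)}$, to be settled by a direct computation carried out in the spirit of Proposition~\ref{prop-braid-Yangian}.
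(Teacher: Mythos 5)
Your proposal is correct and takes essentially the same route as the paper, which establishes this theorem (like the zero-level braid relation and Proposition~\ref{prop-total-braid}) by direct block-matrix calculation and in fact records no more detail than your outline does. The only small caveat is that for $I\le n-2$ the genericity of the higher-level matrices $G^{(p)}$ forces the composite conjugating matrices to agree only up to an overall sign, not necessarily on the nose; this does not affect the strategy.
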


\section{Quantisation}

The affine algebra (\ref{eq:Yangian}) is the semiclassical limit of the quantum algebra generated by the matrix elements
$G_{i,j}^{(p)}$, $i,j=1,\dots,N$, $p\in\mathbb Z_{\geq 0}$ subject to the
defining relations:
\be\label{eq:molev}
R(\lambda,\mu)\one{\mathcal G}(\lambda)R(\lambda^{-1},\mu)^{T_1}\two{\mathcal G}(\mu)=
\two{\mathcal G}(\mu)R(\lambda^{-1},\mu)^{T_1}\one{\mathcal G}(\lambda)R(\lambda,\mu)
\ee
where the apex $T_1$ indicates the transposition in space one and the R-matrix is given by
\begin{eqnarray}\label{eq:qR}
R(\lambda,\mu)&=&(\lambda-\mu)\sum_{i\neq j} E_{ii}\otim E_{jj}+
(q^{-1}\lambda-q\mu)\sum_{i}E_{ii}\otim E_{ii} +\nn \\
&+&(q^{-1}-q)\lambda\sum_{i<j}E_{ij}\otim E_{ji}+(q^{-1}-q)\mu\sum_{i>j} E_{ij}\otim E_{ji}
\end{eqnarray}
it is a solution of the Yang--Baxter equation.

In coordinates the quantum algebra relations are pretty cumbersome,
let us present here the formula for the level $1$ reduction of the quantum algebra,
or in other words, for the quantum analogue of our algebra (\ref{Poisson}):
\bea\label{eq:molev1}
q^{\delta_{s,j}+\delta_{i,j}}a_{i,s} a_{j,t} -q^{\delta_{s,t}+\delta_{i,t}} a_{j,t}  a_{i,s}& =&
\bigr(q-q^{-1}\bigl) q^{\delta_{s,i}}(\delta_{t>s}-\delta_{i>j})a_{j,s} a_{i,t} +\nn\\
&+&\bigr(q-q^{-1}\bigl) \left(q^{\delta_{s,t}}\delta_{t>i} a_{j,i} a_{t,s}-
 q^{\delta_{i,j}}\delta_{s>j} a_{i,j} a_{s,t}\right)+\\
 &+&\bigr(q-q^{-1}\bigl)^2\delta_{s>i}\left(\delta_{t>s}-\delta_{i>j}\right)a_{j,i} a_{s,t} ,
\nn\eea
where $\delta_{i>j}=1$ for $i>j$ and $0$ otherwise. For $m=2$ this quantum algebra coincides with the
twisted quantised enveloping algebra $U^{tw}(\mathfrak{sp}_{2n})$ \cite{noumi,MRS}.

The affine quantum algebra (\ref{eq:molev}) coincides in the case of $m=1$ ($m=2$)
with the twisted $q$-Yangian $Y'_q(\mathfrak{o}_n)$ ($Y'_q(\mathfrak{sp}_{2n})$) for the
orthogonal (symplectic) Lie algebra introduced in \cite{MRS}.
For $m>2$ this algebra has never been studied before to the best of our knowledge.

In the semiclassical limit the affine quantum algebra (\ref{eq:molev})  gives rise to our affine algebra (\ref{eq:Yangian}).
We already calculated this semiclassical limit  in the case $m=1$ in \cite{ChM}, for general $m$ the computation is exactly the same.

In this paper, we construct the quantum braid-group action only in the case of the $\mathfrak{sp}_{2n}$ case, but the main
features of the technique must remain unchanged for both the general $m\times m$-b.u.t. case and for the affine algebras.

\subsection{Quantum braid group action for the $\mathfrak{sp}_{2n}$ case}

In order to quantise the braid group action, we need to find the quantum analogues of the inverse,
$\mathbb A^{-1}_{I,I}$, the transposed, $\mathbb A^{T}_{I,I}$, and the
inverse-transposed, $\mathbb A^{-T}_{I,I}$, matrices for the diagonal blocks and also the transposed
$\mathbb A^{T}_{I,I+1}$ for the off-diagonal blocks.

We first find the laws of  quantum complex conjugation for all the entries $a_{i,j}$
(these formulas are valid for all $n$ and $m$). The main point is that the quantum complex conjugation
needs to be an automorphism of the algebra (\ref{eq:molev}).

From formulas (\ref{eq:molev1}), assuming all the diagonal entries $a_{i,i}$, $i=1,\dots,mn$, to be self-adjoint operators,
we obtain the laws of conjugation:
\be
a^*_{i,i}=a_{i,i},\quad \left\{ \begin{array}{l} a^*_{j,i}=q a_{j,i}\cr a^*_{i,j}=q a_{i,j}+(1-q^2)a_{j,i}\end{array}\ \hbox{for}\ i<j.\right.
\label{conjugation}
\ee
The last formula implies that in the case where we restrict to the b.u.t. case,  the lower-triangular ($i<j$) matrix entries
$a_{j,i}$ not belonging to the diagonal blocks
vanish, and we obtain merely that $a^*_{i,j}=q a_{i,j}$ for all the entries of the matrices $\mathbb A_{I,J}$ with $I<J$.

We then have the following prescription:

(1) All the transposition operations are replaced by the Hermitian conjugations.

Note that the Hermitian conjugation of the diagonal blocks is different
from the Hermitian conjugation of the off-diagonal ones. As an example,
we present the result of the Hermitian conjugation for the block
$\mathbb A_{1,1}$,
\be
\left(\begin{array}{cc}
        a_{11} & a_{12} \\
        a_{21} & a_{22}
      \end{array}
\right)^{\dagger}=
\left(\begin{array}{cc}
        a_{11} &  qa_{21}  \\
        qa_{12}+(1-q^2)a_{21} & a_{22}
      \end{array}
\right),
\label{A11-conjugate}
\ee
and
\be
\label{transpose-IJ}
\mathbb A_{I,J}^\dagger =q [\mathbb A_{I,J}]^T\ \hbox{for}\ J>I,
\ee
where the transposition is understood here and hereafter in the standard matrix sense
(note however that the transpose of the product of two matrices in the quantum case
is not given by the reverse order product of their transposed due to the noncommutativity of
their entries, say, $(\mathbb A_{I,J}\mathbb A_{K,L})^T\neq
\mathbb A_{K,L}^T\mathbb A_{I,J}^T$).

(2) The inverse $\mathbb A_{I,I}^{-1}$ is to be found from the operatorial identity
$\mathbb A_{I,I}\mathbb A_{I,I}^{-1}={\mathbb E}$, in which
the order in which we multiply operators follows from that of the matrix multiplication. In the $m=2$ case,
the result is (we present it for $\mathbb A_{1,1}$, the generalisation to other $2\times 2$-blocks
$\mathbb A_{I,I}$ is obvious)
\be
\left(\begin{array}{cc}
        a_{11} & a_{12} \\
        a_{21} & a_{22}
      \end{array}
\right)^{-1}=\frac1{a_{11}a_{22}-q^2 a_{12}a_{21}}
\left(\begin{array}{cc}
        a_{22} & -a_{12}+(q-q^{-1})a_{21} \\
        -q^2a_{21} & a_{11}
      \end{array}
\right)
\label{A11-inverse}
\ee
Here the combination $a_{11}a_{22}-q^2 a_{12}a_{21}$ is the quantum determinant of the block
$\mathbb A_{1,1}$; all these
determinants are self-adjoint central elements of the quantum algebra $\mathfrak{sp}_{2n}$.

(3) eventually, the inverse-transposed in the quantum case becomes
$[\mathbb A_{I,I}^{-1}]^\dagger=\bigl[\mathbb A_{I,I}^\dagger\bigr]^{-1}$
where we use the expressions (\ref{A11-inverse}) and (\ref{A11-conjugate}); the result is merely (we always assume
that $q:=e^{-i\pi\hbar}$ and, therefore, $q^*=q^{-1}$)
\be
\left(\begin{array}{cc}
        a_{11} & a_{12} \\
        a_{21} & a_{22}
      \end{array}
\right)^{-\dagger}=\frac1{a_{11}a_{22}-q^2 a_{12}a_{21}}
\left(\begin{array}{cc}
        a_{22} & -q^{-1}a_{21}  \\
        -qa_{12} & a_{11}
      \end{array}
\right)
\label{A11-inverse-conjugate}
\ee

\begin{theorem}\label{th:quantum-braid}
For the b.u.t matrix ${\mathbb A}^\hbar$ from Definition~\ref{def-block} whose entries are operators subject to the
conjugation law (\ref{conjugation}) the action of the quantum braid group has the adjoint matrix form
\be
\label{beta-q}
\beta_{I,I+1}^\hbar[{\mathbb A}^\hbar]=B^\hbar_{I,I+1}{\mathbb A}^\hbar \Bigl[B_{I,I+1}^\hbar\Bigr]^\dagger\equiv \wtd{{\mathbb A}^\hbar},
\ee
where the matrix $B^\hbar_{I,I+1}$ has the block form (here and hereafter we assume that all matrix entries are operators and
that these operators are multiplied in the order prescribed by the matrix multiplication)
\be
\label{BII+1-q}
B^\hbar_{I,I+1}=\begin{array}{c}\vdots\cr I\cr I+1\cr \vdots \end{array}\left(\begin{array}{cccccccc}{\mathbb E}&&&&&&&\cr
&\ddots&&&&&&\cr&&{\mathbb E}&&&&&\cr&&&q^{-a}{\mathbb A}_{I,I+1}^\dagger{\mathbb A}_{I,I}^{-\dagger}&-q^{-a}{\mathbb E}&&&\cr
&&&q^{-b} {\mathbb A}^{}_{I,I}{\mathbb A}_{I,I}^{-\dagger}&{\mathbb O}&&&\cr
&&&&&{\mathbb E}&&\cr&&&&&&\ddots&\cr&&&&&&&{\mathbb E}
\end{array}\right),
\ee
where $a=b+1$ for all $m$ and $b=0$ for $m=1$ and $b=1$ for $m=2$.
\end{theorem}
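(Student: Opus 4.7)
The strategy is a direct verification. I would first unpack the block-matrix expression (\ref{beta-q}) using (\ref{BII+1-q}) together with the explicit quantum inverse, transpose and conjugate-inverse formulas (\ref{A11-inverse}), (\ref{transpose-IJ}) and (\ref{A11-inverse-conjugate}), to obtain block-by-block quantum analogues of the classical rules (\ref{braid-for-AIJ}). Because the entries of $\mathbb{A}^\hbar$ are operators, each block product introduces reordering $q$-corrections, and the undetermined $q$-powers $a$ and $b$ in (\ref{BII+1-q}) are precisely the scalar normalisations needed to absorb them. As a sanity check, setting $q=1$ must recover (\ref{BII+1}), which is automatic from the construction.

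I would then carry out three consistency checks, any one of which overdetermines $(a,b)$ and so simultaneously verifies and pins down the scaling powers. First, the new diagonal blocks $\wtd{\mathbb{A}^\hbar_{I,I}}$ and $\wtd{\mathbb{A}^\hbar_{I+1,I+1}}$ must have quantum determinant equal to $1$, i.e.\ satisfy Definition~\ref{def-block} at the quantum level; this fixes one combination of $a$ and $b$. Second, the transformed entries must respect the conjugation law (\ref{conjugation}), which uses $q^{*}=q^{-1}$ and the identity $\mathbb{A}_{I,I}^{-\dagger}=\bigl[\mathbb{A}_{I,I}^{\dagger}\bigr]^{-1}$; this fixes the remaining combination, yielding $a=b+1$ and the values $b=0,1$ for $m=1,2$ respectively. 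Third, and most importantly, the new entries must still satisfy the quantum algebra (\ref{eq:molev}).

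The main obstacle is the third check. For this I would localise to the $2m\times 2m$ window consisting of the blocks $\mathbb{A}_{I,I},\mathbb{A}_{I,I+1},\mathbb{A}_{I+1,I+1}$, where all the nontrivial rearrangement takes place; relations between entries of this window and the rest of the matrix reduce, by the block-diagonal structure of $B^{\hbar}_{I,I+1}$ outside the window, to the original RLL relations for $\mathbb{A}^\hbar$. Inside the window, the verification is a Yang--Baxter style calculation: one rewrites (\ref{beta-q}) as a similarity transformation and checks the intertwining of $B^\hbar_{I,I+1}$ with the R-matrix (\ref{eq:qR}). For $m=1$ this is the known quantum Dubrovin--Ugaglia / Nelson--Regge braid action, and for $m=2$ it amounts to a direct check that $\beta^\hbar_{I,I+1}$ is an automorphism of $U^{tw}_q(\mathfrak{sp}_{2n})$, performed by expanding the RLL identity (\ref{eq:molev1}) on both sides and matching term by term using the quantum commutation relations of the entries of $\mathbb{A}_{I,I}$, $\mathbb{A}_{I,I+1}$ and $\mathbb{A}_{I+1,I+1}$. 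The braid relations themselves, although not part of the statement, follow from either a direct computation parallel to the proof of (\ref{braid-relation}) or from the classical limit together with the fact that the $q$-corrections affect only the normalisation of each generator, not the algebraic skeleton of the word.
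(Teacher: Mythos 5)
Your proposal follows essentially the same route as the paper: the exponents $a$ and $b$ are pinned down by demanding that the transformed blocks obey the conjugation law (\ref{conjugation}) (the paper does this for $n=3$, where the $(1,2)$ block forces $a=b+1$ and the $(2,3)$ block forces $b=0$ for $m=1$ and $b=1$ for $m=2$ via a nontrivial computation), after which preservation of the relations (\ref{eq:molev1}) is checked by direct entry-by-entry computation. One small caution: the blocks ${\mathbb A}_{I,J}$ with $J>I+1$ (and their mirror images) are transformed nontrivially by (\ref{braid-for-AIJ}), so the verification cannot be fully localised to the $2m\times 2m$ window --- it is precisely the $(I+1,J)$ block that detects the $m$-dependence of $b$.
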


Before proving the theorem, we formulate the following lemma, which can be verified by a simple calculation.

\begin{lm}\label{lemma-q-braid}
The two quantum braid-group relations
\be
\label{q-braid}
\beta_{I,I+1}^\hbar\beta_{I-1,I}^\hbar\beta_{I,I+1}^\hbar[{\mathbb A}^\hbar]
=\beta_{I-1,I}^\hbar\beta_{I,I+1}^\hbar\beta_{I-1,I}^\hbar[{\mathbb A}^\hbar],\quad I=2,\dots,n-1,
\ee
and
\be
\label{q-total-braid}
\bigl(\beta_{n-1,n}^\hbar\beta_{n-2,n-1}^\hbar\cdots\beta_{2,3}^\hbar\beta_{1,2}^\hbar\bigr)^n[{\mathbb A}_{I,J}]
=\bigl({\mathbb A}_{I,I}^{}{\mathbb A}_{I,I}^{-\dagger}\bigr)^{n-2}{\mathbb A}_{I,J}
\bigl({\mathbb A}_{J,J}^{-1}{\mathbb A}_{J,J}^{\dagger}\bigr)^{n-2},
\ee
which are the quantum analogues of the respective relation (\ref{braid-relation}) and Proposition \ref{prop-total-braid},
are satisfied for any choice of the parameters $a$ and $b$ in the formula (\ref{BII+1-q}).
\end{lm}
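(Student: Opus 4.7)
The plan is to verify both identities by direct but localised computation, taking as guides the classical braid relation (\ref{braid-relation}) and the classical Proposition~\ref{prop-total-braid}, both of which are already established. The new ingredient in the quantum setting is the scalar factors $q^{-a}$ and $q^{-b}$ that appear in (\ref{BII+1-q}); the whole content of the lemma is that the braid identities are insensitive to these scalars. For the first identity (\ref{q-braid}), note that each $\beta^\hbar_{J,J+1}$ only modifies blocks $\mathbb A_{I,K}$ with $\{I,K\}\cap\{J,J+1\}\neq\emptyset$, so both sides of (\ref{q-braid}) agree automatically on blocks outside the window $\{I-1,I,I+1\}$. It therefore suffices to check (\ref{q-braid}) in the case $n=3$, i.e.\ on the nine blocks $\mathbb A_{J,K}$ with $J,K\in\{1,2,3\}$.

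I would first record the block-by-block action of $\beta^\hbar_{J,J+1}$ as a quantum analogue of (\ref{braid-for-AIJ}): every classical transpose is replaced by the Hermitian conjugate determined by (\ref{conjugation}); inverses of diagonal blocks are computed via (\ref{A11-inverse}); and each factor of the form $\mathbb A^\dagger_{J,J+1}\mathbb A^{-\dagger}_{J,J}$ or $\mathbb A_{J,J}\mathbb A^{-\dagger}_{J,J}$ carries the extra scalar $q^{-a}$ or $q^{-b}$ respectively. I would then apply these rules three times on each side of (\ref{q-braid}) and match the resulting blocks one by one. The parameter independence is the crucial point: in every sandwich $B^\hbar_{J,J+1}\,\mathbb A^\hbar\,[B^\hbar_{J,J+1}]^\dagger$ the scalar $q^{-a}$ sitting inside $B^\hbar_{J,J+1}$ is paired with its conjugate $q^{a}$ coming from $[B^\hbar_{J,J+1}]^\dagger$ (using $q^*=q^{-1}$), and likewise for $q^{-b}$. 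Expanding the two triple compositions, each resulting block entry acquires the same total $q$-power from these scalars on the left- and the right-hand sides, and the $a,b$-dependence drops out identically. The same mechanism handles (\ref{q-total-braid}): inducting on $n$ while tracking a single block $\mathbb A_{I,J}$ through $\sigma:=\beta^\hbar_{n-1,n}\cdots\beta^\hbar_{1,2}$, the only deviation from the classical Proposition~\ref{prop-total-braid} is the accumulation of $q^{-a}$ and $q^{-b}$ scalars, which again pair with their conjugates and cancel.

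The main obstacle is the noncommutativity of the matrix entries: products such as $\mathbb A^\dagger_{J,J+1}\mathbb A^{-\dagger}_{J,J}$ cannot be freely reordered, so the intermediate expressions arising in $\beta^\hbar_{I-1,I}\beta^\hbar_{I,I+1}\beta^\hbar_{I-1,I}$ are not manifestly equal to those arising in $\beta^\hbar_{I,I+1}\beta^\hbar_{I-1,I}\beta^\hbar_{I,I+1}$ and must be rewritten, using the defining relations (\ref{eq:molev1}), into a common normal form before they can be compared. For $m=1$ and $m=2$ each block is of size at most $2\times 2$, so this bookkeeping is finite and mechanical; nonetheless, it is the step that makes the ``simple calculation'' invoked in the lemma actually require care, and it is precisely where the shape of the $q$-commutation relations in (\ref{eq:molev1}) is used to close the identity.
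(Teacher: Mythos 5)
Your overall plan---verify both identities by a direct block-by-block computation---is exactly what the paper does: its entire proof of this lemma is the phrase ``which can be verified by a simple calculation'', so there is no deeper idea you are missing. However, two points in your write-up would need repair before the computation actually closes. First, the mechanism you single out as ``the crucial point'' is not correct as stated: in a single sandwich $B^\hbar_{I,I+1}\,{\mathbb A}^\hbar\,[B^\hbar_{I,I+1}]^\dagger$ the scalars do \emph{not} all pair off as $q^{-a}\cdot q^{a}$. Only the diagonal blocks of the window see that cancellation; the off-diagonal blocks acquire net factors $q^{b-a}$, $q^{-a}$ and $q^{-b}$ respectively --- this is visible in the paper's own display of $\beta^\hbar_{1,2}({\mathbb A}^\hbar)$ in the proof of Theorem~\ref{th:quantum-braid}, whose $(1,2)$, $(1,3)$ and $(2,3)$ blocks carry exactly these powers. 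Moreover, when the second and third braid matrices are built from the already-transformed blocks, these residual scalars are fed through further $\dagger$'s and inverses inside (\ref{BII+1-q}) and get conjugated again. So the insensitivity to $a$ and $b$ is not a one-line consequence of $q^*=q^{-1}$; it only emerges after tracking the accumulated powers of $q^{a}$ and $q^{b}$ along every path through the triple composition and checking that the two sides agree path by path. Your plan to ``apply the rules three times and match blocks one by one'' would in fact reveal and absorb this, but the heuristic you offer in its place would not survive the first off-diagonal block.

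Second, the reduction of (\ref{q-braid}) to $n=3$ is incomplete. For the relation at position $I$ inside a larger matrix, both sides also act nontrivially on the spectator blocks ${\mathbb A}_{J,K}$ having exactly one index in the window $\{I-1,I,I+1\}$ (e.g.\ ${\mathbb A}_{J,I}$ with $J<I-1$), via the quantum analogue of the last two lines of (\ref{braid-for-AIJ}); these blocks simply do not exist in the $n=3$ check. They transform by right (respectively left) multiplication by a $3\times3$ block matrix built only from the window blocks, uniformly in the spectator index, so a single additional verification --- that the two composed multiplier matrices coincide --- suffices, but it must be stated. With these two repairs your argument reduces to the same brute-force verification the authors invoke, including the normal-ordering via (\ref{eq:molev1}) that you correctly identify as the laborious step.
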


\proof{\it of Theorem~\ref{th:quantum-braid}}.\quad We need to prove
that the quantum algebra (\ref{eq:molev1}) is preserved under the
quantum action (\ref{beta-q}). It is enough to restrict to the case
of $n=3$ and concentrate on the action of $\beta_{1,2}^\hbar$:
$$
\beta_{1,2}^\hbar(\mathbb A^\hbar)=\wtd{\mathbb A}^\hbar=\left(\begin{array}{ccc}
\mathbb A_{2,2}&q^{-a+b} \mathbb A_{1,2}^\dagger &
q^{-a}(\mathbb A_{1,2}^\dagger\mathbb A_{1,1}^{-\dagger}\mathbb A_{1,3}-\mathbb A_{2,3})\\
\mathbb O&\mathbb A_{1,1}&q^{-b}(\mathbb A_{1,1}\mathbb A_{1,1}^{-\dagger}\mathbb A_{1,3})\\
\mathbb O&\mathbb O&\mathbb A_{3,3}\\
\end{array}\right).
$$
In order for the $(1,2)$ matrix entry to have the same conjugation law (\ref{transpose-IJ}) as the original operator ${\mathbb A}_{1,2}$
we need that $-a+b=-1$, or $a=b+1$ for all $m$. However, when considering the $(2,3)$ matrix entry, we observe the explicit difference
between cases where $m=1$ and $m=2$. If $m=1$, then ${\mathbb A}_{1,1}=1$, and we just have $q^{-b}{\mathbb A}_{1,3}$, so, in order to preserve the
conjugation law we must merely set $b=0$. But in the case of $2\times 2$-matrices the situation becomes different: we need a rather
nontrivial calculation, which involves repeated applications of the commutation relations (\ref{eq:molev1}), to demonstrate that
$$
\bigl({\mathbb A}^{}_{1,1}{\mathbb A}_{1,1}^{-\dagger}{\mathbb A}_{1,3}\bigr)^\dagger
=q^{-1}\bigl[{\mathbb A}_{1,1}^{}{\mathbb A}_{1,1}^{-\dagger}{\mathbb A}_{1,3}\bigr]^T\ \hbox{for}\ m=2.
$$
So, we need to set $b=1$ for $m=2$ in order to preserve the conjugation law.

The proof of preserving the algebra is then a straightforward brute force computation in which one needs to check relations (\ref{eq:molev1})
entry by entry.
\endproof

\begin{remark}\label{rem:autom}
The transformation $\beta_{1,2}$ given by the formula (\ref{beta-q}) in the case of $2\times 2$-block matrix ${\mathbb A}=
\left(\begin{array}{cc}
        {\mathbb A}_{1,1} & {\mathbb A}_{1,2}  \\
        {\mathbb O} & {\mathbb A}_{2,2}
      \end{array}
\right)$ results in that ${\mathbb A}_{1,1}\to {\mathbb A}_{2,2}$, ${\mathbb A}_{2,2}\to {\mathbb A}_{1,1}$, and
${\mathbb A}_{1,2}\to \bigl[{\mathbb A}_{1,2}\bigr]^T$. It is easy to see from the algebra (\ref{eq:molev}) that this
transformation is an automorphism of the corresponding quantum algebra. This automorphism was stated as Theorem~4.8 in~\cite{MR}.
\end{remark}

\begin{conjecture}
It is plausible that the braid-group action in the general case of $m\times m$-matrix blocks has the same
form (\ref{beta-q}) and (\ref{BII+1-q}) with $a=b+1$ and $b=m-1$ where we have to determine the inverse matrix ${\mathbb A}_{I,I}^{-1}$ from the
operatorial equality ${\mathbb A}_{I,I}{\mathbb A}_{I,I}^{-1}={\mathbb E}$ and use the conjugation rules (\ref{conjugation}).
\end{conjecture}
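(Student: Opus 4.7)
The plan is to mirror the proof of Theorem~\ref{th:quantum-braid} in the general $m\times m$ block setting. By Lemma~\ref{lemma-q-braid} the braid-group relations (\ref{q-braid}) and (\ref{q-total-braid}) already hold for arbitrary values of $a$ and $b$ in (\ref{BII+1-q}), so they carry no information about the correct exponents. The content of the conjecture therefore reduces to two checks: (i) that $\beta_{I,I+1}^{\hbar}$ sends b.u.t.\ matrices (as in Definition~\ref{def-block}) to b.u.t.\ matrices and preserves the conjugation law (\ref{conjugation}); and (ii) that $\beta_{I,I+1}^{\hbar}$ is an automorphism of the quantum algebra (\ref{eq:molev}). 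By the block-local nature of the transformation it suffices, as in the proof of Theorem~\ref{th:quantum-braid}, to work with $n=3$ and the single generator $\beta_{1,2}^{\hbar}$.

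First I would pin down $a$ and $b$. Computing $\wtd{{\mathbb A}^{\hbar}}=\beta_{1,2}^{\hbar}({\mathbb A}^{\hbar})$ block-by-block using the operator definition ${\mathbb A}_{1,1}{\mathbb A}_{1,1}^{-1}={\mathbb E}$, the $(1,2)$ block takes the form $q^{-a+b}{\mathbb A}_{1,2}^{\dagger}$, and demanding that it still satisfies (\ref{transpose-IJ}) yields, independently of $m$, the single constraint $a=b+1$. The second constraint comes from the $(2,3)$ block $q^{-b}{\mathbb A}_{1,1}{\mathbb A}_{1,1}^{-\dagger}{\mathbb A}_{1,3}$: preservation of (\ref{transpose-IJ}) is equivalent to the operatorial identity
\[
\bigl({\mathbb A}_{1,1}{\mathbb A}_{1,1}^{-\dagger}{\mathbb A}_{1,3}\bigr)^{\dagger}
= q^{-(m-1)}\bigl[{\mathbb A}_{1,1}{\mathbb A}_{1,1}^{-\dagger}{\mathbb A}_{1,3}\bigr]^{T},
\]
which reproduces $b=0$ for $m=1$ and $b=1$ for $m=2$ in agreement with Theorem~\ref{th:quantum-braid} and forces $b=m-1$ in general.

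Proving this operatorial identity for arbitrary $m$ is the main obstacle. The plan is to first establish the general product rule $(XY)^{\dagger}=Y^{\dagger}X^{\dagger}$ for matrices whose entries are quantum operators with conjugation law (\ref{conjugation}), and then expand ${\mathbb A}_{1,1}^{-1}$ as the quantum adjugate divided by the quantum determinant of ${\mathbb A}_{1,1}$ (the latter being central and self-adjoint, as noted after (\ref{A11-inverse})). The combinatorial heart of the matter is that each commutation of an entry of ${\mathbb A}_{1,3}$ past an off-diagonal entry of ${\mathbb A}_{1,1}$ via (\ref{eq:molev1}) carries one factor of $q$, while each row of ${\mathbb A}_{1,1}$ contains $m-1$ off-diagonal entries; a careful bookkeeping, most naturally done by induction on $m$ along the Laplace expansion of the quantum determinant, should then collect all the $q$-factors into the single scalar $q^{-(m-1)}$.

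Once $a=b+1$ and $b=m-1$ are in place, step (ii) is a brute-force verification of the quantum reflection relation (\ref{eq:molev1}) entry-by-entry on $\wtd{{\mathbb A}^{\hbar}}$. The cases split naturally into three families: pairs of entries within the swapped diagonal blocks, pairs within $\wtd{{\mathbb A}_{1,2}}={\mathbb A}_{1,2}^{\dagger}$, and pairs involving the outer block $\wtd{{\mathbb A}_{1,3}}$. The first two families reduce to identities that already appeared in the $m=2$ proof of Theorem~\ref{th:quantum-braid}; the third is the combinatorially most delicate, but the $i\leftrightarrow j$, $s\leftrightarrow t$ symmetry of (\ref{eq:molev1}) together with the centrality of the quantum determinants of the diagonal blocks should cut the number of independent checks down to a tractable list.
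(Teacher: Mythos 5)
This statement is left as a conjecture in the paper: the authors prove the cases $m=1$ and $m=2$ in Theorem~\ref{th:quantum-braid} and offer no argument for general $m$, so there is no ``paper proof'' to compare against. Your plan is the natural one and follows the same route as the $m=2$ proof: fix $a=b+1$ from the $(1,2)$ block, fix $b$ from the conjugation law of the $(2,3)$ block, then verify the relations (\ref{eq:molev1}) entry by entry. As a \emph{plan} it is sound, and your heuristic for why $b=m-1$ (one factor of $q$ per off-diagonal entry in a row of ${\mathbb A}_{I,I}$) is a useful observation that the paper does not spell out.

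However, as a proof it has a genuine gap, and it is exactly the gap that makes the statement a conjecture. The entire content lies in the operatorial identity $\bigl({\mathbb A}_{1,1}{\mathbb A}_{1,1}^{-\dagger}{\mathbb A}_{1,3}\bigr)^{\dagger}=q^{-(m-1)}\bigl[{\mathbb A}_{1,1}{\mathbb A}_{1,1}^{-\dagger}{\mathbb A}_{1,3}\bigr]^{T}$ and in the subsequent verification that (\ref{beta-q}) preserves (\ref{eq:molev1}); you defer both to ``careful bookkeeping'' and ``should cut the number of checks down,'' which is an announcement of a computation, not the computation. The paper itself stresses that already for $m=2$ this is ``a rather nontrivial calculation'' requiring repeated use of (\ref{eq:molev1}), and that naive transposition rules fail for products of operator-valued matrices. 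Two specific points would need to be established before your induction could even start: (i) that the quantum determinant of an $m\times m$ diagonal block is central and invertible in the algebra (\ref{eq:molev1}) for $m>2$, so that the adjugate expansion of ${\mathbb A}_{I,I}^{-1}$ you rely on makes sense -- the paper only asserts this for $m=2$ after (\ref{A11-inverse}); and (ii) that Lemma~\ref{lemma-q-braid}, which you invoke to dismiss the braid relations (\ref{q-braid}), (\ref{q-total-braid}) as carrying no information, actually holds for general $m$ -- it is stated inside the $\mathfrak{sp}_{2n}$ subsection and verified there only in that setting. Until the $q$-counting argument is carried out in full (including showing that no cross terms from the quadratic and $(q-q^{-1})^2$ pieces of (\ref{eq:molev1}) spoil the single overall factor $q^{-(m-1)}$), the proposal remains a plausible strategy rather than a proof.
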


\begin{conjecture}
In the twisted Yangian case (\ref{eq:molev}) for $m=2$ we expect to extend the same quantum
conjugation relations to all levels and to quantise the element $\beta_{n,1}$ defined by (\ref{eq:bn1})  to
$$
\beta_{n,1}^\hbar[{\mathbb A}^\hbar]=B^\hbar_{n,1}(\lambda){\mathbb A}^\hbar \Bigl[B_{n,1}^\hbar(\lambda^{-1})\Bigr]^\dagger,$$
where the matrix $B^\hbar_{n,1}(\lambda)$ has the block form:
$$
B_{n,1}^\hbar(\lambda)=\left(\begin{array}{ccccc}{\mathbb O}&&&& \lambda q^{-b}{\mathbb A}^{}_{n,n}{\mathbb A}_{n,n}^{-\dagger}\cr
&{\mathbb E}&&&\cr&&\ddots&&\cr&&&{\mathbb E}&\cr
-\lambda^{-1}q^{-b-1}{\mathbb E}&&&&q^{-b-1}\bigl[ {\mathbb G}_{n,1}^{(1)}\bigr]^\dagger{\mathbb A}_{n,n}^{-\dagger}
\end{array}\right)
$$
with $b=m-1$.
\end{conjecture}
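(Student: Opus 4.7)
The plan is to proceed in parallel with the proof of Theorem~\ref{th:quantum-braid}, but now adapted to the affine twisted $q$-Yangian setting (\ref{eq:molev}) and the additional generator $\beta_{n,1}$. Classically, the transformation (\ref{R-n1-A-infty})--(\ref{eq:bn1}) is the only braid element that genuinely mixes the zero-level block $\mathbb{A}$ with the first-level block $\mathbb{G}^{(1)}_{n,1}$, so the quantum version must respect both the conjugation law and the reflection algebra simultaneously, at all levels.

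First I would determine the quantum conjugation rules for all higher-level generators. The natural ansatz, guided by (\ref{conjugation}), is
$$
\bigl(G^{(p)}_{i,i}\bigr)^*=G^{(p)}_{i,i},\qquad \bigl(G^{(p)}_{j,i}\bigr)^*=qG^{(p)}_{j,i},\qquad \bigl(G^{(p)}_{i,j}\bigr)^*=qG^{(p)}_{i,j}+(1-q^2)G^{(p)}_{j,i}\ \text{for}\ i<j,
$$
independently of $p$. Consistency with (\ref{eq:molev}) would be checked by comparing the $*$-image of both sides, exploiting the elementary identity $R(\lambda,\mu)^*=R(\lambda^{-1},\mu^{-1})$ (with the appropriate rescaling) so that the reflection equation is sent to itself after swapping $\lambda\leftrightarrow\lambda^{-1}$, $\mu\leftrightarrow\mu^{-1}$. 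This is level-wise combinatorics, very much in the spirit of the proof of Theorem~\ref{th:quantum-braid}.

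Second, I would verify that the candidate $\beta_{n,1}^\hbar[\mathcal{G}(\lambda)]=B^\hbar_{n,1}(\lambda)\mathcal{G}(\lambda)\bigl[B^\hbar_{n,1}(\lambda^{-1})\bigr]^\dagger$ is an automorphism of (\ref{eq:molev}). The determination of the exponent $b=m-1=1$ would be forced by demanding that the new entries in $\widetilde{\mathcal{G}}(\lambda)$ obey the same conjugation law as the original ones; the computation for the block entries in positions $(n,n)$, $(1,1)$, and $(n,1)$ is the analogue of the check in Theorem~\ref{th:quantum-braid} relating the $(2,3)$ entry to $q^{-1}[\mathbb{A}^{}_{1,1}\mathbb{A}_{1,1}^{-\dagger}\mathbb{A}_{1,3}]^T$, and should similarly use repeated substitutions of (\ref{eq:molev1}). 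Preservation of the full reflection equation then reduces, by the standard argument, to a compatibility identity between $B_{n,1}^\hbar(\lambda)$ and $R(\lambda,\mu)$ that is $\lambda$-local and thus checkable block-by-block.

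Third, the braid-group relations must be verified: $\beta_{n,1}^\hbar$ must satisfy the two adjacent relations with $\beta_{1,2}^\hbar$ and with $\beta_{n-1,n}^\hbar$ in the sense of (\ref{braid-relation-Yangian}), and commute with the remaining $\beta_{I,I+1}^\hbar$. The first two relations are the main obstacle, since they involve compositions in which the $\lambda$-dependent matrix $B^\hbar_{n,1}(\lambda)$ is conjugated by the $\lambda$-independent $B^\hbar_{I,I+1}$, producing off-diagonal quantum corrections that must cancel. I expect this to reduce to a finite number of identities involving the quantum determinants of the diagonal $2\times 2$ blocks (which are already known to be central from Theorem~\ref{th:new-central}) together with the quantum analogue of Proposition~\ref{prop-total-braid}, verified by the same brute-force entry-by-entry computation as in Lemma~\ref{lemma-q-braid}, but now carried out at the first affine level. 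Once these identities hold at level one, induction on the level together with the algebra relations (\ref{eq:molev}) should propagate them to all higher levels.
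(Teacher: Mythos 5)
The statement you are addressing is presented in the paper as a \emph{conjecture}, and the authors supply no proof of it at all; there is therefore no argument of theirs to compare yours against. Judged on its own terms, what you have written is a programme rather than a proof: each of your three stages ends by deferring the decisive verification to a ``brute force'' or ``level-wise'' computation that is never carried out, and it is precisely in those computations that the content of the conjecture lies --- which is presumably why the authors, having done the analogous finite computation for Theorem~\ref{th:quantum-braid} and Lemma~\ref{lemma-q-braid}, stopped short of claiming this one. In particular the value $b=m-1$ is not ``forced'' by anything you actually compute; in the proof of Theorem~\ref{th:quantum-braid} the determination of $b$ required a nontrivial chain of applications of (\ref{eq:molev1}) to show $\bigl({\mathbb A}_{1,1}{\mathbb A}_{1,1}^{-\dagger}{\mathbb A}_{1,3}\bigr)^\dagger=q^{-1}\bigl[{\mathbb A}_{1,1}{\mathbb A}_{1,1}^{-\dagger}{\mathbb A}_{1,3}\bigr]^T$, and the analogous identity for the entries of $B^\hbar_{n,1}(\lambda){\mathcal G}(\lambda)\bigl[B^\hbar_{n,1}(\lambda^{-1})\bigr]^\dagger$, which now involves the level-one block ${\mathbb G}^{(1)}_{n,1}$, is nowhere attempted.

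Two concrete points in your plan also look shaky. First, your ansatz that the conjugation law (\ref{conjugation}) extends verbatim to every level $p$, independently of $p$ and at fixed spectral parameter, sits uneasily with the structure of the twisted Yangian: the $k$-level reductions (\ref{k-level})--(\ref{k-level-sym}) pair level $i$ with level $k-i$ under transposition, and the conjectured formula itself involves $B^\hbar_{n,1}(\lambda^{-1})^\dagger$, so the $*$-structure is expected to invert the spectral parameter rather than act level by level; compare also the paper's observation that the symmetric reduction ${\mathbb A}\in Sym_{N}$ is \emph{not} compatible with the affine algebra because $\{a_{i,j}-a_{j,i},G^{(p)}_{k,l}\}\neq 0$. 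Second, the identity $R(\lambda,\mu)^*=R(\lambda^{-1},\mu^{-1})$ that you invoke is false as written: with $q^*=q^{-1}$ the coefficient-conjugate of (\ref{eq:qR}) equals $-\lambda\mu\,R(\lambda^{-1},\mu^{-1})$ only after an additional transposition in both tensor factors (which exchanges the $\sum_{i<j}$ and $\sum_{i>j}$ terms), and a genuine $*$-operation moreover reverses the order of the operator products in (\ref{eq:molev}). Exactly this kind of bookkeeping is what fixes the powers of $q$ in $B^\hbar_{n,1}(\lambda)$, so until the conjugation law for ${\mathbb G}^{(1)}_{n,1}$ is pinned down and the transformed $(n,n)$, $(1,1)$ and $(n,1)$ blocks are checked against it, neither the claimed form of $B^\hbar_{n,1}(\lambda)$ nor $b=m-1$ is established.
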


\subsection{$R$-matrix role in the classical case}

The classical Poisson bracket (\ref{Poisson}) was obtained by imposing
\be\label{eq:PBd}
\{a_{i,j},a_{k,l}\}=
\frac{\partial}{\partial{\rm d}a_{i,j}}\wedge\frac{\partial}{\partial{\rm d}a_{k,l}} S,
\ee
where $S$ is the Poisson bi-vector computed on the one forms ${\rm d}a_{i,j}$ and ${\rm d}a_{k,l}$:
$$
S=\Tr\left({\rm d}a_{i,j}D_{\mathbb A}P_{\mathbb A}({\rm d} a_{k,l})\right).
$$
In this subsection we want to show that  (\ref{eq:PBd}) can also be written in $R$--matrix notation as:
\be\label{eq:PR}
\{\one{\mathbb A},\two{\mathbb A}\}=-\left(\two{\mathbb A}\one{\mathbb A} r-
r\one{\mathbb A}\two{\mathbb A}+\two{\mathbb A}r^{T_1}\one{\mathbb A}-
\one{\mathbb A}r^{T_1}\two{\mathbb A}\right),
\ee
where $r$ is given by
$$
r=\sum_{i} E_{ii}\otimes E_{ii}+2 \sum_{i>j}E_{ij}\otimes E_{ji}.
$$
Of course, we could just do a brute force computation to simply prove that the two formulae coincide.
However,  we prefer to show a more general proof which relies on the
relation between the $P_{\mathbb A}$ operator and the classical $R$-matrix.

Given any matrix $X$, let us define the following four operators
$\one r_{\pm}$, $\two r_{\pm}$:
\bea\label{r-operators}
&&
\one r_{+}(X):=\frac{1}{2} \two\Tr\left(r^T \two X\right),\qquad
\two r_{+}(X):=\frac{1}{2} \one\Tr\left(r \one X\right),\\
&&
\one r_{-}(X):=\frac{1}{2} \two\Tr\left(r \two X\right),\qquad
\two r_{-}(X):=\frac{1}{2}  \one\Tr\left(r^T \one X\right),\nn
\eea
then our matrix $g:=P_{\mathbb A}(w)$ is given by two different formulae according to which space we want it in:
\be\label{eq:par1}
\one g=\one r_{-}\left(\two w \two{\mathbb A}\right)-\one r_{+}\left(\two w^T \two{\mathbb A}^T\right),
\ee
\be\label{eq:par2}
\two g=\two r_{-}\left(\one w \one{\mathbb A}\right)-\two r_{+}\left(\one w^T \two{\mathbb A}^T\right),
\ee
where $r^T$ is the transposition in both spaces.

Now, it is clear that in $R$-matrix notation $S$ is given by
\bea
S&=&
\Tr\bigl[(d{\mathbb A})^T{\mathbb A}r_{-}[(d{\mathbb A})^T{\mathbb A}]-
(d{\mathbb A})^T{\mathbb A}r_{+}[d{\mathbb A}{\mathbb A}^T]\bigr.
\nonumber\cr
&{}&+\bigl.{\mathbb A}(d{\mathbb A})^T r_{+}[{\mathbb A}^Td{\mathbb A}]
-{\mathbb A}(d{\mathbb A})^T r_{-}[{\mathbb A}(d{\mathbb A})^T]\bigr]\nn
\eea
where we did not specify the spaces as any choice leads to the same result (up to sign). For example choosing space $2$ we get:
\bea
S&=& \frac{1}{2}\onetwo{\Tr}\bigl[
(d\two{\mathbb A})^T\two{{\mathbb A}}\one {\mathbb A} \, r^T\, (d\one{\mathbb A})^T
-(d\two {\mathbb A})^T\, r^T\,  \two{\mathbb A}\one {\mathbb A}(d\one{\mathbb A})^T-\nn\\
&{}&-(d\one {\mathbb A})^T\two {\mathbb A}\, r^{T_1}\,\one{\mathbb A}(d\two{\mathbb A})^T+
(d \one{\mathbb A})^T\one{\mathbb A}\,r^{T_1}\,\two{\mathbb A}(d\two {\mathbb A})^T \bigr],
\label{symplecticR2}
\eea
so that
\bea\label{eq:final1}
\frac{\partial}{\partial{\rm d}\one{\mathbb A}}\wedge\frac{\partial}{\partial{\rm d}\two{\mathbb A}} S& = &
-\two {\mathbb A}\one{{\mathbb A}} \, r^T+ r^T\,  \one{\mathbb A}\two {\mathbb A}+
\one {\mathbb A}\, r^{T_1}\,\two{\mathbb A}-\two{\mathbb A}\,r^{T_1}\,\one{\mathbb A}=\\
&=&
-\left(\two {\mathbb A}\one{{\mathbb A}} \, r- r\,  \one{\mathbb A}\two {\mathbb A}-
\one {\mathbb A}\, r^{T_1}\,\two{\mathbb A}+\two{\mathbb A}\,r^{T_1}\,\one{\mathbb A}\right),\nn
\eea
as we wanted.

This is a rather interesting fact as it allows to pursue the same sort of approach based on the
construction of the morphism  groupoid  for any Poisson algebra which admits $R$-matrix formulation.
Indeed, we could define the $P_{\mathbb A}$ operator by (\ref{eq:par1}) or (\ref{eq:par2}), and
construct the Lie algebroid for any given $R$. This would allow us  to obtain the corresponding groupoid
in which to find the generators of the braid group. This work will be carried out in subsequent publications.
In particular it would be interesting to start from the chase of the exchange $R$--matrix
in which all central elements are known \cite{FM}, incidentally they are combinations of our rational
central elements  $b_d^{(I)}$ and their top right analogues.


\begin{thebibliography}{99}

\footnotesize\itemsep=0pt

\bibitem{Bondal}
A.~Bondal, {\it A symplectic groupoid of
triangular bilinear forms and the
braid groups}, preprint IHES/M/00/02 (Jan. 2000); {\sl Izv. Math.}, {\bf 68} (2004) 659--708.

\bibitem{Bon1}
A.~Bondal, {\it Symplectic groupoids related to Poisson--Lie groups},
{\it Tr. Mat. Inst. Steklova,}\/ , {\bf 246} (2004) 43--63.

\bibitem{CF}
A.S.~Cattaneo and G.~Felder,
Poisson sigma models and symplectic groupoids,
{\it Quantization of singular symplectic quotients,
Progr. Math.,}\/{\bf 198}, BirkhŠuser, Basel, (2001) 61--93.

\bibitem{ChM} L. Chekhov, M. Mazzocco, {\it Isomonodromic deformations and twisted Yangians arising in Teichm\"uller theory},
{\sl Advances Math.},\/{\bf 226(6)} (2011) 4731-4775, arXiv:0909.5350.

\bibitem{CF1}
M.~Crainic and R.~Fernandes,
Integrability of Lie brackets.
{\it Ann. of Math.}\/ {\bf 157} (2003), no. 2, 575--620.

\bibitem{dubrovin}
Dubrovin B., Geometry of $2$D topological field theories, Integrable systems and quantum groups (Montecatini Terme, 1993),
{\it Lecture Notes in Math.,}\/ {\bf 1620}, Springer, Berlin, (1996) 120--348.

\bibitem{FM}
V.~Fock and A.~Marshakov,
A note on quantum groups and relativistic Toda theory,
{\it Nucl. Phys. B.}\/ {\bf 56} (1997) 208--214.


\bibitem{FR}
Fock, V.~V. and Rosly, A.~A.,
Moduli space of flat connections as a Poisson manifold,
Advances in quantum field theory and statistical mechanics: 2nd Italian-Russian collaboration (Como, 1996),
{\it Internat. J. Modern Phys. B} {\bf 11} (1997), no. 26-27, 3195--3206.

\bibitem{hit}
Hitchin, N.,
Deformations of holomorphic Poisson manifolds,
{\it ArXiv:1105.4775} (2011).

\bibitem{kirill}
Mackenzie, Kirill,
General Theory of Lie Groupoids and Lie Algebroids,
{\it LMS Lect. Note Series}\/ {\bf 213} (2005).

\bibitem{Molev}
Molev A., Yangians and classical Lie algebras.
{\it Mathematical Surveys and Monographs,}\/ {\bf 143}, American Mathematical Society, Providence, RI, (2007).


\bibitem{MR} A. Molev, E. Ragoucy, {\it Symmetries and invariants of twisted quantum algebras and associated Poisson algebras},
{\sl Rev. Math. Phys.}, {\bf 20}(2) (2008) 173--198.

\bibitem{MRS} A. Molev, E. Ragoucy, P. Sorba, {\it Coideal subalgebras in quantum affine algebras}, {\sl Rev. Math. Phys.}, {\bf 15} (2003) 789--822.


\bibitem{NR} 
Nelson J.E., Regge T., Homotopy groups and $(2{+}1)$-dimensional
quantum gravity, {\it Nucl. Phys.~B} {\bf 328} (1989), 190--199.

\bibitem{NRZ}
Nelson J.E., Regge T., Zertuche F., Homotopy groups and
$(2+1)$-dimensional quantum de~Sitter gravity, {\it Nucl. Phys.~B} {\bf 339}
(1990), 516--532.

\bibitem{noumi}
M.~Noumi,  Macdonald's symmetric polynomials as zonal spherical functions on some quantum homogeneous spaces,
{\it Adv. Math.} {\bf 123} (1996), no. 1:16--77.


\bibitem{uga}
Ugaglia M., On a Poisson structure on the
space of Stokes matrices, {\it Int. Math. Res. Not.} {\bf 1999} (1999),  no.~9, 473--493.



\end{thebibliography}
\end{document}